\documentclass[runningheads]{llncs}
\usepackage[dvipsnames]{xcolor}
\usepackage[]{amssymb}
\usepackage{url}
\usepackage{graphicx}
\usepackage{subcaption}
\usepackage{enumitem}
\usepackage{mathtools}
\usepackage{xcolor}
\usepackage{fancybox}

\newcommand{\ee}{\tt{ee}}

\usepackage{tikz}
\usetikzlibrary{arrows.meta,shapes}

\tikzset{
     punt/.style={draw,circle,fill,inner sep=0pt,minimum size=1mm},
     event/.style={draw,fill,circle,inner sep=0pt,minimum size=1.5mm}
     }

\tikzset{
      time/.style={-Stealth},
      message/.style={-Latex,densely dashed},
      alphamap/.style={-Stealth},
      mexmap/.style={dashed,Latex-Latex,thick},
      thetamap/.style={dash pattern=on 2pt off 2pt,Latex-,thick}, 
      mindmap/.style={dotted,->},
      test/.style={-{Straight Barb[angle'=70]},thick}
      }
\tikzset{
    envelope/.pic={\draw (-2.5,0) rectangle (2.5,3) 
      (-2.5,3) -- (0,1) -- (2.5,3) 
      (-2.5,0) -- (-0.5,1.5) (0.5,1.5) -- (2.5,0) ; },      
     }

\titlerunning{Cross-organisational Process Mining from Message Logs}

\author{Pieter Kwantes, Jetty Kleijn }

\newcommand{\cf}            {cf. }
\newcommand{\ie}         	{i.e., }
\newcommand{\eg}			{e.g., }

\newcommand{\ctd}         	{cont.}

\newcommand{\Ev}			{{\tt EL}}

\newcommand{\Ln}			{E}
\newcommand{\LN}			{\mathcal{E}}

\newcommand{\Gn}			{I}
\newcommand{\GN}			{\mathcal{I}}

\newcommand{\V}			{V}

\newcommand{\DA}		{\text{\emph{DCA}}}

\newcommand{\DAx}		{\ensuremath{\mathcal{D\!A}}}

\newcommand{\pref}			{{\tt pref}}
\newcommand{\Pref}			{{\tt Pref}}

\newcommand{\dom}			{{\tt dom}}
\newcommand{\card}			{{\tt card}}

\newcommand{\pos}			{{\tt pos}}

\newcommand{\occ}			{{\tt occ}}
\newcommand{\tp}			{{\tt type}}

\newcommand{\dc}			{{\tt past}}

\newcommand{\proj}			{{\tt proj}}

\newcommand{\ab}				{{\tt alph}}
\newcommand{\Ab}				{{\tt Alph}}

\newcommand{\lin}				{{\tt lin}}

\newcommand{\fifo}			{{\tt fifo}}

\newcommand{\Att}			{{\tt Att}}

\newcommand{\out}			{{\tt out}}
\newcommand{\inp}			{{\tt inp}}

\newcommand{\lab}			{{\tt label}}

\newcommand{\Ka}			    {\ensuremath{\mathbb{K}}}
\newcommand{\K}			    {\ensuremath{\Lambda}}

\newcommand{\Cs}			{{\tt case}}
\newcommand{\msg}			{{\tt msg}}

\newcommand{\m}		{{\tt m}}

\newcommand{\I}			    {\ensuremath{\Upsilon}}

\newcommand{\C}			    {\ensuremath{\mathcal{C}}}

\newcommand{\x}			{{\tt x}}
\newcommand{\y}			{{\tt y}}

\newcommand{\PR}			    {\ensuremath{\mathcal{P}}}

\newcommand{\proc}			{{\tt proc}}

\newcommand{\tm}			{{\tt time}}

\newcommand{\Ch}		{\ensuremath{{\tt sr}}}

\newcommand{\wdspo}			{\ensuremath{{\tt seq}}}
\newcommand{\po}			{\ensuremath{{\tt po}}}

\newcommand{\lo}			{\ensuremath{{\tt lo}}}

\newcommand{\lex}			{\ensuremath{{\tt linext}}}

\newcommand{\llim}			{\ensuremath{{\tt lim}}}
\newcommand{\cp}			{\ensuremath{{\tt mex}}}
\newcommand{\Id}			{\ensuremath{{\tt Id}}}

\definecolor{light-gray}{gray}{0.95}
\newcommand{\LK}   {\ensuremath{\boldsymbol{L}}}

\newcommand{\Inv}         	{{\tt IF}}
\newcommand{\Cus}         	{{\tt CN}}

 \spnewtheorem*{definition4}{Definition~4}{\bfseries}{\itshape}
  \spnewtheorem*{definition6}{Definition~6}{\bfseries}{\itshape}
 \spnewtheorem*{definition8}{Definition~8}{\bfseries}{\itshape}
  \spnewtheorem*{definition9}{Definition~9}{\bfseries}{\itshape}
 \spnewtheorem*{definition10}{Definition~10}{\bfseries}{\itshape}
 \spnewtheorem*{definition11}{Definition~11}{\bfseries}{\itshape}
  \spnewtheorem*{definition12}{Definition~12}{\bfseries}{\itshape}
    \spnewtheorem*{definition13}{Definition~13}{\bfseries}{\itshape}
 \spnewtheorem*{theorem3}{Theorem~3}{\bfseries}{\itshape}
 \spnewtheorem*{theorem99}{Theorem~13*}{\bfseries}{\itshape}
  \spnewtheorem*{corollary4}{Corollary~4}{\bfseries}{\itshape}

\begin{document}

\title{Cross-organisational Process Mining 
\\
from Message Logs
}
\author{
Pieter Kwantes
\and
Jetty Kleijn
}
\institute{
              LIACS,
              Leiden University,
              P.O.Box 9512\\
              NL-2300 RA Leiden, The Netherlands\\
\email{\{p.m.kwantes,h.c.m.kleijn\}@liacs.leidenuniv.nl}
}

\maketitle

\begin{abstract} 
Enterprise nets and Industry nets are specific Petri net models that together form a framework for the modelling of collaborating business processes and the specification of interactions between them. 
In this set-up, individual business process are modelled as Enterprise nets, while Industry nets consist of Enterprise nets that interact by exchanging messages through unidirectional channels. 
In~\cite{DBLP:conf/apn/KwantesK20,DBLP:journals/topnoc/KwantesK22}, it was 
established how to leverage algorithms for the discovery of single processes (modelled as Enterprise nets) from event logs (consisting of finite sequences of executed actions), 
to synthesise systems consisting of asynchronously communicating  processes (modelled as Industry nets). 
However, in general, it may be difficult to work with event logs from different organisations or even get access to them. 
Recordings of message exchanges between processes may be easier available. 
Therefore, we focus in this paper on the communications between different  processes and turn to the investigation of message logs, \ie collections of recordings of communications in the form of message exchanges (output and input events) between two processes.
In particular, the question is addressed of how a message log can be transformed into a description of the communication behaviour in the form of an event log such that the causal order on message events is preserved. 
Following the approach from~\cite{DBLP:conf/apn/KwantesK20,DBLP:journals/topnoc/KwantesK22}, this event log can then be used to synthesis an Industry net.\footnote{This paper is the full version of an extended abstract with the same title, presented at the third International workshop on Collaboration Mining for Distributed Systems (COMINDS 2024) held in conjunction with ICPM 2024 in Copenhagen.}

\medskip
\small{\textbf{Keywords:}
Enterprise net; Industry net; distributed process mining; event log; message log; causal partial order}
\end{abstract}

%%%%%%%%%%%%%%%%%%%%%%%%%%%%%%%%%%%%%%%%%%%%%%%%%%%%%
%%%%%%%%%%%%%%%%%%%%%%%%%%%%%%%%%%%%%%%%%%%%%%%%%%%%%

\section{Introduction}
\label{Sect-Intro}

%%%%%%%%%%%%%%%%%%%%%%%%%%%%%%%%%%%%%%%%%%%%%%%%%%%%%

Process mining is concerned with the algorithmic discovery of models of business processes from event data, to check their conformance, analyse bottlenecks, and suggest improvements, (see, \eg\cite{DBLP:books/sp/Aalst16,DBLP:books/sp/22/PMH2022,DBLP:journals/tkde/AugustoCDRMMMS19,DBLP:conf/birthday/2026aalst}). 
So far, the focus of process mining research has been on discovering \emph{local} business processes, \ie residing within the boundaries of a single organisation or enterprise.
The interaction between business processes from different organisations has received less attention (see, \eg\cite{DBLP:journals/ejis/MendlingPR20,Rott2024,Thiede2018}).
Nevertheless, in many cases, organisations collaborate with other organisations, thereby forming \emph{cross-organisational processes}, \ie conglomerates of interacting local business processes. 
The behaviour of a cross-organisational process is a combination of the local behaviours of its individual business processes subject to their interactions. 
On the other hand, the behaviour of a local business process within a cross-organisational process will in general be affected by interactions with the other local business processes. 
The extension of established process mining methods and techniques from local processes of individual organisations to \emph{global} (at the level of the collaborating organisations) processes appears in the research agenda in the Process Mining Manifesto~\cite{DBLP:conf/bpm/AalstAM11} as an open challenge. 

Petri nets are a well-established formal framework for the modelling of a wide range of distributed systems 
(see, \eg\cite{DBLP:journals/topnoc/2013-7,AdvCourse2023,DBLP:conf/ac/1996petri1,DBLP:conf/ac/1996petri2}). 
Also the modelling of business processes is traditionally an important application domain of Petri nets, see~\cite{DBLP:journals/jcsc/Aalst98,DBLP:books/daglib/0032051,DBLP:conf/birthday/2026aalst}. 
A specific class of Petri nets that supports a formal algorithmic analysis of business processes and that is often used in practice are the \emph{workflow nets}, 
originally introduced in~\cite{DBLP:conf/apn/Aalst97}. 

In~\cite{DBLP:conf/apn/KwantesK18}, a framework was proposed to specify the composition of a comprehensive global process model (an \emph{Industry net}) from a set of local process models (\emph{Enterprise nets}).  
Enterprise nets are Petri nets with typed input and output transitions, and internal transitions. 
They can be considered as a generalisation of workflow nets.
Industry nets consist of separate Enterprise nets that interact by exchanging messages through unidirectional channels (in the form of additional places connecting a single output transition of one Enterprise net with a single input transition with the same type, of another Enterprise net). 
A comprehensive overview with a discussion of this and other models can be found in~\cite{DBLP:conf/icpm/BenzinR23}. 
As argued in~\cite{DBLP:conf/apn/KwantesK18}, the set-up proposed there ensures that global compliance of an Industry net with a reference model can be verified by local checks (of Enterprise nets) only. 
As stated in~\cite{DBLP:conf/bpm/AalstAM11}, it may be difficult to work with local information from different organisations as these are often reluctant to disclose their logs to the public domain. 
Furthermore, even if local information is available, there is the problem of so-called event correlation (see also~\cite{DBLP:conf/bpm/AalstAM11}) to determine how local sequences of actions combine into global behaviour.  

Next, in the workshop paper \cite{DBLP:conf/apn/KwantesK20} and its extended version~\cite{DBLP:journals/topnoc/KwantesK22}, focus is on synthesising Industry nets from the observed global (combined) behaviour of collaborating Enterprise nets. 
More specifically, it is investigated to what extent existing algorithms for the discovery of local process models (Enterprise nets) from \emph{event logs}, \ie sets of finite sequences of events (occurrences of actions) representing the behaviour of a process, can be lifted to the synthesis of systems consisting of asynchronously communicating local processes (modelled as Industry nets), from global event logs.
This requires a description of the distribution of actions over the components together with their role (internal, input, or output) in the form of a so-called \emph{distributed communicating alphabet}. 

Whereas the information in local event logs is confined to single enterprises, communication involves multiple enterprises. 
Therefore, to address the problems mentioned earlier, we focus in this paper on the communications between local processes and turn to the investigation of \emph{message logs}, collections of recordings of communications in the form of message exchanges between two processes, typically of typed messages over bilateral unidirectional channels.
The \emph{type} of a message represents its format (\eg prescribed by the prevailing industry standard\footnote{Examples include the ISO15022 and ISO20022 industry standard for financial services, supported by SWIFT, RosettaNet for the electronics industry, and EDSN for the Dutch energy market, to name just a few.} for groups of messages with similar processing characteristics). 
The subject of a message is referred to as the \emph{case} of the message (\eg an order number to identify a specific order). 
A single execution (or run) of a cross-organisational process comprises all events executed to complete a particular case. 
Recordings of such message exchanges appear as byproducts of day-to-day operations of message service providers (see, \eg\cite{DBLP:journals/isem/EngelKZPBAWH16} for case studies exploring the use of recorded messages for process mining). 
Here, message exchanges will be defined in the form of two \emph{message events}, an output event (representing the sending of the message by a local process) and an input event (representing the receipt of that message by a second local process). 

In this paper, we address the question of how message logs can be transformed into global event logs in such a way that essential properties of the behaviour of the global system as captured by the message log, are preserved. 
Following the approach from~\cite{DBLP:conf/apn/KwantesK20,DBLP:journals/topnoc/KwantesK22}, 
message logs thus transformed, can then be used for the discovery of an Industry net.

The paper is organised as follows. 
Section~\ref{Sect-Prelim} provides notation and notions used throughout this paper, including concepts relating to sequences (words), sets of sequences (languages), and partial orders.
Special attention is given to the sequentialisation of partial orders. 
We argue how infinite partial orders can be represented by sets of finite sequences of actions.
These observations are crucial in the following sections. 
In Section~\ref{Sect-MessLog}, message events and message logs are introduced. 
A message log is a (possibly infinite) set of message events together with their \emph{attributes}, \ie parameters giving the case to which an event belongs and its type, as well as the \emph{process} executing the event and a local \emph{time}.  
More precisely, each process has a time function which assigns time stamps to its message events in such a way that no two different events executed by the same process are assigned the same local time.
Hence, the events of a local process are always totally ordered by their time stamps.
This total order is assumed to be consistent with the causal dependencies between message events in the sense that whenever a message event $e$ has an earlier time stamp than another message event $e'$ executed by the same local process, then $e'$ does not causally precede $e$ (\ie it is not necessary to execute $e'$ before $e$).
Next, we discuss in Section~\ref{Sect-MLandPO}, a natural precedence relation on message events belonging to the same case, and how and when this defines a partial order.  
As discussed in Section~\ref{Sect-Prelim}, this (possibly infinite) partial order can then be used to define a set of finite sequences of message events, that together describe one of the runs of the system captured by the message log. 
In Section~\ref{Sect-DCA}, the aim is to associate an event log and a distributed communicating alphabet with a given message log. 
This entails an interpretation of output events and input events as occurrences of output actions and input actions, respectively, leading to a labelling of message events by action names.
We investigate how to consistently identify message events in a message log that are to be labelled with the same action name. 
This leads to the concept of an \emph{action function} that can be used to translate the set of finite sequences of message events defined by a message log into a global event log consisting of sequences of labelled message events. 
Moreover, action functions can be used to define a suitable distributed communicating alphabet which on the one hand describes the distribution of output and input actions over components, and on the other hand how they are paired up into unidirectional bilateral channels used to exchange the messages recorded in the message log. 
Clearly, as the global event logs thus derived are defined by identifying message events, they are more abstract than the original message logs. 
So, the main question addressed in Section~\ref{Sect-ML} is whether the causalities in a message log as expressed by the partial order derived from the precedence relation, are faithfully preserved in the global event log defined by an action function.   
To settle this question we turn to the assignment functions introduced in~\cite{DBLP:journals/topnoc/KwantesK22}, to relate occurrences of input and output actions in the labelled sequences constituting a global event log. 
It is demonstrated how the causal partial orders underlying the sequences of the message log coincide with the partial orders on the occurrences of actions defined by message preserving assignment functions.  
Moreover, all sequences defined by the partial orders of message preserving assignments are in the global event log. 
Thus, given a message log and an action function, a distributed alphabet and a global event log can be derived that faithfully represents the causalities in the message log. 
We also discuss assignment functions that are oblivous to the identities of the message exchanged and model a first-in-first-out channel policy based on output and input actions.  
In Section~\ref{Sect-Concl}, we combine all technical results obtained and apply the  procedure from~\cite{DBLP:journals/topnoc/KwantesK22} to synthesise on the basis of a given process discovery algorithm an Industry net that generates at least all sequences from the event log. 
Finally, Section~\ref{Sect-Disc} reflects on the research presented.
In the Appendix, the proof of Lemma~\ref{Lem-seqlimpref} is given. 
This auxiliary result is crucial for the representation of infinte partial orders by finite sequences. 
The Appendix also provides the definitions for Petri nets, Enterprise nets and Industry nets, as well as definitions of key concepts and results from~\cite{DBLP:journals/topnoc/KwantesK22}.

%%%%%%%%%%%%%%%%%%%%%%%%%%

\subsection*{Related work}
\label{Sect-RelWork}

%%%%%%%%%%%%%%%%%%%%%%%%%%

Using the messages exchanged between collaborating enterprises as input for process mining and process discovery, has been proposed before. 
In particular, the use of messages exchanged between web services (from different enterprises) has been described, \eg in \cite{DBLP:journals/debu/AalstV08,AalstCCSB2008,DBLP:conf/caise/EngelAZPW12,DBLP:conf/ecweb/EngelKZPAW11,DBLP:journals/isem/EngelKZPBAWH16}. 
In \cite{DBLP:journals/debu/AalstV08}, information from the so-called web services stack is used to obtain event logs.  
According to~\cite{AalstCCSB2008}, global message logs comprising the messages from multiple parties involved in a collaboration of web services, can be reconstructed with middleware technology and translated into an event log including a case id, an activity name and a time stamp.
It is then shown how this event log can be used to check conformance of the event log with a Petri net model of the business process, derived from a BPEL specification.
Several case studies (\cite{DBLP:conf/caise/EngelAZPW12,DBLP:conf/ecweb/EngelKZPAW11,DBLP:journals/isem/EngelKZPBAWH16}) describe how a message log, gathered at a particular enterprise, can be used to derive an event log and subsequently, using existing process discovering algorithms, a process model. 
For example,~\cite{DBLP:journals/isem/EngelKZPBAWH16} proposes an approach for cross-organisational process mining, based on the use of EDI-messages, which is implemented in the EDI miner toolset~\cite{DBLP:conf/caise/EngelBPZW13}. 

Based on a review of 66 articles, the paper~\cite{Rott2024} provides an  overview of current research. 
It concludes that research into the field has been dispersed and gives an agenda for future research with a focus on socio-technical aspects is proposed. 
In~\cite{DBLP:conf/icpm/BenzinR23,DBLP:conf/icpm/BenzinR24} a survey of different approaches to modelling cross-organisational processes is provided. 
This leads to the conclusion that a standard modelling framework is lacking and to an investigation of a more standardised approach based on Petri nets.

%%%%%%%%%%%%%%%%%%%%%%%%%%%%%%%%%%%%%%%%%%%%%%%%%%%%%
%%%%%%%%%%%%%%%%%%%%%%%%%%%%%%%%%%%%%%%%%%%%%%%%%%%%%

\section{Preliminaries}
\label{Sect-Prelim}

%%%%%%%%%%%%%%%%%%%%%%%%%%%%%%%%%%%%%%%%%%%%%%%%%%%%%

%%%%%%%%%%%%%%%%%%%%%%%%%%%%%

\subsection*{General notions}

%%%%%%%%%%%%%%%%%%%%%%%%%%%%%

By $\mathbb{N}$ we denote the set of natural numbers $\{0,1,2, \ldots \}$. 
For $n \in \mathbb{N}$, $[n]=\{1,2, \ldots , n\}$. 
Hence, $[0]=\emptyset$. 

Let $A$ be a set.
If $A$ is a finite set, then $\card(A)$ denotes its cardinality. 
Let ${\mathbb A} = \{A_i \mid A_i \subseteq A, i \in {\mathcal I}\}$ with ${\mathcal I}$ an index set, be such that the $A_i$ are pairwise disjoint, non-empty sets, and $A = \bigcup\{A_i \mid i \in {\mathcal I}\}$. 
Then ${\mathbb A}$ is a \emph{partition} of $A$.

Functions are total unless explicitly stated otherwise. 
For a function $f: A \rightarrow B$ and $C \subseteq A$, we define $f(C)=\{f(c) \mid c \in C \}$ and the \emph{restriction} of $f$ to $C$ is the function $f|_C: C \rightarrow B$, defined by $f|_C(c)=f(c)$ for all $c \in C$. 
A function $f: A \rightarrow A$ 
is a \emph{complement function} if 
$A = B \cup C$ with $B$ and $C$ disjoint, 
$f(B) \subseteq C$, $f(C) \subseteq B$, and $f(f(a)) = a$ for all $a \in A$. 
Note that such $f$ is injective, because whenever $f(a) = f(a')$ for some $a,a' \in A$, then $a = f(f(a)) = f(f(a')) = a'$. 
Moreover, $f(B) = C$, since $c = f(f(c))$ for all $c \in C$. 
Similarly, $f(C) = B$.
Thus, $f$ is a bijection.

%%%%%%%%%%%%%%%%%%%%%%%%%%%%%%%%%

\subsection*{Words and Languages}

%%%%%%%%%%%%%%%%%%%%%%%%%%%%%%%%%

Throughout the paper, we will be dealing with sequences of events, actions, labels, etc., represented as symbols. 
Sets of symbols may be empty, finite, or infinite. 
An \emph{alphabet} is a set of symbols which is finite and non-empty, unless explicitly specified otherwise. 

Let $\Sigma$ be a set of symbols. 
A \emph{word} (over $\Sigma$) is a possibly infinite sequence of symbols (from $\Sigma$).
As usual, $\Sigma^*$ is the set of all finite words over $\Sigma$, $\Sigma^{\omega}$ the set of all infinite words (or $\omega$-words) over $\Sigma$, and $\Sigma^{\infty}=\Sigma^* \cup \Sigma^\omega$ the set of all words over $\Sigma$.
Any subset $L$ of $\Sigma^{\infty}$ is a \emph{language} (over $\Sigma$). 
If $L \subseteq \Sigma^*$, then $L$ is \emph{finitary} and if $L \subseteq \Sigma^\omega$, then it is an \emph{infinitary} language.
For a finite word $w=a_1 \cdots a_n$ with $n \ge 0$ and $a_i \in \Sigma$ for all $i \in [n]$, we refer to $n$ as the \emph{length} of $w$, denoted by $|w|$. 
If $n = 0$, then $w = \lambda$, the \emph{empty word} (over $\Sigma$). 
It is often convenient to consider words as functions. 
Thus, a finite word $w=a_1 \cdots a_n$ with
$a_i\in\Sigma$ for all $i \in [n]$, is defined by 
$w: [n] \rightarrow \Sigma$ with $w(i)=a_i$ for all $i \in [n]$;  
and an infinite word $v=a_1a_2 \cdots$ is $v: \mathbb{N} \rightarrow \Sigma$ with $v(i)=a_i$ for all $i \in \mathbb{N}$. 
The domain of a word $u\in\Sigma^\infty$, denoted $\dom(u)$, is the set of integers on which $u$ is defined.
Hence we have $\dom(u) = [n]$ if $|u| = n$ and $\dom(u) = {\mathbb N}$ if $u$ is infinite.
The \emph{alphabet} of $u$, denoted by $\ab(u)$, is the set of symbols that occur in $u$. %
Thus $\ab(u)= \{a \in \Sigma \mid \exists i \in \dom(u): u(i)=a\}$ and, in particular, $\ab(\lambda)=\emptyset$. 
Moreover, the alphabet of $u$ is infinite whenever $u$ is an infinite word in which an infinite number of different symbols occur (hence in this case $\Sigma$ must be an infinite set of symbols). 
For a language $L$, 
$\Ab(L) = \{a \in \Sigma \mid \exists w \in L: a \in \ab(w)\}$ 

is the set of all symbols that occur in some word of $L$. 
\\
Let $w\in \Sigma^*$ and $a \in \Sigma$. 
Then $\#_a(w)= \card(\{i \mid w(i)=a\})$ is the number of occurrences of $a$ in $w$ 
and $\occ(w)=\{(a,i) \mid a \in \ab(w) \wedge 1 \le i \le \#_a(w) \}$ is the set of all \emph{occurrences} of symbols from $\Sigma$ in $w$. 
The function $\pos_w : \occ(w) \rightarrow [|w|]$ gives for each $(a,i) \in \occ(w)$, its position in $w$: 
$\pos_w((a,i))=k$ if $w(k)=a$ and $\#_a (w(1) \cdots w(k))=i$. 
\\
Any function $f: \Sigma \rightarrow \Delta$ where $\Delta$ is a set of symbols can be extended to a function $f^\infty: \Sigma^\infty \rightarrow \Delta^\infty$ by setting $f^\infty (\lambda) = \lambda$ and $(f^\infty (w))(i) = f(w(i))$ for all $w \in \Sigma^\infty$ and $i\in\dom(w)$. 
Omitting the superscript $\infty$, we usually write $f$ instead of $f^\infty$. 
\\
Let now $\Delta \subseteq \Sigma$. 
Then the  \emph{projection} of $\Sigma$ on $\Delta$ is the function 
$\proj_{\Sigma,\Delta}: \Sigma^* \rightarrow \Delta^*$,
defined by
$\proj_{\Sigma,\Delta}(a)=a$ if $a \in \Delta$; 
$\proj_{\Sigma,\Delta}(a)=\lambda$ if $a \in (\Sigma\setminus\Delta) \cup \{\lambda\}$; 
and 
for all $a \in \Sigma$ and $w \in \Sigma^*$, $\proj_{\Sigma,\Delta}(aw)=\proj_{\Sigma,\Delta}(a)\proj_{\Sigma,\Delta}(w)$. 
If $\Sigma$  is clear from the context, we may omit the reference to $\Sigma$ and write $\proj_{\Delta}$. 

The \emph{(left-)concatenation} of $u \in \Sigma^*$ and $v \in \Sigma^\infty$, denoted as $uv$, is the word $w \in \Sigma^\infty$ defined by $w(i)=u(i)$ for all $i \in \dom(u)$ and  $w(|u|+ i)=v(i)$ for all $i \in \dom(v)$. 
Let $w \in \Sigma^\infty$. 
If $u \in \Sigma^*$, is such that $w = uv$ for some $v \in \Sigma^\infty$, then $u$ is a \emph{prefix} of $w$. 
For $n\in \dom(w)$, we denote by $w[n]$ the prefix of $w$ of length $n$. 
The set of all prefixes of $w$ is denoted by $\pref(w)$. 
Note that we only deal with finite prefixes and so $\pref(w) \subseteq \Sigma^*$.  
For a language $L \subseteq \Sigma^\infty$, its set of prefixes is $\Pref(L) = \{u \in \Sigma^* \mid \exists w \in L: u \in \pref(w)\}$. 
We say that $L$ is \emph{prefix-closed} if $\Pref(L) \subseteq L$. 
It should be noted here that, $L \subseteq \Pref(L)$ only holds for finitary languages. 
On the other hand, any infinite word can be viewed as the limit of its prefixes. 
More generally, let $L \subseteq \Sigma^*$. 
Then the \emph{limit} of $L$ 
(see, \eg \cite{DBLP:journals/tcs/EngelfrietH93,DBLP:reference/hfl/Staiger97}) 
is the language $\llim(L)=\{w \in \Sigma^{\omega} \mid w[i] \in L \mbox{ for infinitely many } i \in \mathbb{N} \}$. 
Clearly, $\llim(\pref(w)) = \{w\}$ for every infinite word $w$. 
As a consequence, $L \subseteq \llim(\Pref(L))$ holds for every infinitary language $L$. 
Such inclusion may be strict, as there are infinitary languages $L$ such that 
$\llim(\Pref(L))\not\subseteq L$.  

\begin{example}\label{Ex-lim=more}
    Consider $L = \{a^kba^\omega \mid k \in {\mathbb N}\}$. 
    Then $\Pref(L) = \{a^kba^\ell \mid k,\ell \in {\mathbb N}\}\cup \{a\}^*$. 
    And so, $\llim(\Pref(L)) = L \cup \{a^\omega\}$.
\qed
\end{example}

Finally, we recall a property of words (and languages) from~\cite{DBLP:journals/topnoc/KwantesK22}, introduced there to formalise how in an action sequence, every occurrence of an input action is preceded by a corresponding occurrence of an output action.
Let $\Sigma_o,\Sigma_\iota \subseteq \Sigma$ 
with $\Sigma_o \cap \Sigma_\iota = \emptyset$ and let 
$f: \Sigma_o \rightarrow \Sigma_\iota$ be a bijection. 
Let $w \in \Sigma^\infty$. Then $w$ has the \emph{prefix property with respect to $f$} if, 
for all $v\in\pref(w)$ and for all $a \in \Sigma_o$, 
$\#_{a} (v) \geq \#_{f(a)} (v)$.
A language $L \subseteq \Sigma^\infty$ is said to have the prefix property with respect to $f$ if all $w \in L$ have the prefix property with respect to $f$.

%%%%%%%%%%%%%%%%%%%%%%%%%%%%%%%%%%%%%%%%%%%%%

\subsection*{Partial orders and total orders} 

%%%%%%%%%%%%%%%%%%%%%%%%%%%%%%%%%%%%%%%%%%%%

Let $A$ be a set. 
Let $R \subseteq A \times A$ be a binary relation on $A$.
The \emph{identity relation} on $A$ is $\Id_A = \{(a,a) \mid a \in A \}$. 
In what follows, we will mostly use the infix notation for binary relations. 
Thus we write $a\, R \, b$ for $(a,b) \in R$. 
Relation $R$ is \emph{reflexive} if $\Id_A \subseteq R$, \ie if $a\,R\,a$ for all $a \in A$; 
it is \emph{irreflexive} if there is no $a \in A$ such that $a\,R\,a$; 
it is \emph{transitive} if whenever $a\,R\,b$ and $b\,R\,c$, then also $a\,R\,c$; 
it is \emph{antisymmetric} if whenever both $a\,R\,b$ and $b\,R\,a$, then $a = b$. 
The \emph{transitive closure} $R^+$ of $R$ is defined by $R^+ = \{(a,b) \mid \exists n \geq 1,  \text{ } a_0, a_1, \ldots , a_n \in A \text{ such that } a_0 = a, a_n = b \text{, and  $\forall i\in [n]$:  } a_{i-1} \, R \, a_i \}$. 
The \emph{reflexive, transitive closure} of $R$ is $R^* = R^+ \cup \Id_A$. 
$R$ is said to be \emph{acyclic} if $R^+ \setminus \Id_A$ is irreflexive, \ie there 
%are no $a,b \in A$ such that $a \neq b$ and $(a,b),(b,a) \in R^+$. 
are no $a, a_1, \ldots , a_n = a \in A$, $n \geq 1$ with $(a,a_1), \ldots , (a_{n-1},a) \in  R$.  
\\
If $R$ is reflexive, transitive, and antisymmetric, it is a \emph{partial order} on $A$. 
If $R$ is a partial order such that $x R y$ or $y R x$, for all $x,y \in A$, then $R$ is a \emph{linear order} (or \emph{total order}). 
A \emph{partially ordered set}, or \emph{poset}, is a pair $(A,\leq_A)$, where $A$ is a set and $\leq_A$ is a partial order on $A$. 
If $\leq_A$ is a linear order, we refer to $(A,\leq_A)$ as a linearly ordered set. 

Let $\po=(A,\leq_A)$ be a poset. 
Poset $(B,\leq_B)$ is said to be a \emph{sub-poset} of $\po$ if $B \subseteq A$ and $\leq_B \;\subseteq\; \leq_A$. 
If $B \subseteq A$ and $\leq_B\; =\;\leq_A \; \cap \; (B \times B)$, then we refer to $(B,\leq_B)$ as the sub-poset \emph{induced} by $B$. 
A \emph{linear extension} of the poset $\po=(A,\leq_A)$ is a poset $\lo=(A,\leq')$ such that $\leq'$ is a linear order on $A$ and $\leq_A \;\subseteq\; \leq'$. 
In such case, we also loosely call $\leq'$ a linear extension of $\leq$. 
Note that every poset is a sub-poset of each of its linear extensions. 
We use $\lex(\po)$ to denote the set of all linear extensions of $\po$.  
The linear extensions of a poset are maximal in the sense that adding a new relation between two distinct elements leads to a cycle (and so the result is no longer a partial order). 
By Szpilrajn's theorem~\cite{Szpilrajn} every poset has a linear extension. 
Moreover, as a consequence, every partial order is the intersection of its linear extensions. 
\\
If every non-empty subset $B$ of $A$ has a minimal element with respect to $\leq_A$ (\ie there exists an element $b \in B$ such that for all $a \in B$ whenever $a \leq_A b$ then $a = b$), then both the partial order $\leq_A$ and the poset $(A,\leq_A)$ are said to be \emph{well-founded}\footnote{An equivalent definition is: $\leq_A$ is well-founded iff there is no infinite sequence $a_1, a_2, \ldots $ of distinct $a_i \in A$, such that $a_{i+1} \leq_A a_i$ for all $i \in \mathbb{N}$. 
In other words, in a well-founded poset, every element has only finitely many elements that precede it.}. 
Clearly, if $A$ is finite, then every partial order on $A$ is well-founded. 
In particular, the trivial poset $(\emptyset,\emptyset)$ is well-founded. 
It is not difficult to see that if a partial order has a well-founded linear extension, then it is well-founded. 
Conversely, 
if $\po=(A,\leq_A)$ is a well-founded poset, then $\po$ has a well-founded linear extension, see, \eg\cite{Downey-etal}. 
In particular, if $A$ is a countably infinite set\footnote{As it turns out, the partial orders that we consider in the body of the paper are all on finite or countably infinite sets.}
and $\po=(A,\leq_A)$ is well-founded, then one can prove\footnote{See, \eg https://math.stackexchange.com/questions/4194162/which-partial-orders-can-be-extended-to-a-copy-of-omega for a proof sketch,  as outlined here: 
\\
Given an enumeration $a_0, a_1, a_2, a_3, \ldots $ of $A$, recursively define a sequence $b_0, b_1, b_2\ldots$ in the following way.
Assuming that $b_0, \ldots , b_{k-1}$ have been defined for some $k \geq 0$, let $i$ be the least integer such that $a_i \not\in \{b_0, \ldots b_{k-1}\}$ and $\{a \mid a \leq_A a_i\} \subseteq \{b_0, \ldots b_{k-1}\}$.
Then $b_k = a _i$. 
Now define $\preceq_A$ by $a \preceq a'$ with $a \neq a'$ if $a = b_i$, $a' = b_j$ for some $i < j$. 
This $\preceq_A$ can be proved to be a linear extension of $\leq_A$.} 
that $\po$ has a well-founded linear extension on basis of an enumeration of $A$.
Be aware though, that even a well-founded partial order on a countably infinite set, may have a linear extension that is not well-founded. 

\begin{example}  
Consider $\po = (A,\Id_A)$ with $A = \{e_0, e_1, e_2, \ldots \}$ a countably infinite set. 
Then $\po$ is well-founded and $(A,\leq_A)$ with $\leq_A = \Id_A \cup \{(e_i,e_j) \mid i, j \in {\mathbb N} \text{ and } i < j \}$ is a well-founded linear extension of $\Id_A$. 
However, $(A,\preceq_A)$ with $\preceq_A \; = \; \Id_A \cup (\{(e_{2i},e_{2i-2})\mid i\geq 1\} \cup \{(e_0,e_1)\} \cup  \{(e_{2i+1},e_{2i+3}\mid i\geq 0\})^+$ is a linear extension of $\Id_A$ that is not well-founded.
\qed  
\end{example}

Let $\lo = (A,\leq_A)$ be a well-founded, linearly ordered set with $A \neq \emptyset$. 
Let $\lessdot_A$ be the direct predecessor relation of $\lo$:  
$a \lessdot_A b$ holds if $a \neq b$, $a \leq_A b$, and there is no $c \in A$ such that $a \neq c \neq b$ and $a \leq_A c \leq_A b$. 
Then $\lo$ defines the unique sequence $\sigma_{\lo} = a_1 a_2 \cdots $ such that $\{a_i \mid i \in \dom(\sigma_\lo) \} \subseteq A$ and  
$A = \{a_i \mid i \in \dom(\sigma_\lo) \}$ in case $\sigma_{\lo}$ is finite, 
and, for all $i,i+1 \in \dom(\sigma_\lo)$, $a_i\lessdot_A a_{i+1}$. 
Every $a \in A$ occurs at most once in $\sigma_{\lo}$, because $\leq_A$ as a partial order is acyclic. 
Since $\lo$ is well-founded and a linear order, it has a least element $a$ with $a \leq_A b$ for all $b \in A$. 
Clearly, this least element is $a_1$, the first element of $\sigma_\lo$. 

The \emph{language}, \ie the set of \emph{sequences defined by} a well-founded, partial order $\po$ is $\wdspo(\po) = \{\sigma_{\lo} \mid \lo \in \lex(\po) \mbox{ and }\lo \mbox{ is well-founded}\}$. 
We note that every well-founded poset defines at least one sequence. 

\begin{example}
Let $\po = (A,\leq_A)$ be a poset with $A = \{e_0, e_1, e_2, \ldots \} \cup \{d\}$ and $\leq_A = \{(d,d)\} \cup \{(e_i,e_j)\mid i, j \in {\mathbb N} \text{ and } i \leq j\}$. 
Then $\po$ is well-founded with minimal elements $e_0$ and $d$. 
\\
We set $\preceq^0_A \; = \; \leq_A \cup \; \{(d,e_i) \mid i \geq 0\}$, which is a well-founded linear extension of $\leq_A$ with minimal element $d$. 
Next, define, for all $k\in \mathbb{N}$, $k \geq 1$, the well-founded linear order $\preceq^k_A \; = \; \leq_A \cup \; \{(e_i,d) \mid 0\leq i < k \} \; \cup \; \{(d,e_i) \mid i \geq k\}$.
Each of these is a well-founded linear extension of $\leq_A$ with minimal element $e_0$. 
Finally, $\preceq_A \; = \; \leq_A \cup \; \{(e_i,d) \mid i \in {\mathbb N}\}$, which again is a well-founded linear extension of $\leq_A$. 
With $\lo = (A,\preceq_A)$, we have $\sigma_{\preceq_A} = e_0e_1 e_2 \cdots$, an infinite sequence with no occurrence of $d$.
Since $\po$ has no other (well-founded) linear extensions than the ones discussed above, $\wdspo(\po) = 
\{de_0e_1 e_2 \cdots \} \cup 
\{e_0e_1 e_2 \cdots e_k d e_{k+1} e_{k+2} \cdots \mid k \geq 0\} \cup 
\{e_0e_1 e_2 \cdots \}$ is the set of sequences defined by $\po$. 
\qed
\end{example}

Let $\po = (A,\leq_A)$ be a poset. 
A subset $B \subseteq A$ is said to be \emph{downward-closed} in $\po$ if for all $a \in A$ whenever $a \leq_A b$ for some $b \in B$, also $a \in B$. 
The \emph{downward closure} of $b \in A$ in $\po$ is the downward-closed set $\dc_\po(b)=\{a\in A \mid a \leq_A b \}$ consisting of all elements in $A$ that precede $b$ (including $b$ itself). 
It is not difficult to see that the prefixes of sequences defined by the linear extensions of a well-founded, partial order are consistent with the downward closure of their elements. 
Formally: 

\begin{lemma}\label{rem-pastlo}
    Let $\po = (A,\leq_A)$ be a partial order and $\lo$ a well-founded, linear extension of $\po$. 
    If $a_1 \cdots a_i \in \pref(\sigma_\lo)$ where $a_1,\ldots ,a_i \in A$ for some $i \geq 1$, then $\dc_\po(a_i) \subseteq \{a_1,\ldots ,a_i\}$.
\qed
\end{lemma}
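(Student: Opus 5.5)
The plan is to show that every element $a \leq_A a_i$ appears among $a_1,\ldots,a_i$. The key intermediate claim is that the prefix $a_1 \cdots a_i$ of $\sigma_\lo$ consists of \emph{all} elements of $A$ that precede $a_i$ in the linear order of $\lo$. Once this is established, the lemma follows from the inclusion $\leq_A \subseteq \leq'$, where $\lo = (A,\leq')$.

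First I record two basic properties of $\sigma_\lo$, both read off from its definition via the direct predecessor relation $\lessdot'$ of $\lo$:
\begin{itemize}
\item $a_1$ is the least element of $A$ with respect to $\leq'$;
\item for $j < k$ in $\dom(\sigma_\lo)$ we have $a_j \leq' a_k$, by transitivity along the chain $a_j \lessdot' a_{j+1} \lessdot' \cdots \lessdot' a_k$.
\end{itemize}

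Then I prove by induction on $i$ that $\{b \in A \mid b \leq' a_i\} = \{a_1,\ldots,a_i\}$. The inclusion $\supseteq$ is the second property above.

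For $\subseteq$, the base case $i=1$ holds because $a_1$ is least in $A$, so nothing other than $a_1$ lies $\leq'$-below it.

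For the inductive step, let $b \leq' a_{i+1}$ with $b \neq a_{i+1}$. Since $\leq'$ is linear, either $b \leq' a_i$ or $a_i \leq' b$.
\begin{itemize}
\item If $b \leq' a_i$, the induction hypothesis gives $b \in \{a_1,\ldots,a_i\}$.
\item If $a_i \leq' b$, then $a_i \leq' b \leq' a_{i+1}$. Because $a_i \lessdot' a_{i+1}$, there is no element strictly between them, so $b = a_i$.
\end{itemize}

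Finally, take $a \in \dc_\po(a_i)$, so $a \leq_A a_i$. Since $\lo$ is a linear extension of $\po$, this gives $a \leq' a_i$, and the claim yields $a \in \{a_1,\ldots,a_i\}$.

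The main obstacle is mild: it is making sure the sequence $\sigma_\lo$ really starts at the least element and proceeds by direct successors, so that there are no gaps below $a_i$. This is exactly what the definition of $\sigma_\lo$ states, with well-foundedness used only to guarantee the least element $a_1$ exists. Note that I do not need $\sigma_\lo$ to exhaust $A$ when it is infinite. In the example above, $d$ is missing from $e_0e_1\cdots$, which is consistent because $d$ lies above every $a_i$ and so never belongs to any $\dc_\po(a_i)$.
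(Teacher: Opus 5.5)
Your proof is correct. The paper states this lemma without proof, as ``not difficult to see''. Your induction, showing that $\{a_1,\ldots,a_i\}$ is exactly the set of elements lying $\leq'$-below $a_i$, followed by $\leq_A \subseteq \leq'$, is the routine verification the paper leaves to the reader. You also correctly identify the one point that carries the argument: $a_1$ must be the least element of $\lo$, which the paper asserts right after defining $\sigma_\lo$. Without it the lemma would fail, for instance for a successor chain starting above the least element.
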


Clearly, in case of an infinite partial order, each of its associated sequences is infinite and so each of these sequences is the limit of its finite prefixes.  
However, as the following result shows, also the converse holds: every word that can be obtained as the limit of finite prefixes of sequences of an infinite partial order $\po$ actually belongs to the sequences of $\po$. 
Note the contrast with Example~\ref{Ex-lim=more}.
As can be seen from its proof (in the Appendix), this result holds for all languages defined by partial orders, because for any word obtained as the limit of prefixes of words of a partial order $\po$ a linear extension $\lo$ of $\po$ can be constructed such that $w = \sigma_\lo$. 

\begin{lemma}
\label{Lem-seqlimpref}
    Let $\po = (A,\leq_A)$ be a well-founded, partial order with $A$ an infinite set. 
    Then $\wdspo(\po) = \llim(\Pref(\wdspo(\po)))$.   
\end{lemma}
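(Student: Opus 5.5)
The inclusion $\wdspo(\po) \subseteq \llim(\Pref(\wdspo(\po)))$ follows directly from the observation after Example~\ref{Ex-lim=more}. For the converse, I will construct, for each word in $\llim(\Pref(\wdspo(\po)))$, a well-founded linear extension of $\po$ whose sequence is that word. This is the strategy announced before the lemma.

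\emph{Easy inclusion.} Let $\lo \in \lex(\po)$ be well-founded. If $\sigma_\lo$ were finite, then by definition $A = \{a_i \mid i \in \dom(\sigma_\lo)\}$ would be finite, contradicting that $A$ is infinite. So $\sigma_\lo$ is infinite. Then $\sigma_\lo \in \llim(\pref(\sigma_\lo)) \subseteq \llim(\Pref(\wdspo(\po)))$.

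\emph{Reverse inclusion, facts about $w$.} Let $w = a_1a_2\cdots \in \llim(\Pref(\wdspo(\po)))$. Since $w[i] \in \Pref(\wdspo(\po))$ for infinitely many $i$, and a prefix of a prefix is again a prefix, every finite prefix $w[n]$ is a prefix of some $\sigma_{\lo_n}$ with $\lo_n = (A,\preceq_n)$ a well-founded linear extension of $\po$. From this I extract three facts.
\begin{enumerate}[label=(\roman*)]
\item The $a_i$ are pairwise distinct elements of $A$, because each element occurs at most once in $\sigma_{\lo_n}$.
\item If $a_j \leq_A a_k$ then $j \leq k$. Suppose instead $k<j$. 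In $\sigma_{\lo_j}$ the element $a_k$ occurs before $a_j$, so $a_k \preceq_j a_j$. Together with $a_j \leq_A a_k \subseteq \preceq_j$, antisymmetry gives $a_j = a_k$, contradicting (i).
\item $B = \{a_i \mid i \in \mathbb{N}\}$ is downward-closed in $\po$. This holds because Lemma~\ref{rem-pastlo}, applied to $\lo_n$, gives $\dc_\po(a_n) \subseteq \{a_1,\ldots,a_n\}$ for every $n$.
\end{enumerate}

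\emph{Construction of $\lo$.} Let $C = A \setminus B$. Take a well-founded linear extension $\leq_C$ of the sub-poset induced by $C$. This sub-poset is well-founded as a restriction of a well-founded order, so such an extension exists by the result cited from~\cite{Downey-etal}; if $C=\emptyset$ the extension is trivial. Define $\preceq$ as follows:
\begin{itemize}
\item $a_j \preceq a_k$ iff $j \le k$;
\item $b \preceq c$ for all $b\in B$, $c \in C$;
\item on $C$, $\preceq$ coincides with $\leq_C$.
\end{itemize}

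\emph{Verification.} I then check four things.
\begin{itemize}
\item $\preceq$ is a linear order. This is clear from the construction.
\item $\preceq$ extends $\leq_A$. Pairs inside $B$ are handled by (ii), pairs inside $C$ by the choice of $\leq_C$, and pairs $b \leq_A c$ are included by definition. A relation $c \leq_A b$ with $c\in C$, $b \in B$ is impossible by (iii).
\item $\preceq$ is well-founded. A nonempty subset meeting $B$ has as least element its member of least index. A nonempty subset of $C$ has a least element because $\leq_C$ is well-founded.
\item $\sigma_{\lo} = w$, where $\lo = (A,\preceq)$. The least element of $\lo$ is $a_1$, and $a_i \lessdot a_{i+1}$ for all $i$, since nothing lies strictly between them in $\preceq$. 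Because $w$ is infinite, the requirement that $\sigma_\lo$ exhaust $A$ does not apply; elements of $C$ may simply be absent, as in the earlier example with $d$.
\end{itemize}
Hence $w \in \wdspo(\po)$.

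The main point needing care is fact (ii): the order of the positions in $w$ must be compatible with $\leq_A$, although different prefixes of $w$ come from different linear extensions $\lo_n$. The argument above handles this by using the single extension $\lo_j$ for the pair $a_k, a_j$.
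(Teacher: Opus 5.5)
Your proof is correct and uses essentially the same construction as the paper: list the elements of $w$ by position, place all remaining elements after them via a linear extension of their induced sub-poset, and use Lemma~\ref{rem-pastlo} (downward-closedness of $\ab(w)$ and compatibility of positions with $\leq_A$) to show that this extends $\leq_A$. Your choice of a \emph{well-founded} extension $\leq_C$, together with the explicit well-foundedness check, is in fact needed for $\lo$ to contribute to $\wdspo(\po)$; the paper's proof passes over this point by taking an arbitrary linear extension $\preceq_D$ of the remainder.
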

\begin{proof}
    See Appendix~\ref{AppProofs}. 
\qed
\end{proof}

%%%%%%%%%%%%%%%%%%%%%%%%%%%%%%%%%%%%%%%%%%%%%%%%%%%%%
%%%%%%%%%%%%%%%%%%%%%%%%%%%%%%%%%%%%%%%%%%%%%%%%%%%%%

\section{Message Logs}
\label{Sect-MessLog}

%%%%%%%%%%%%%%%%%%%%%%%%%%%%%%%%%%%%%%%%%%%%%%%%%%%%%
%
A cross-organisational process is understood here as a global, distributed system consisting of a finite number of local business processes, executed in different organisations with the local processes communicating through asynchronous exchanges of typed messages over unidirectional, bilateral channels. 

\begin{example}
\label{Ex-OTC01}  
The so-called ``over the counter'' (OTC)
trading of securities involves high volumes of standardised message exchanges between financial institutions. 
Figure~\ref{PCollab01} shows part of the message exchanges between an Investment Firm and a Custodian (an intermediary party) involved in a securities transaction. 

\begin{figure}[!ht]
\begin{center}
 \begin{tikzpicture}
    \node[label={[align=center]above:\sffamily Investment\\\sffamily Firm}] (IF) at (0,0) {\includegraphics[scale=0.4]{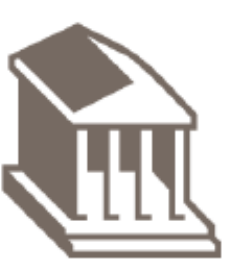}};
    \node[label=above:\textsf{Custodian}] (Cu) at (4,0) {\includegraphics[scale=0.4]{incld/tikz/bank}};

% dit is heel experimenteel. (lelijke woorden verwijderd)
\draw[->] (IF.20) --  pic[above,pos=0.40,yshift=2,scale=0.1] {envelope}  node[above,pos=0.60] {SI}  (Cu.160) ;
\draw[<-] (IF.-20) --  pic[above,pos=0.60,yshift=2,scale=0.1] {envelope}  node[above,pos=0.35] {SC}  (Cu.200) ;
    
\end{tikzpicture}
 % % Requires \usepackage{graphicx}
  % \includegraphics[scale=0.37]{incld/Pictures/Collab01}
    \caption{OTC settlement}\label{PCollab01}
\end{center}
\end{figure}

\noindent
The Investment Firm sends a \emph{Settlement Instruction}, instructing the Custodian to transfer securities, previously traded (sold or bought) on the OTC market, from or into the account of the Investment Firm. On completion of the instruction, the Custodian returns a \emph{Settlement Confirmation} to the Investment Firm. 
\qed
\end{example}

Every instantiation of a global process is represented by a \emph{case}. 
In Example~\ref{Ex-OTC01} above, the settlement of a transaction on the OTC market would be such instantiation and as a case typically be represented by a transaction identifier. 
A \emph{message} is a uniquely identifiable unit of information. 
Messages are exchanged between local processes that run together for a given case until a certain objective has been achieved (\eg the settlement of a specific transaction). 
If a message is exchanged between two processes, it is output from the sending process and input to the receiving process. 
Depending on its purpose, each message has a message format, or \emph{message type}. 
Throughout the paper, the special symbols $\out$ and $\inp$ indicate the sending and receiving, respectively, of messages.  
It is convenient to combine these concepts in the following background sets, fixed for the rest of this paper. 

\newenvironment{fminipage}% 
{\begin{Sbox}\begin{minipage}}{\end{minipage}
\end{Sbox}\fbox{\TheSbox}}\begin{center}
 
\begin{fminipage}{3.0in}  
$\PR$ is the universe of \emph{business processes}; 
\\
$\C$ is the universe of \emph{cases}; 
\\
$\I$ is the universe of \emph{messages};
\\
$\I \times \{\out,\inp\}$ is the universe of \emph{message events}; 
\\
$\mathcal{M}$ is the universe of \emph{message types}. 
\end{fminipage}
\end{center}

Let $e = (\m,\x)$ be a message event with $\m \in \I$ and $\x \in \{\inp,\out\}$. 
Then $\msg(e)=\m$ is the message of $e$. 
If $e = (\m,\out)$, we refer to $e$ as an \emph{output event} (representing the sending of the message $\m$) and if $e = (\m,\inp)$, it is an \emph{input event} (representing the receipt of $\m$).
We use the following (complement) notation: $\overline{\inp} = \out$ and $\overline{\out} = \inp$.
Thus, $\overline{\overline{\inp}}=\inp$ and $\overline{\overline{\out}}=\out$. 
Moreover, for $\m \in \I$ and $\x\in \{\inp,\out\}$, we set $\overline{(\m,\x)} = (\m,\overline{\x})$. 

Messages involved in a cross-organisational process belong to a case and have a message type. 

\begin{definition}\label{N07msgB0}%Let $\K \subseteq \I \times \{\inp,\out\}$. 
Let $\mathcal U \subseteq \I$ be a set of messages.  
\begin{itemize}[noitemsep,topsep=0pt]
  \item[(1)] 
  A \emph{case} function for 
   $\mathcal U$ is a function $f : \mathcal U \rightarrow \C$.  
   \item[(2)] 
   A \emph{type} function for $\mathcal U$ is a function $f : \mathcal U \rightarrow \mathcal M$.  
\qed
\end{itemize}
\end{definition}

Both case functions and type functions, and, in general, all functions defined for messages can be lifted to message events in the following way. 

\begin{definition}\label{Dinhoat} Let $\K \subseteq \I \times \{\inp,\out\}$ be a set of message events and let $f : \msg(\K) \rightarrow D$ for some set $D$. 
Then the \emph{enhancement} $\langle f \rangle : \msg(\K) \times \{\inp,\out\} \rightarrow D$ of $f$ is defined by $\langle f \rangle(\m,\x)=f(\m)$, for all $\m \in \msg(\K)$ and $\x \in \{\inp,\out\}$.
\qed
\end{definition}

Thus, message events inherit type and case from their messages.
As a consequence, given a message $\m$ and case function $\Cs$ and type function $\tp$ both defined for $\m$, we have that 
if $e = (\m,\x)$ with $\m \in \I$ and $\x \in \{\inp,\out\}$, then $\overline{e} = (\m,\overline{\x})$ and so $\langle \tp \rangle (e)= \tp(\m) = \langle \tp \rangle (\overline{e})$
and $\langle \Cs \rangle (e)= \Cs(\m) = \langle \Cs \rangle (\overline{e})$. 

Other relevant attributes of a message event are the local process that executes that event and the local time stamp associated to its execution. 
Note that, since communications take place between different processes, message events $e$ and $\overline{e}$ always belong to different processes. 
Moreover, no two different events executed by the same process (even if they belong to different cases) will have the same time stamp.

\begin{definition}\label{N07msgB0-2}
Let $\K \subseteq \I \times \{\inp,\out\}$. 
\begin{itemize}[noitemsep,topsep=0pt]
    \item[(1)] 
    A \emph{process function} for $\K$, is a function $p : \K \rightarrow \PR$, such that $p(e) \neq p(\overline{e})$, for all $e \in \K$, and $p(\K)$ is a finite set;
    \item[(2)] 
    Let $p$ be a process function for $\K$. 
    A \emph{local time function} for $p$ is a function $t : \K \rightarrow \mathbb{N}$, such that, for all distinct $e,e' \in \K$, $t(e) \neq t(e')$ whenever $p(e)=p(e')$. 
\qed
\end{itemize}
\end{definition}

We are now ready to define message logs as consisting of observations of communications, in the form of typed and time-stamped message events with an associated case, asynchronously exchanged over bilateral, unidirectional channels between a collection of local business processes. 

\begin{definition}\label{D07msgB0} 
A \emph{message log} is a pair $\Ka=(\K,\Att)$ such that 
\begin{itemize}[noitemsep,topsep=0pt]
    \item[(1)] 
    $\K \subseteq \I \times \{\inp,\out\}$ is a set of message events such that $e \in \K$ if and only if $\overline{e} \in\K$, and 
    \item[(2)] 
    $\Att =(\langle \Cs \rangle, \proc,$ $\tm, \langle \tp \rangle )$  is tuple providing the \emph{attributes of} $\Ka$ with 
    $\Cs$ a case function for $\msg(\K)$, 
    $\proc$ a process function for $\K$, 
    $\tm$ a local time function for $\proc$, and 
    $\tp$ a type function for $\msg(\K)$.
\qed
\end{itemize}
\end{definition}

\begin{example}
\label{Ex-OTC02}  (Ex.~\ref{Ex-OTC01} \ctd)  
Figure~\ref{PCollab02} illustrates the exchange of messages to settle a transaction. 
There are two local processes, namely $\Inv$ and $\Cus$. 
The downward pointing arrows represent the ordering of events along their local timelines. 

\begin{figure}[!ht]
\begin{center}
  % % Requires \usepackage{graphicx}
  % \includegraphics[scale=0.35]{incld/Pictures/Collab02}
\begin{tikzpicture}
%% eerste proces x = 0
    \node[label={[align=center]above:\sffamily Investment\\\sffamily Firm}] (IF) at (0,3.5) {\includegraphics[scale=0.32]{incld/tikz/bank}};
\node at (0,2.5) {\tt IF} ;
\draw[thick] (0-0.3,2.3) -- (0+0.3,2.3) ;
\draw[time] (0,2.3) -- (0,-0.3) ;

\node[event,label=left:{$(\m_1,\out)$}] (P11) at (0,1.8) {} ;
\node[event,label=left:{$(\m_2,\inp)$}] (P12) at (0,0.5) {} ;

%% tweede proces x = 0 en dan verplaatsen met scope
\begin{scope}[shift={(3,0)}]
    \node[label=above:\textsf{Custodian}] (Cu) at (0,3.5) {\includegraphics[scale=0.32]{incld/tikz/bank}};
\node at (0,2.5) {\tt CN} ;
\draw[thick] (0-0.3,2.3) -- (0+0.3,2.3) ;
\draw[time] (0,2.3) -- (0,-0.3) ;

\node[event,label=right:{$(\m_1,\inp)$}] (P21) at (0,1.8) {} ;
\node[event,label=right:{$(\m_2,\out)$}] (P22) at (0,0.5) {} ;
\end{scope}

\tikzset{
     koker/.style={cylinder,draw=black,rotate=180,minimum height=2cm,minimum width=0.85cm,
     cylinder uses custom fill, cylinder body fill=black!10}
     }

\node[koker] at (1.5,1.8)  {} ;
\node[koker] at (1.5,0.5)  {} ;

\draw[message] (P11) --  (P21) ;
\draw[message] (P22) --  (P12) ;
\path (P11) --  pic[above,pos=0.40,yshift=2,scale=0.1] {envelope}  node[above,pos=0.60] {SI} (P21)  
(P22) --  pic[above,pos=0.40,yshift=2,scale=0.1] {envelope}  node[above,pos=0.60] {SC}   (P12) ;

\end{tikzpicture}
    \caption{OTC Settlement of transaction $T1$}
    \label{PCollab02}
\end{center}
\end{figure}

\noindent
Processes $\Inv$ and $\Cus$ exchange messages $\m_1$ (a Settlement Instruction) and $\m_2$ (a Settlement Confirmation). 
These exchanges are recorded in the message log $\Ka=(\K,\Att)$ with 
$\K=\{(\m_1,\out),(\m_1,\inp),(\m_2,\out),(\m_2,\inp) \}$ and $\Att =(\langle \Cs \rangle,$ $ \proc,$ $\tm, \langle \tp \rangle )$. 
\\
Let $\PR = \{\Inv,\Cus\}$. 
The process function $\proc$ for $\K$ is defined by $\proc(\m_1,\out)=\proc(\m_2,\inp)=\Inv$ and $\proc(\m_1,\inp)$ $=$ $\proc(\m_2,\out)$ $=$ $\Cus$.
\\
Let $T1$ be the identifier of this transaction. 
Then, $\Cs(\m_1)=\Cs(\m_2)=T1$ and so we have $\langle \Cs \rangle(e)=T1$  for all message events $e \in \K$. 
\\
The local time function $\tm$ assigns time stamps such that $\tm(\m_1,\out) < \tm(\m_2,\inp)$ and $\tm(\m_1,\inp)< \tm(\m_2,\out)$.
\\
Finally, with $\tp(\m_1)=SI$ and $\tp(\m_2)=SC$, we have 
$\langle \tp \rangle(\m_1,\x)=SI$ and $\langle \tp \rangle(\m_2,\x)=SC$ where $\x \in \{\inp,\out\}$. 
\qed
\end{example}

If $\Ka=(\K,(f_1, f_2 , f_3, f_4))$ is a message log, then $\card(\{f_2(e) \mid e \in \K \})$, the number of processes associated to $\Ka$, is referred to as the \emph{dimension} of $\Ka$.  
Note that by Definition~\ref{N07msgB0-2}(1), $\{f_2(e) \mid e \in \K \}$ is always a finite set. 
Moreover, by Definition~\ref{N07msgB0-2}(1) and 
Definition~\ref{D07msgB0}(1), if $\K \neq \emptyset$, then the dimension of $\Ka$ is always at least $2$. 

In what follows, we identify the processes associated to the message events from a message log by an integer from $[n]$, where $n$ is the dimension of the message log.

\begin{center}
\begin{fminipage}{4.0in}
\noindent
\emph{For the rest of this paper, 
$\Ka=(\K,\Att)$ is a fixed message log with 
\\
$\K$ a non-empty set of message events and 
\\
$\Att=(\langle \Cs \rangle, \proc,\tm, \langle\tp\rangle )$ 
where $\Cs$ and $\tp$ 
\\
are a case function and a type function, respectively, for $\msg(\K)$. 
%\\ $\proc$ is a process fumction for $\K$ and $\tm$ is a local time function for $\proc$. 
\\
The dimension of $\Ka$ is $n \geq 2$ and $\{ \proc(e) \mid e \in \K \}=[n]$. 
}
\end{fminipage}
\end{center}

%%%%%%%%%%%%%%%%%%%%%%%%%%%%%%%%%%%%%%%%%%%%%%%%%%%%%
%%%%%%%%%%%%%%%%%%%%%%%%%%%%%%%%%%%%%%%%%%%%%%%%%%%%%

\section{Structure of a message log}
\label{Sect-MLandPO}

%%%%%%%%%%%%%%%%%%%%%%%%%%%%%%%%%%%%%%%%%%%%%%%%%%%%%
%
In this section, we propose a relation to (partially) order the message events of message log $\Ka$. 
This relation is then used to specify a finitary and prefix-closed language for $\Ka$, that represents all sequences of message events defined by this partial order.  

Clearly, for each process of $\Ka$, the time stamps defined by its local time function determine a total order on all message events of this process. 
Recall that we assume that whenever a message event $e$ has an earlier (smaller) time stamp than another message event $e'$ belonging to the same process, then $e'$ does not causally precede $e$. 
In other words, the temporal order of two events from the same process as captured by its local time function, is either a necessary (causal) order determined by the (unknown) design of the process or a coincidental observation of their execution order. 
Note that for two events from the same process that belong to different cases, their allotted time stamps do not reflect a necessary ordering, as different cases are causally unrelated\footnote{This could have been formalised by having separate local time functions per case.}. 
Thus, we only relate message events that belong to the same case and we do this by combining the temporal (process) orders with the send-receive relations between output events and their corresponding input events. 

\begin{definition}\label{D07crl} 
  The \emph{event relation} of $\Ka$ is the relation
  $R_{\Ka} \; \subseteq \K \times \K$ such that $e \; R_{\Ka} \; e'$ 
  if and only if  
\\
(1) $\Cs(e)=\Cs(e')$, and 
\\
(2) 
either 
(a) $\proc(e)=\proc(e')$ and $\tm(e) < \tm(e')$ 
\\
or
(b) $e=(\m,\out)$ and $e'=(\m,\inp)$ for some $\m \in \msg(\K)$.
\qed
\end{definition}

Intuitively, the event relation of $\Ka$ captures for a given case an order of events that should be preserved in any global observation of its message events. 
However, as the following example shows, this relation is not necessarily acyclic. 

\begin{example}
\label{Fig-Cycle}
Let $\Ka = (\K,\Att)$ be a message log with set of message events 
$\K =$ $ \{(\m_1,\out), (\m_1,\inp), (\m_2,\out), (\m_2,\inp) \}$ and its attributes defined by $\Att =$ $ (\langle \Cs \rangle, \proc, \tm, \langle \tp \rangle )$.  
\\
The process function $\proc$ is defined by  $\proc(\m_1,\inp)=\proc(\m_2,\out)=P1$ and $\proc(\m_1,\out)=\proc(\m_2,\inp)=P2$. 
Note that $P1\neq P2$. 
\\
Let $\tm$ be such that 
$\tm(\m_1,\inp)=1$, $\tm(\m_2,\out)=2$, 
$\tm(\m_2,\inp)=1$, and $\tm(\m_1,\out)=2$. 
\\
If $\Cs(\m_1)=\Cs(\m_2)$, then the event relation $R_{\Ka}$ of $\Ka$ defines a cycle, see Figure~\ref{Fig-Cycle}(a).  

\begin{figure}[!ht]
\begin{subfigure}{0.45\textwidth}
\centering
\begin{tikzpicture}%[event/.style={draw,thick,circle,inner sep=0pt,minimum size=1.8mm}]

\node at (1,2.5) {$P_1$} ;
\draw[thick] (1-0.3,2.3) -- (1+0.3,2.3) ;
\node at (2.5,2.5) {$P_2$} ;
\draw[thick] (2.5-0.3,2.3) -- (2.5+0.3,2.3) ;

\node[event,label=left:{$(\m_1,\inp)$}] (P11) at (1,2) {} ;
\node[event,label=left:{$(\m_2,\out)$}] (P12) at (1,1) {} ;
\node[event,label=right:{$(\m_2,\inp)$}] (P22) at (2.5,2) {} ;
\node[event,label=right:{$(\m_1,\out)$}] (P21) at (2.5,1) {} ;

\draw[-Latex] (P11) edge (P12) (P22) edge (P21)  ;
\draw[message] (P12) edge (P22) (P21) edge (P11) ;
    
\end{tikzpicture}
\caption{$R_\Ka$}\label{fig:sub:XXX}
\end{subfigure}
\begin{subfigure}{0.45\textwidth}
\centering
\begin{tikzpicture}

\node at (1,2.5) {$P_1$} ;
\draw[thick] (1-0.3,2.3) -- (1+0.3,2.3) ;
%\draw[time] (1,2.3) -- (1,0+0.4) ;
\node at (2.5,2.5) {$P_2$} ;
\draw[thick] (2.5-0.3,2.3) -- (2.5+0.3,2.3) ;
%\draw[time] (2.5,2.3) -- (2.5,0+0.4) ;

\node[event,label=left:{$(\m_1,\inp)$}] (P11) at (1,1.8) {} ;
\node[event,label=left:{$(\m_2,\out)$}] (P12) at (1,1.0) {} ;
\node[event,label=right:{$(\m_2,\inp)$}] (P22) at (2.5,1.8) {} ;
\node[event,label=right:{$(\m_1,\out)$}] (P21) at (2.5,1.0) {} ;

\draw[message] (P21) edge (P11)  (P12) edge (P22) ;

\end{tikzpicture}
\caption{$R_{\Ka'}$}\label{fig:sub:YYY}
\end{subfigure}
  
    \caption{$(\m_1,\inp),(\m_2,\out),(\m_2,\inp),(\m_1,\out)$ form a cycle in $R_\Ka$ but not in $R_{\Ka'}$.} 
\label{Pycle}
\end{figure}

\noindent
However, if $\Ka' = (\K,\Att')$ is a message log obtained from $\Ka$ by replacing $\Cs$ by a case function $\Cs'$ such that $\Cs'(\m_1) \neq \Cs'(\m_2)$, the event relation $R_{\Ka'}$ is acyclic: $((\m_1,\inp),(\m_2,\out))$ and $((\m_2,\inp),(\m_1,\out))$ are not included in $R_{\Ka'}$, \cf Figure~\ref{Fig-Cycle}(b), even though $\tm(\m_1,\inp) < \tm(\m_2,\out)$ and $\tm(\m_2,\inp) < \tm(\m_1,\out)$. 
\qed
\end{example}

It is a straightforward observation that every non-empty cycle of $R_{\Ka}$ consisting of message events (which by definition all belong to the same case), contains an input event and an output event with the same underlying message. 
Consequently, cycles have an input event that precedes its corresponding output event, something which cannot happen in a message log that properly incorporates the execution order of its events. 
Hence 

\begin{definition}\label{Def-sound}
$\Ka$ is \emph{sound} if $R_{\Ka}$ is an acyclic relation. 
\qed
\end{definition}

If $R_{\Ka}$ is acyclic, then its transitive closure $R_{\Ka}^+ $ is antisymmetric. 
By construction, $R_{\Ka}^* $ is a transitive and reflexive relation. 
This leads to the following observation as a direct consequence of Definition~\ref{Def-sound}.

\begin{lemma}\label{Lsoundpo}
$R_{\Ka}^* $ is a partial order if and only if $\Ka$ is sound. 
\qed
\end{lemma}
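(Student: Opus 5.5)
The plan is to prove the two directions separately, using only the definitions of $R_\Ka^+$, $R_\Ka^* = R_\Ka^+ \cup \Id_\K$ and acyclicity. Reflexivity and transitivity of $R_\Ka^*$ hold for any relation. Reflexivity holds because $\Id_\K \subseteq R_\Ka^*$. For transitivity, if $a\,R_\Ka^*\,b$ and $b\,R_\Ka^*\,c$, either one of the pairs lies in $\Id_\K$, in which case the conclusion is immediate, or both lie in $R_\Ka^+$. In the latter case we concatenate the two witnessing chains $a=a_0\,R_\Ka\,a_1 \cdots a_n=b$ and $b=b_0\,R_\Ka\,b_1\cdots b_m=c$ into one chain from $a$ to $c$. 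Hence $R_\Ka^*$ is a partial order if and only if it is antisymmetric, and the lemma reduces to showing that $R_\Ka^*$ is antisymmetric if and only if $R_\Ka$ is acyclic.

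For the direction ``sound implies partial order'', assume $R_\Ka$ is acyclic and let $a\,R_\Ka^*\,b$ and $b\,R_\Ka^*\,a$ with $a\neq b$. Then neither pair is in $\Id_\K$, so both are in $R_\Ka^+$. Concatenating the chains gives $a\,R_\Ka\,a_1 \cdots R_\Ka\, a$ with $n\ge 2$ steps, which is a cycle of $R_\Ka$ in the sense of the definition. This contradicts acyclicity, so $a=b$.

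For the converse, assume $R_\Ka^*$ is a partial order and suppose $R_\Ka$ has a cycle $a\,R_\Ka\,a_1 \cdots R_\Ka\,a_n = a$ with $n\ge 1$. The step needing care is the cycle of length one, \ie $a\,R_\Ka\,a$, since antisymmetry does not forbid loops. Here we use Definition~\ref{D07crl} directly: clause (2a) would require $\tm(a)<\tm(a)$, and clause (2b) would require $a=(\m,\out)=(\m,\inp)$, both impossible. So $R_\Ka$ is irreflexive, and we may take the cycle to have $n\ge 2$. Irreflexivity also gives $a_1\neq a$. Then $a\,R_\Ka\,a_1$ gives $a\,R_\Ka^*\,a_1$, and the chain $a_1\,R_\Ka\,a_2 \cdots a_n=a$ gives $a_1\,R_\Ka^*\,a$. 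By antisymmetry $a=a_1$, which is a contradiction. Hence $R_\Ka$ is acyclic and $\Ka$ is sound.

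The only real obstacle is this length-one cycle, which is why the argument must invoke the concrete definition of $R_\Ka$ and not pure order theory. Everything else is routine chain manipulation.
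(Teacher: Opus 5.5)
Your proof is correct and follows the same line as the paper's brief justification: $R_{\Ka}^*$ is reflexive and transitive by construction, and acyclicity of $R_{\Ka}$ yields antisymmetry. You also prove the converse, which the paper leaves implicit, and you are right that this direction needs the irreflexivity of $R_{\Ka}$ from Definition~\ref{D07crl}: the paper's notion of acyclicity also excludes loops ($n=1$), and antisymmetry alone does not rule them out.
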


\begin{center}
\begin{fminipage}{3.8in}
\emph{
For the rest of this paper, $\Ka$ is a sound message log. 
\\
All message logs considered from here are assumed to be sound.
}
\end{fminipage}
\end{center}

Since $\Ka$ is fixed, the subscript $\Ka$ can be dropped from $R_{\Ka}^*$. 
To further simplify our notation, we write $\preceq$ instead of $R^*$. 
So, from now on, we investigate the poset $\po_\K = (\K, \preceq)$ defined by the sound message log $\Ka=(\K,\Att)$. 
In addition, given a subset $C$ of $\K$, we write $\po_C = (C,\preceq_C)$ for its induced sub-poset in $(\K,\preceq)$.

Next, we investigate the sequences of message events defined by $\po_\K$ that belong to the same case.
Note that message events from different cases are not related in $\po_\K$. 
As an initial, general, observation, we establish that every subset of $\K$ consisting of message events belonging to the same case, induces a partial order of which \emph{all} linear extensions are well-founded.

\begin{lemma}\label{Kwfd} 
Let $C \subseteq \K$ be such that $\Cs(e) = \Cs(e')$, for all $e,e' \in C$. 
Let $\lo = (C, \preceq_\lo) \in \lex(\po_C)$.
Then $\lo$ is well-founded.
\end{lemma}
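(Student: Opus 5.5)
The plan is to use the equivalent characterisation of well-foundedness recalled in the footnote of Section~\ref{Sect-Prelim}. Under it, $\lo$ is well-founded if and only if there is no infinite sequence $e_1, e_2, \ldots$ of distinct elements of $C$ with $e_{i+1} \preceq_\lo e_i$ for all $i$. So I would argue by contradiction: assume such a strictly descending sequence exists in $\lo$ and derive a violation of antisymmetry of $\preceq_\lo$.

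The key structural fact is that, restricted to one case and one process, the order $\preceq$ contains the local time order. Indeed, let $e,e' \in C$ with $\proc(e)=\proc(e')$ and $\tm(e)<\tm(e')$. Then $\Cs(e)=\Cs(e')$ by the assumption on $C$, so clause (2)(a) of Definition~\ref{D07crl} gives $e\,R_{\Ka}\,e'$. Hence $e \preceq e'$, and since $e,e' \in C$ also $e \preceq_C e'$. As $\lo$ extends $\po_C$, we get $e \preceq_\lo e'$.

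Next I would use finiteness of the set of processes. By Definition~\ref{N07msgB0-2}(1), $\proc(\K)=[n]$ is finite. By the pigeonhole principle, infinitely many members of the descending sequence belong to a single process $i\in[n]$; write them as a subsequence $e_{k_1}, e_{k_2}, \ldots$ with $k_1<k_2<\cdots$. These events are pairwise distinct and lie in the same process, so by Definition~\ref{N07msgB0-2}(2) their time stamps $\tm(e_{k_j})$ are pairwise distinct natural numbers. An infinite sequence of distinct naturals cannot be strictly decreasing. Hence there are $j<l$ with $\tm(e_{k_j})<\tm(e_{k_l})$, and by the previous paragraph $e_{k_j} \preceq_\lo e_{k_l}$. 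On the other hand, $k_j<k_l$ and the sequence descends in $\lo$, so transitivity gives $e_{k_l} \preceq_\lo e_{k_j}$. Antisymmetry of $\preceq_\lo$ then forces $e_{k_j}=e_{k_l}$, contradicting distinctness.

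The only step requiring care is the first one. There one must note that the same-case hypothesis on $C$ is exactly what makes clause (1) of Definition~\ref{D07crl} hold, so the time order really is embedded in $\preceq_C$. The remaining steps are a routine pigeonhole argument together with the well-foundedness of $\mathbb{N}$. Soundness of $\Ka$, via Lemma~\ref{Lsoundpo}, is used only to know that $\po_C$ is a poset, so that $\lex(\po_C)$ makes sense.
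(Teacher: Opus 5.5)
Your proof is correct. It uses the same two ingredients as the paper. Within one case and one process, the time order is contained in $\preceq_\lo$. There are only finitely many processes. You combine these ingredients differently from the paper.

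The paper works with the minimal-element definition directly. For an infinite $D \subseteq C$, it splits $D$ into the parts $D_\ell$ per process. It notes that each non-empty $D_\ell$ has a least element $d_\ell$, because on $D_\ell$ the order $\preceq_\lo$ is just the time order on a subset of $\mathbb{N}$. It then argues by contradiction, via input events and their complementary output events in other processes, that some $d_\ell$ is minimal in $D$.

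You pass instead to the descending-chain formulation from the footnote and exploit the finiteness of $[n]$ by pigeonhole. This is shorter and bypasses the paper's final step. That step is in fact the weak spot of the paper's argument. It infers from $e \preceq_\lo d_\ell$ with $\proc(e) \neq \proc(d_\ell)$ that $d_\ell$ is an input event. That inference uses a property of $R_\Ka$ which an arbitrary linear extension need not have. The step is better replaced by observing that the $\preceq_\lo$-least of the finitely many $d_\ell$ is the least element of $D$.

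What the paper's route buys in return is independence from the equivalence between the two formulations of well-foundedness. The direction you use needs (dependent) choice in general. That is harmless here, since $\K$ is countable: each $\K_i$ is mapped injectively into $\mathbb{N}$ by $\tm$, and there are only finitely many $i$.
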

\begin{proof}
    Let $D \subseteq C$ be infinite 
    (if $C$ is finite, there is nothing to prove). 
    Let, for all $\ell \in [n]$, $D_\ell = \{e \in D \mid \proc(e) = \ell\}$ be the set of elements of $D$ that belong to process $\ell$. 
    Recall that $\tm: \Ka \rightarrow {\mathbb N}$ is such that for all distinct message events $e,e'$, $\tm(e) \neq \tm(e')$ whenever $\proc(e) = \proc(e')$. 
    Consequently, for all $\ell \in [n]$ and for every $e\in D_\ell$, there can be only finitely many $e' \in D_\ell$ such that $\tm(e') < \tm(e)$ and hence $e' \preceq_\lo e$ (\cf Definition~\ref{D07crl}). 

    Let $\ell \in [n]$. 
    Since $\preceq_\lo$ is a linear order on $C$, also its restriction to  $D_\ell$ is a linear order.
    Hence, if $D_\ell$ is not empty, it has a unique minimal element w.r.t. $\preceq_\lo$, which we denote by $d_\ell$. 

    Let us assume that none of the $d_\ell$ thus defined is minimal in $D$ w.r.t. $\preceq_\lo$. 
    Hence, for all $\ell \in [n]$ and for all $d_\ell$, there is an $e \in D$ with $e\not\in D_\ell$ such that $e \preceq_\lo d_\ell$. 
    It then follows from Definition~\ref{D07crl} that all $d_\ell$ are input events. 
   More precisely, for all $\ell\in [n]$, there exist an $i_\ell \in [n]$ and an output event $e_\ell \in D_{i_\ell}$ such that $i_\ell \neq \ell$ and $e_\ell = \overline{d_\ell}$. 
    This implies that, for all $\ell \in [n]$, the minimal element $d_{i_\ell}$ of $D_{i_\ell}$ precedes $d_\ell$ in $\lo$ since $d_{i_\ell} \preceq_\lo e_\ell \preceq_\lo d_\ell$. 
    Repeating the above argument leads to the conclusion, that  - 
    as we have only a finite number $n$ of processes - there exists a cycle $d_i \preceq_\lo d_j \preceq_\lo d_i$ for some $i,j \in [n]$.
    This is in contradiction with $\lo$ being a partial order. 
    Hence there must exist at least one $\ell \in [n]$ such that $d_\ell$ is minimal in $D$ w.r.t. $\lo$. 

    We conclude that every nonempty subset of $C$ has a minimal element w.r.t. $\preceq_\lo$ and so $\lo$ is well-founded. 
\qed
\end{proof}

Since a partial order is well-founded if it has a linear extension that is well-founded (\cf Section~\ref{Sect-Prelim}), we have as an immediate corollary that $\po_C$ is well-founded for every $C \subseteq \K$ consisting of message events in $C$ that all belong to the same case. 
Hence, by the definition of well-foundedness and because message events that belong to different cases are not ordered by $\preceq$, every message event of $\Ka$ has a finite past. 
In what follows, we consider message events together with their past, \ie all message events that precede them in $\po_\K$. 
Thus, we are interested in sets of message events belonging to the same case together with their downward closure in $\po_\K$. 

\begin{definition}\label{Def-Con}
\label{Dconvs0} 
A \emph{conversation} in $\Ka$ is a subset $C \subseteq \K$ such that 
\\
(1) $\Cs(e)=\Cs(e')$ for all $e,e' \in C$, and 
\\
(2) $C$ is downward-closed in $(\K,\preceq)$. 
\qed
\end{definition}

Note that by Definition~\ref{D07crl}(1), all message events in the downward closure $\dc_{\po_\K}(e)$ of a message event $e$ have the same case. 
Hence, requirements (1) and (2) in Definition~\ref{Def-Con} are independent. 

A conversation $C$ captures (an initial part of) behaviour of the system for a certain case in the form of the partial order $\po_C$. 
By Lemma~\ref{Kwfd}, all linear extensions of $\po_C$ are well-founded.
Since $\preceq_C \, = \bigcap\{\preceq_\lo \mid (C,\preceq_\lo) \in \lex(\po_C\}$, 
\cf Section~\ref{Sect-Prelim}, we have $\wdspo(\po_C)$, comprising the sequences corresponding to the linear extensions of $\po_C$, as a full representation of the sequential behaviour captured by $C$. 
Since in the case of a finite conversation, all its elements appear in each of its sequences, the sets of sequences defined by different finite conversations are always disjoint. 

\begin{lemma}
\label{Lem-uniqueC}
Let $C$ and $C'$ be finite conversations in $\Ka$ such that $C \neq C'$. 
Then $\wdspo(\po_{C}) \cap \wdspo(\po_{C'}) = \emptyset$. 
\qed
\end{lemma}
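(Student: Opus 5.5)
We argue by contraposition: assuming a common sequence $\sigma \in \wdspo(\po_{C}) \cap \wdspo(\po_{C'})$ exists, we derive $C = C'$. The key observation is that for a \emph{finite} conversation, every sequence in $\wdspo$ enumerates all of its elements. So the set of symbols occurring in $\sigma$ must coincide both with $C$ and with $C'$.

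First we make the observation precise. Let $C$ be finite and let $\lo = (C,\preceq_\lo) \in \lex(\po_C)$. By Lemma~\ref{Kwfd} (or simply because $C$ is finite), $\lo$ is well-founded, so $\sigma_\lo$ is defined. By the definition of $\sigma_\lo$ in Section~\ref{Sect-Prelim}, every element of $C$ occurs at most once in $\sigma_\lo$. Since a finite set admits no infinite sequence of distinct elements, $\sigma_\lo$ is a finite word. For finite $\sigma_\lo$ the definition explicitly requires $C = \{a_i \mid i \in \dom(\sigma_\lo)\}$. Hence $\ab(\sigma_\lo) = C$ for every $\sigma_\lo \in \wdspo(\po_C)$.

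Now suppose $\sigma \in \wdspo(\po_{C}) \cap \wdspo(\po_{C'})$. Applying the observation to $C$ gives $\ab(\sigma) = C$, and applying it to $C'$ gives $\ab(\sigma) = C'$. Therefore $C = C'$, which contradicts the hypothesis $C \neq C'$. So the intersection is empty.

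The only step needing care is the finiteness of $\sigma_\lo$. One could worry about whether the sequence ``reaches'' all elements of $C$. For finite $C$ this is immediate: following the direct-predecessor chain from the least element through the linear order enumerates the whole finite set, and the definition guarantees coverage in the finite case. No other obstacle is expected.
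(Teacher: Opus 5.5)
Your proof is correct and follows the paper's own (one-line) justification: for a finite conversation every element occurs in each of its sequences, so $\ab(\sigma)=C$ for all $\sigma\in\wdspo(\po_C)$, and a shared sequence forces $C=C'$. Your argument that $\sigma_\lo$ is finite (its letters are distinct elements of the finite set $C$), so that the covering clause of the definition of $\sigma_\lo$ applies, simply makes explicit what the paper leaves implicit.
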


However, as $\K$ may be infinite, conversations may be infinite as well.
This implies that $\wdspo(\po_C)$ is not necessarily a finitary language. 
Therefore, we will now argue that we can restrict the representation of a system's behaviour to finite conversations. 
First, we demonstrate that the sequences defined by the finite subconversations of a conversation $C$ are exactly the (finite) prefixes of $\wdspo(\po_C)$.  

\begin{lemma}
\label{Lem-prefseq=seqpref}
    Let $C$ be a conversation in $\Ka$. Let $w \in \K^*$. 
    \\
    Then $w \in \Pref(\wdspo(\po_C))$ if and only if there exists a finite conversation $C' \subseteq C$ such that $w \in \wdspo(\po_{C'})$. 
\end{lemma}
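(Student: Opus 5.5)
We prove the two directions separately. Throughout we use that $C$ is a conversation, so all its events share one case. By Lemma~\ref{Kwfd}, every linear extension of $\po_C$, and of $\po_{C'}$ for $C'\subseteq C$, is well-founded. Hence $\wdspo$ ranges over all linear extensions.

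($\Rightarrow$) Let $w = a_1\cdots a_i \in \pref(\sigma_\lo)$ for some $\lo=(C,\preceq_\lo)\in\lex(\po_C)$, and set $C'=\{a_1,\ldots,a_i\}$. This set is finite, and its elements share the case of $C$. To see that $C'$ is downward-closed in $(\K,\preceq)$, take $a_j\in C'$. Since $a_1\cdots a_j\in\pref(\sigma_\lo)$, Lemma~\ref{rem-pastlo} gives $\dc_{\po_C}(a_j)\subseteq\{a_1,\ldots,a_j\}\subseteq C'$. Moreover $C$ is downward-closed in $\po_\K$, so $\dc_{\po_\K}(a_j)=\dc_{\po_C}(a_j)$. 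Thus $C'$ is a finite conversation. Now restrict $\preceq_\lo$ to $C'$. This is a linear order extending $\preceq_{C'}$, since $\preceq_{C'}$ is the restriction of $\preceq$. Because $w$ is a prefix of $\sigma_\lo$, consecutive elements $a_k,a_{k+1}$ are direct successors in $\preceq_\lo$, hence also in the restriction. Its sequence is therefore exactly $w$, and $w\in\wdspo(\po_{C'})$. The case $w=\lambda$ is handled by $C'=\emptyset$, which is a conversation; its poset is the trivial one whose only sequence is $\lambda$.

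($\Leftarrow$) Let $C'\subseteq C$ be a finite conversation and $w=\sigma_{\lo'}$ for $\lo'=(C',\preceq')\in\lex(\po_{C'})$. We must extend $\lo'$ to a linear extension of $\po_C$ in which $C'$ forms an initial segment. Fix any linear extension $(C\setminus C',\preceq'')$ of the induced sub-poset $\po_{C\setminus C'}$; it exists by Szpilrajn's theorem. Define $\preceq_\lo \;=\; \preceq'\cup\preceq''\cup (C'\times(C\setminus C'))$. This is clearly a linear order on $C$. It extends $\preceq_C$: any pair within $C'$ or within $C\setminus C'$ is covered. A pair $(e,e')$ with $e\in C\setminus C'$, $e'\in C'$ and $e\preceq e'$ is impossible, because $C'$ is downward-closed. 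By Lemma~\ref{Kwfd}, $\lo$ is well-founded. Since $C'$ is a finite initial segment of $\lo$ ordered by $\preceq'$, the sequence $\sigma_\lo$ begins with $\sigma_{\lo'}=w$. Hence $w\in\Pref(\wdspo(\po_C))$.

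The main obstacle is the bookkeeping around direct-predecessor relations and sequences. In the forward direction we must check that restricting to the prefix set yields exactly $w$. In the backward direction we must check that $\sigma_\lo$ enumerates $C'$ first, since $C'$ is finite and precedes every other element. If $C\setminus C'$ is infinite, $\sigma_\lo$ is infinite, but it still has $w$ as a prefix, so this causes no difficulty.
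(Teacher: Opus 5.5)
Your proof is correct. The only-if direction is the paper's argument: $C'=\ab(w)$ is a finite conversation by Lemma~\ref{rem-pastlo} and the downward-closedness of $C$. Restricting the linear extension to $C'$ then gives a linear extension of $\po_{C'}$ whose sequence is $w$.

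In the if direction you take a more economical route. The paper forms the relation $\sqsubseteq_C \;=\; \leq_{\lo'}\cup\preceq_C\cup\,(C'\times(C\setminus C'))$ and verifies by case analysis that it is a partial order. It then takes an arbitrary linear extension of it and shows by contradiction that $w$ is a prefix of its sequence. You instead linearise only the induced sub-poset on $C\setminus C'$ and place it after $\lo'$ as an ordinal sum. Linearity, and the fact that $C'$ forms an initial segment ordered as in $w$, then hold by construction. The only thing left to check is $\preceq_C\subseteq\preceq_\lo$, and that is exactly where the downward-closedness of $C'$ is used. The paper uses it at the corresponding point, to rule out $\sqsubseteq_C$-pairs leading from $C\setminus C'$ into $C'$.

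The two constructions yield the same linear extensions: every linear extension of $\sqsubseteq_C$ is such an ordinal sum. Yours, however, dispenses with the partial-order verification and the prefix argument. It also makes explicit two points the paper leaves implicit: the appeal to Lemma~\ref{Kwfd} for well-foundedness of the glued order, and the degenerate case $w=\lambda$.
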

\begin{proof}
(only-if-direction) 
    Assume that $w \in \Pref(\wdspo(\po_C))$. 
    Let $\lo = (C,\leq_\lo) \in \lex(\po_C)$ be such that $w \in \pref(\sigma_\lo)$.
    Thus $\ab(w) \subseteq C$.
    Let $C' = \ab(w)$. 
    We now set out to prove that $C'$ is a finite conversation in $\Ka$ and $w = \sigma_{\lo'}$ for a linear extension $\lo'$ of $\po_{C'} = (C',\preceq_C')$. 
    Clearly, $C'$ is a finite set, as $w$ is a finite prefix of $\sigma_\lo$.
    Since $C$ is a conversation, $\Cs(e) = \Cs(e')$ for all $e,e' \in C' \subseteq C$.
    What remains to be shown is that $C'$ is downward-closed in $(\K,\preceq)$.
    Since $C$ is downward-closed, $\dc_{\po_\K}(e) = \dc_{\po_C}(e)$ for all $e \in C$. 
    Moreover, $\lo$ is well-founded by Lemma~\ref{Kwfd}.
    Consequently, by Lemma~\ref{rem-pastlo}, $\dc_{\po_\K}(w(i)) = \dc_{\po_C}(w(i)) = \{e\in \K \mid e \preceq_C w_i \} \subseteq \{w(1), \ldots , w(i)\} \subseteq C'$ for all $i \in \dom(w)$. 
    Hence, $C'$ is downward-closed in $(\K,\preceq)$ thus a finite conversation in $\Ka$. 
\\
    Consider now $\po_{C'} = (C',\preceq_{C'})$ and define $\sqsubseteq_w = \{(w(i),w(j)) \mid i,j \in \dom(w), i \leq j \}$.
    Clearly, $\sqsubseteq_w$ is a linear order on $C'$. 
    Moreover, for all $c,d \in C'$, we have $c \sqsubseteq_w d$ if and only if $c \leq_\lo d$. 
    Hence  $\sqsubseteq_w = \leq_\lo \cap \; (C' \times C')$. 
    Thus, since  $\lo$ is a linear extension of $\preceq_C$, we have $\preceq_C' \; = \; \preceq_C \cap \; (C' \times C') \subseteq \; \leq_\lo \cap \; (C' \times C') = \; \sqsubseteq_w$. 
    Consequently, $\lo' = (C',\sqsubseteq_w) \in \lex(\po_{C'})$. 
    As it is easy to see that $w = \sigma_{\lo'}$, we are done.
    
\smallskip
(if direction)
    Let $C' \subseteq C$ be a finite conversation in $\Ka$ such that $w \in \wdspo(\po_{C'})$.  
    Let $\lo' = (C',\leq_{\lo'}) \in \lex(\po_{C'})$ be such that $w = \sigma_{lo'}$. 
    Consider $\sqsubseteq_C \; = \; \leq_{\lo'} \cup \preceq_C \cup \{(c,d) \mid c \in C', d \in C\setminus C'\}$, the relation over $C$ capturing the linear order underlying $w$, combined with the partial order $\preceq_C$ induced by $C$, and extended in such a way that all message events in $C'$ precede all other message events in $C$. 

    First, we establish that $\sqsubseteq_C$ is a partial order on $C$. 
\\   
    Reflexivity of $\sqsubseteq_C$ follows immediately from the reflexivity of $\preceq_C$. 
\\
    To establish  transitivity of $\sqsubseteq_C$, we first observe that there do not exist $d \in C\setminus C'$ and $e \in C'$ such that $d \sqsubseteq_C e$.
    Indeed, suppose to the contrary that $d \in C\setminus C'$ and $e \in C'$ are such that $d \sqsubseteq_C e$. 
    Then by the definition of $\sqsubseteq_C$ it must be the case that $d \preceq_C e$. 
    However, $C'$, as a conversation in $\K$, is downward-closed in $(\Ka,\preceq)$.
    Since $(C,\preceq_C)$ is the sub-poset in $(\Ka,\preceq)$ induced by $C$, $C'$ is also downward-closed in $(C,\preceq_C)$. 
    So, $d \in C'$, a contradiction.
    Now, let $a,b,c \in C$ be such that $a \sqsubseteq_C b$ and $b \sqsubseteq_C c$. 
    Then, from the above it follows that if $a \in C \setminus C'$ then also $b \in C \setminus C'$. 
    Similarly, if $b \in C \setminus C'$ then also $c \in C \setminus C'$. 
    Furthermore, if $a \in C'$ and $c \in C\setminus C'$, then $a\sqsubseteq_C c$ by definition. 
    Consequently, there are only two cases left to investigate, viz., $a,b,c \in C'$ and $a,b,c \in C\setminus C'$. 
    In the former case, we first observe that $\preceq_C \cap \; (C' \times C') \subseteq \;\leq_{\lo'}$ because $\lo'$ is a linear extension of $\po_{C'}$ and $\preceq_{C'} \; = \; \preceq_C \cap \; (C' \times C')$. 
    Thus $a \sqsubseteq_C b$ and $b \sqsubseteq_C c$ with $a,b,c \in C'$ together imply $a \leq_{\lo'} b$ and $b \leq_{\lo'} c$. 
    Then $a \leq_{\lo'} c$ by the transitivity of $\leq_{\lo'}$, and so $a \sqsubseteq_C c$. 
    In the latter case, it follows from $a \sqsubseteq_C b$ and $b \sqsubseteq_C c$ that    $a \preceq_C b$ and $b \preceq_C c$.  
    Hence by the transitivity of $\preceq_C$, we have $a \preceq_C c$ and so $a \sqsubseteq_C c$. 
\\
    To prove that $\sqsubseteq_C$ is antisymmetric, assume that we have $a,b \in C$ such that $a \sqsubseteq_C b$ and $b \sqsubseteq_C a$. 
    If both $a,b \in C'$, then $a \leq_{\lo'} b$ and $b \leq_{\lo'} a$. 
    Hence $a = b$ by the antisymmetry of $\leq_{\lo'}$.
    Similarly, if both $a,b \in C \setminus C'$, then $a \preceq_C b$ and $b \preceq_C a$, which implies that $a = b$ by the antisymmetry of $\preceq_C$. 
    If $a \in C'$ and $b \in C \setminus C'$, then it must be that $b \preceq_C a$. 
    Then, as above, since $C'$ is a conversation in $\Ka$, it is downward-closed in $(\K,\preceq)$ and thus also $b \in C'$, a contradiction.

   Next, we consider a linear extension $\lo = (C,\leq_\lo)$ of $(C,\sqsubseteq_C)$. 
    Let $v = \sigma_\lo$. 
    We prove that $w \in \pref(v)$. 
\\
    Suppose $w \not\in \pref(v)$. 
    Let then $i \in \dom(w)$ be such that $w(j) = v(j)$ for all $j < i$, and $w(i) \neq v(i)$.  
    Let us assume first that $v(i) \in C'$. 
    Then $v(i) = w(k) $ for some $k > i$, and $w(k) \leq_\lo w(i)$. 
    Since $w(i) \leq_{\lo'} w(k)$ also $w(i) \sqsubseteq_C w(k)$ which implies that $w(i) \leq_\lo w(k)$ and thus $w(i) = w(k)$, a contradiction.
    Thus it must be the case that $v(i) \in C \setminus C'$. 
    Then $w(i) \sqsubseteq_C v(i)$ by the definition of $\sqsubseteq_C$. 
    Furthermore, $v(i)\leq_\lo w(i)$ because $v = \sigma_\lo$.
    Hence $v(i) \sqsubseteq_C w(i)$.
   By the antisymmetry of $\subseteq_C$, $v(i) = w(i)$, again a contradiction.
    Therefore $w \in \pref(v)$ and thus $w \in \Pref(\wdspo((C,\sqsubseteq_C)))$.

    Since $\preceq_C \; \subseteq \; \sqsubseteq_C$ implies $\lex((C,\sqsubseteq_C)) \subseteq \lex(\po_C)$, it follows that $w \in \Pref(\wdspo(\po_C))$ which concludes the proof. 
\qed
\end{proof}

It is interesting to observe here that every downward closed subset (`prefix') of a conversation is a subconversation of $C$. 
Conversely, any finite subconversation $C'$ of a conversation $C$ is a prefix of $C$ because it is downward closed. 
In other words, the finite subconversations of conversation $C$ are its prefixes and by the above lemma, the prefixes of the sequences defined by $C$ are the sequences defined by the prefixes of $C$.

As an immediate consequence of Lemma~\ref{Lem-prefseq=seqpref} and Lemma~\ref{Lem-seqlimpref}, we have that for every infinite conversation, the infinitary language defined by its induced partial order, can be obtained as the limit of the union of the finitary languages defined by the partial orders induced by its finite subconversations. 

\begin{theorem}
\label{Thm-limConv}
Let $C$ be an infinite conversation in $\Ka$. 
Then 
$\wdspo(\po_C) =$
\\
$ 
\llim(\bigcup\{\wdspo(\po_{C'}) \mid C'  
\text{ a finite subconversation of } C \})$.
\qed
\end{theorem}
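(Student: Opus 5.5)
The plan is to combine Lemma~\ref{Lem-seqlimpref} with Lemma~\ref{Lem-prefseq=seqpref}, after checking that the hypotheses of the former are met.

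First, I verify that $\po_C = (C,\preceq_C)$ is a well-founded partial order on an infinite set. The set $C$ is infinite by assumption, and $\preceq_C$ is a partial order because it is the sub-poset of $\po_\K$ induced by $C$, and $\po_\K$ is a partial order by Lemma~\ref{Lsoundpo} since $\Ka$ is sound. For well-foundedness: all elements of $C$ share the same case, so by Lemma~\ref{Kwfd} every linear extension of $\po_C$ is well-founded. By Szpilrajn's theorem at least one linear extension exists, and a partial order with a well-founded linear extension is itself well-founded. Lemma~\ref{Lem-seqlimpref} then applies and gives
\[
\wdspo(\po_C) = \llim(\Pref(\wdspo(\po_C))).
\]

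Second, I identify $\Pref(\wdspo(\po_C))$ with $\bigcup\{\wdspo(\po_{C'}) \mid C' \text{ a finite subconversation of } C\}$. Lemma~\ref{Lem-prefseq=seqpref} says exactly that a word $w \in \K^*$ lies in $\Pref(\wdspo(\po_C))$ if and only if $w \in \wdspo(\po_{C'})$ for some finite conversation $C' \subseteq C$. The only point to check is the meaning of ``finite subconversation of $C$'': I read it as a finite conversation in $\Ka$ contained in $C$. This agrees with the remark after Lemma~\ref{Lem-prefseq=seqpref}, since downward-closedness in $(\K,\preceq)$ and in $(C,\preceq_C)$ coincide for subsets of the downward-closed set $C$.

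It also has to be confirmed that both sides are sets of finite words, so the equality of languages is literal. Each $\wdspo(\po_{C'})$ with $C'$ finite consists of finite words, because $\sigma_{\lo'}$ enumerates the finite set $C'$. Substituting the union for $\Pref(\wdspo(\po_C))$ inside $\llim$ then yields the theorem.

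There is no real obstacle, since the substantive work is in the two lemmas. The step needing the most care is the well-foundedness hypothesis for Lemma~\ref{Lem-seqlimpref}, which rests on Lemma~\ref{Kwfd} together with the existence of a linear extension.
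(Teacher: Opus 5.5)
Your proposal is correct and is essentially the paper's argument. The paper presents Theorem~\ref{Thm-limConv} as an immediate consequence of Lemma~\ref{Lem-seqlimpref}, applied to $\po_C$ (well-founded via Lemma~\ref{Kwfd}, as you argue), together with Lemma~\ref{Lem-prefseq=seqpref}, which identifies $\Pref(\wdspo(\po_C))$ with the union of the languages of the finite subconversations of $C$; your explicit checks of these hypotheses and of what ``finite subconversation'' means spell out what the paper leaves implicit.
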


Note that for every infinite conversation $C$, its language $\wdspo(\po_C)$ consists of infinite words only. 
On the other hand, for a finite conversation $C'$,  $\wdspo(\po_C')$ is a finite set comprising only finite words. 
By Theorem~\ref{Thm-limConv}, all sequences defined by the linear extensions of an infinite conversation are approximated by the infinite set consisting of the finite sequences of the linear extensions of its finite subconversations.
Thus, we can now define a finitary language for the message log $\Ka$ as the union of all languages of all its finite conversations (thus covering all cases). 

\begin{definition}\label{Def-EL}
The event language of $\Ka$ is the set 
$\Ev(\Ka) = \bigcup\{\wdspo(\po_C) \mid C$ $\text{a finite conversation in } \Ka\}$ comprising all sequences of message events defined by the linear extensions of the finite conversations in $\Ka$. 
    \qed
\end{definition}

By Definition~\ref{Def-EL} and Lemma~\ref{Lem-prefseq=seqpref}, $\Ev(\Ka)$ is a finitary language consisting of all prefixes of all words from the languages defined by the partial orders of its conversations. This observation leads to 

\begin{theorem}\label{Thm-Seq=finpref}
    $\Ev(\Ka)$ is a finitary, prefix-closed language.  
\qed
\end{theorem}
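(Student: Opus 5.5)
Finitariness comes straight from the definitions. For a finite conversation $C$, the poset $\po_C$ is finite. Every linear extension $\lo$ of it is then a linear order on the finite set $C$, so $\sigma_\lo$ is a finite word of length $\card(C)$. Hence $\wdspo(\po_C) \subseteq \K^*$, and taking the union over all finite conversations in $\Ka$ gives $\Ev(\Ka) \subseteq \K^*$.

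For prefix-closedness I would show $\Pref(\Ev(\Ka)) \subseteq \Ev(\Ka)$. Let $u \in \Pref(\Ev(\Ka))$. Then $u \in \pref(w)$ for some $w \in \wdspo(\po_C)$ with $C$ a finite conversation in $\Ka$, so $u \in \Pref(\wdspo(\po_C))$. Lemma~\ref{Lem-prefseq=seqpref} is stated for arbitrary conversations, so it applies to $C$. Its only-if direction gives a finite conversation $C' \subseteq C$ with $u \in \wdspo(\po_{C'})$. Since $C'$ is a finite conversation in $\Ka$, Definition~\ref{Def-EL} yields $u \in \Ev(\Ka)$.

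There is one degenerate case to check. The empty word $\lambda$ is a prefix of every word, and the empty set is trivially a finite conversation. Since $\wdspo((\emptyset,\emptyset)) = \{\lambda\}$, we get $\lambda \in \Ev(\Ka)$. This is consistent with the lemma's argument, where $C' = \ab(\lambda) = \emptyset$.

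I expect no real obstacle here, because all the substantive work (downward-closedness of $\ab(u)$ and the construction of the restricted linear extension) was already done in Lemma~\ref{Lem-prefseq=seqpref}. The only point needing care is that the lemma is invoked with a finite $C$, which its hypotheses allow.
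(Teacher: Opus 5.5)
Your proof is correct and follows the paper's own route. Finitariness is immediate because each finite conversation $C$ yields only words of length $\card(C)$, and prefix-closedness comes from applying the only-if direction of Lemma~\ref{Lem-prefseq=seqpref} to a finite conversation together with Definition~\ref{Def-EL}; your explicit check of $\lambda$ via the empty conversation is a harmless detail that the paper leaves implicit.
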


%%%%%%%%%%%%%%%%%%%%%%%%%%%%%%%%%%%%%%%%%%%%%%%%%%%%%
%%%%%%%%%%%%%%%%%%%%%%%%%%%%%%%%%%%%%%%%%%%%%%%%%%%%%

\section{Identifying actions in a message log}
\label{Sect-DCA} 

%%%%%%%%%%%%%%%%%%%%%%%%%%%%%%%%%%%%%%%%%%%%%%%%%%%%%
%
To convert message log $\Ka$ into a global event log with an associated distributed communicating alphabet, message events have to be interpreted as  executions of actions of local processes. 
In this section, we investigate how to consistently identify actions in a message log, leading to a labelling of events with action names that can be used to define a distributed communicating alphabet (Definition~\ref{Def-DCA} in Appendix~\ref{AppSInets}). 

Recall that $\Ka = (\K,\Att)$ with $\Att =(\langle \Cs \rangle, \proc,\tm, \langle\tp\rangle )$. 
The number of processes (the dimension) of $\Ka$ is $n \geq 2$ and $\{ \proc(e) \mid e \in \K \}=[n]$. 
In what follows, we let 
$\K_{i}=\{e \in \K \mid \proc(e) = i\}$ where $i \in [n]$, be the set of message events in $\K$ that belong to the $i$-th process. 
Note that, for all $i \in [n]$, $\K_{i} \neq \emptyset$ because $\{ \proc(e) \mid e \in \K \}=[n]$.
Let 
$\K_{\out}=\K \; \cap \; (\I \times \{\out\})$ and $\K_{\inp}=\K \; \cap \; (\I \times \{\inp\})$ be the set of output events and the set of input events of $\K$, respectively.
Since by Definition~\ref{D07msgB0}, 
$e \in \K$ if and only if $\overline{e} \in\K$, we have $\K_{\out} \neq \emptyset$ and $\K_{\inp} \neq \emptyset$. 
Thus, both $\{\K_1, \ldots , \K_n\}$ and $\{\K_\out, \K_\inp\}$ are partitions of $\K$.
Finally, we rephrase the complement notation introduced in Section~\ref{Sect-MessLog}, as a function: 

\begin{definition}\label{Def-Mex}
The \emph{message exchange function} of $\K$ is the function $\cp: \K \rightarrow \K$ defined by $\cp(e)=\overline{e}$, for all $e \in \K$. 
\qed
\end{definition}

The function $\cp$ is clearly a complement function.
It also preserves types.
This follows from 
$\langle \tp \rangle (\cp(e)) = \langle \tp \rangle (\overline{e}) = \tp (\msg(\overline{e})) = \tp (\msg(e)) = \langle \tp \rangle (e)$.

\medskip
Our aim is to associate a distributed communicating alphabet with $\Ka$. 
So far, we have several entities defined by $\Ka$ which together closely resemble a $\DA$ (\cf Definition~\ref{Def-DCA}), namely the sets $\K_1, \ldots, \K_n$ and $\K_{\out},\K_{\inp}$, with both $[\K_1, \ldots, \K_n]$ and $[\emptyset,\K_{\out},\K_{\inp}]$ consisting of mutually disjoint sets such that $\bigcup_{i \in [n]} \K_i=\K=\K_{\out} \cup \K_{\inp}$ and for all $i \in [n]$, $\K_{i} \neq \emptyset$, as well as the functions $ \langle \tp \rangle$ and $\cp$. 
The function $\langle \tp \rangle$ assigns message types to message events and the complement function $\cp$ preserves message types. 
The tuple $([\K_1, \ldots, \K_n], [ \emptyset,\K_{inp},\K_{out}],\langle \tp \rangle, \cp)$ is however not an $n$-DCA in case $\K$ is an infinite set. 
More fundamentally, message logs consist of unique message events rather than multiple occurrences of input or output actions. 
Thus we propose labelling functions to identify message events that represent different occurrences of the same action of a local process. 
These labelling functions should preserve the partitions defined by processes and input and output events.
Furthermore, whenever events are identified by the labelling function, then so are their complements. 

\begin{definition}\label{Dafngen} 
An \emph{action function} (for $\K$) is a function 
$f : \K \rightarrow D$ with $D$ a set of (action) labels, which satisfies the following two properties:  
\\ 
\emph{Disjointness} (\emph{DP}): 
both $\{ f(\K_i) \mid i \in [n] \}$  and $\{f(\K_{\inp}),f(\K_{\out})\}$ are partitions of $f(\K)$;  
\\ 
\emph{Complement} (\emph{CP}): for all message events $e,e' \in \K$, if $f(e)=f(e')$, then $f(\cp(e))=f(\cp(e'))$. 
\qed
\end{definition} 

\begin{example}\label{Ex-ActFn} 
Let $\Ka=(\K,\Att)$ be a message log, $n \geq 2$ its dimension, and $f$ an action function for $\K$. 
This function maps certain output events of process $i$ to the action label $\mbox{\bf{s}}$ as depicted in Figure~\ref{PctionFn01}. 
We assume that $(\m_1,\out),$ $\ldots,$ $(\m_\ell,\out)$ are \emph{all} (here a finite number $\ell \geq 1$) output events of process $i$ that are mapped to action label $\mbox{\bf{s}}$. 
By the disjointness property (DP) of $f$, there are no message events belonging to another process that are also mapped to $\mbox{\bf{s}}$. 
Now consider $(\m_1,\inp),$ $\ldots,$ $(\m_\ell,\inp)$, the complements under $\cp$ of $(\m_1,\out),$ $\ldots,$ $(\m_\ell,\out)$, respectively. 
By the complement property (CP) of $f$, each of $(\m_1,\inp),$ $\ldots$ $,(\m_\ell,\inp)$ is mapped to the same action label, here $\mbox{\bf{r}}$. 
Moreover, again by DP, these input events all belong to the same process $j$ with $j \neq i$. 
Thus, as depicted in Figure~\ref{PctionFn01}, actions $\mbox{\bf{s}}$ and $\mbox{\bf{r}}$ represent a communication channel between process $i$ and process $j$ with $\mbox{\bf{s}}$ representing the sending of an $m_i$, $i \in [\ell]$ and $\mbox{\bf{r}}$ representing the receiving of an $m_i$, $i \in [\ell]$. 

\begin{figure}[!ht]
\begin{center}

\begin{tikzpicture}[event/.style={draw,fill,circle,inner sep=0pt,minimum size=0.8mm}] 
%% --
%% x-as links=1 rechts=6 midden=3.5
%% y-as midden-3.5

%% proces links
\node at (1,6.7) {Process $i$} ;

\draw (1,3.5) ellipse (0.8 and 3);
\draw (1,3.5) ellipse (0.6 and 2);

\foreach \x in {1,2,...,12} 
    \node[event] at (1,\x/2+0.25) {} ;
\node[event,label=below:{$(m_1,\out)$}] (1out) at (1,5+0.25) {} ;
\node[event,label=above:{$(m_\ell,\out)$}] (nout) at (1,2-0.25) {} ;

%% proces rechts
\node at (6,6.7) {Process $j$} ;

\draw (6,3.5) ellipse (0.8 and 3);
\draw (6,3.5) ellipse (0.6 and 2);
\foreach \x in {1,2,...,12} 
    \node[event] at (6,\x/2+0.25) {} ;
\node[event,label=below:{$(m_1,\inp)$}] (1inp) at (6,5+0.25) {} ;
\node[event,label=above:{$(m_\ell,\inp)$}] (ninp) at (6,2-0.25) {} ;

%% out-in pijlen
\draw[mexmap] (1out) to[bend left=10] node[above] {\cp} (1inp) ;
\draw[mexmap] (nout) to[bend left=-10] (ninp) ;

%% cylinder
\coordinate[label={[label distance=2]left:\textbf{s}}] (sa) at (3.5-0.7,3.5) ;
\coordinate[label={[label distance=2]right:\textbf{r}}] (ra) at (3.5+0.7,3.5) ;
\draw[very thick,dashed,-Latex] (sa) to (ra) ;
%% https://tex.stackexchange.com/questions/86535/generate-simple-cylindrical-shape-with-text-in-latex-tikz
%% arc verschuiven want beginpunt op arc, niet centrum ellips
\draw (3.5-0.7,3.5) ellipse (0.15 and 0.4);
\draw (3.5+0.7,3.5-0.4) arc (-90:90:0.15 and 0.4);
\draw[dotted] (3.5+0.7,3.5+0.4) arc (90:270:0.15 and 0.4);
\draw (3.5-0.7,3.5+0.4) -- (3.5+0.7,3.5+0.4) ;
\draw (3.5-0.7,3.5-0.4) -- (3.5+0.7,3.5-0.4) ;
\fill [opacity=0.2] (3.5-0.7,3.5+0.4) -- (3.5+0.7,3.5+0.4)  arc (90:-90:0.15 and 0.4) -- (3.5-0.7,3.5-0.4)  arc (-90:90:0.15 and 0.4) ;
\tikzset{farrow/.style={-Latex}}
%% met de hand aangrijpingspunt ellips aangepast
% top = centrum ellips (1,3.5) + (0,2)  
\draw[farrow] (1+0.2, 3.5+2-0.1) to node[above,pos=0.7] {$f$} (3.5-0.7,3.5+0.4);
\draw[farrow] (1+0.2, 3.5-2+0.1) to node[below,pos=0.7] {$f$} (3.5-0.7,3.5-0.4);
% rechter ellips
\draw[farrow] (6-0.2, 3.5+2-0.1) to node[above,pos=0.7] {$f$} (3.5+0.7,3.5+0.4);
\draw[farrow] (6-0.2, 3.5-2+0.1) to node[below,pos=0.7] {$f$} (3.5+0.7,3.5-0.4);
\end{tikzpicture}

\caption{Action function $f$.}
\label{PctionFn01}
\end{center}
\end{figure}

\noindent
Note that one could also interpret the situation depicted in Figure~\ref{PctionFn01} as resulting from a labelling of the messages $\m_1,$ $\ldots,$ $\m_\ell$ by say $\mbox{\bf{a}}$, which labelling is then lifted to the output events $(\m_1,\out),$ $\ldots,$ $(\m_\ell,\out)$ as $\mbox{\bf{s}}$ (``send $\mbox{\bf{a}}$'' or $(\mbox{\bf{a}},\out)$) and to the input events $(\m_1,\inp),$ $\ldots,$ $(\m_\ell,\inp)$ as $\mbox{\bf{r}}$ (``receive $\mbox{\bf{a}}$'' or $(\mbox{\bf{a}}, \inp)$). 
\qed
\end{example} 

As suggested in Example~\ref{Ex-ActFn}, another approach could be to start from a labelling of the messages and then lift this labelling to message events by a reference to their role as output or input event.
This gives rise to the following definition. 

\begin{definition}\label{Dinhoat-2}
Let $g : \msg(\K) \rightarrow D$ where $D$ is a set of (message) labels. Then the \emph{event extension} $g_{\ee}: \K \rightarrow D \times \{\inp,\out\}$ of $g$ to message events, is defined by $g_{\ee} ((\m,\out))= (g(\m), \out)$ and $g_{\ee}((\m,\inp)) = (g(\m),\inp)$ for all $\m \in \msg(\K)$.
\qed
\end{definition}

As our next result shows, a message labelling function defines through its event extension an action function, provided the message labels agree with the communication channels between the processes: two messages that get the same label should have the same sending and receiving processes. 
This is formalised in the next definition. 

\begin{definition}\label{Dchprop-bis} 
(1) Let $\m \in \msg(\K)$.
Then the \emph{send/receive pair} of $\m$ is 
$\Ch(\m) = 
(\proc((\m,\out)), \proc((\m,\inp)))$. 
\\
(2)
Let $g : \msg(\K) \rightarrow D$ for an alphabet $D$. 
Then $g$ has the \emph{channel property} if, for all $\m,\m' \in \msg(\K)$, $g(\m) = g(\m')$ implies $\Ch(\m)=\Ch(\m')$. 
\qed
\end{definition}

\begin{theorem}\label{Thm-Chp}
Let $g : \msg(\K) \rightarrow D$ for some set $D$. 
Then $g_{\ee}$ is an action function if and only if $g$ has the channel property. 
\end{theorem}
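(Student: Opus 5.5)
We prove the two directions separately, unfolding Definition~\ref{Dafngen} for $f = g_{\ee}$. Note first that (CP) holds for $g_{\ee}$ regardless of $g$: if $g_{\ee}(e) = g_{\ee}(e')$ with $e = (\m,\x)$ and $e' = (\m',\x')$, then $g(\m) = g(\m')$ and $\x = \x'$. Hence $g_{\ee}(\cp(e)) = (g(\m),\overline{\x}) = (g(\m'),\overline{\x'}) = g_{\ee}(\cp(e'))$. Likewise, the second half of (DP) is automatic. Indeed, $g_{\ee}(\K_\out) \subseteq D \times \{\out\}$ and $g_{\ee}(\K_\inp) \subseteq D \times \{\inp\}$ are disjoint. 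Both are non-empty since $\K_\out,\K_\inp \neq \emptyset$, and their union is $g_{\ee}(\K)$. So the theorem reduces to showing that $\{g_{\ee}(\K_i) \mid i \in [n]\}$ is a partition of $g_{\ee}(\K)$ if and only if $g$ has the channel property. Non-emptiness (from $\K_i \neq \emptyset$) and covering are immediate, so only pairwise disjointness is at stake. Here I read ``partition'' as requiring that distinct indices $i \neq j$ give disjoint sets; I take the intended reading to be that each label belongs to a single process.

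(if) Assume the channel property. Suppose $g_{\ee}(e) = g_{\ee}(e')$ with $e \in \K_i$ and $e' \in \K_j$. Write $e = (\m,\x)$ and $e' = (\m',\x)$; the second components agree because the labels are equal. Then $g(\m) = g(\m')$, so $\Ch(\m) = \Ch(\m')$. If $\x = \out$, the first components give $i = \proc((\m,\out)) = \proc((\m',\out)) = j$. If $\x = \inp$, the second components give $i = j$ in the same way. Hence the sets $g_{\ee}(\K_i)$ are pairwise disjoint, and (DP) holds.

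(only if) Assume $g_{\ee}$ is an action function, and let $g(\m) = g(\m')$. Then $g_{\ee}((\m,\out)) = g_{\ee}((\m',\out))$. By disjointness of the process blocks in (DP), this common label lies in exactly one block $g_{\ee}(\K_i)$, which forces $\proc((\m,\out)) = \proc((\m',\out))$. The same argument applied to the input events gives $\proc((\m,\inp)) = \proc((\m',\inp))$. Together these give $\Ch(\m) = \Ch(\m')$.

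The only delicate point is this use of (DP) in the only-if direction. We must justify that equal labels force equal processes, and for that we need the blocks $g_{\ee}(\K_i)$, indexed by $i$, to be pairwise disjoint as sets and not merely as a set of sets. I will make this explicit, noting that the partition $\{g_{\ee}(\K_i)\mid i\in[n]\}$ is intended as indexed by the $n$ processes, as in the DCA definition. Everything else is routine unfolding of definitions.
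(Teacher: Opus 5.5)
Your proof is correct and takes essentially the same route as the paper: the input/output part of DP and CP are checked directly from the form of $g_{\ee}$, and disjointness of the process blocks is obtained by reading $\proc$ off the first component of $\Ch$ (for $\out$) or the second (for $\inp$), with the only-if direction using that disjointness exactly as the paper does. Your remark that $\{g_{\ee}(\K_i)\mid i\in[n]\}$ must be read as an indexed partition (blocks disjoint for $i\neq j$) is well taken, and the paper uses this reading only implicitly; under the literal set-of-sets reading, two processes exchanging messages $\m_1,\m_2$ in opposite directions with $g(\m_1)=g(\m_2)$ give $g_{\ee}(\K_1)=g_{\ee}(\K_2)$, so DP would hold while the channel property fails, and the only-if direction would be false.
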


\begin{proof} 
(only if)
    Let  $\m, \m' \in \msg(\K)$ be such that $g(\m)=g(\m')$. 
    It follows that $g_{\ee}((\m, \out)) = (g(\m),\out) = (g(\m'),\out) = g_{\ee}((\m', \out))$ and, similarly, that $
    g_{\ee}(\m, \inp) = g_{\ee}(\m', \inp)$. 
    Since $g_{\ee}$ is an action function, it satisfies DP, the disjointness property. 
    Consequently, we have $\proc((\m,\out)) = \proc((\m',\out))$ and $\proc((\m,\inp)) = \proc((\m',\inp))$. 
    Hence, $\Ch(\m)=\Ch(\m')$.  
    We conclude that $g$ has the channel property. 

    (if)
    By assumption, $g$ has the channel property. 
    Firstly, we establish the disjointness property DP of $g_{\ee}$. 
    By definition, $g_{\ee}(\K_\inp) = g(\msg(\K)) \times \{\inp\}$ and 
    $g_{\ee}(\K_\out) = g(\msg(\K)) \times \{\out\}$.
    Since $\K$ is not empty, neither $\K_\inp$ nor $\K_\out$ are.
    Hence, $\{g_{\ee}(\K_\inp),g_{\ee}(\K_\out)\}$ is a partition of $g_{\ee}(\K)$. 
    Moreover, since $\{\K_i \mid i \in [n]\}$ is a partition of $\K$, we have that $\bigcup_{i\in [n]} g_{\ee}    
    (\K_i)= g_{\ee}(\K)$, and $g_{\ee}(\K_i) \neq \emptyset$, for all $i \in [n]$. 
    What remains to be shown is that the $g_{\ee}(\K_i)$, $i\in [n]$, are pairwise disjoint. 
    So, assume that $e,e' \in \K$ are such that $g_{\ee}    (e) = g_{\ee}(e')$. 
    Thus either $e$ and $e'$ are both output events or they are both input events. 
    Moreover, $g(\msg(e)) = g(\msg(e'))$. 
    Since $g$ has the channel property, the latter implies that $\Ch(\msg(e))=\Ch(\msg(e'))$.
    In other words, $\proc(\msg(e),\out) = \proc(\msg(e'),\out)$ and $\proc(\msg(e),\inp) = \proc(\msg(e'),\inp)$.
    Consequently, $\proc(e) = \proc(e')$. 
%
%    Hence, there exist no $i,j \in [n]$ with $i \neq j$ and $e \in \K_i$ and $e' \in K_j$, such that $g_{\ee}(e) = g_{\ee}(e')$. 
% 
    From this it follows that the $g_{\ee}(\K_i)$, $i \in [n]$ are pairwise disjoint. 
\\
    We now turn to the complement property CP of $g_{\ee}$. 
    Assume $g_{\ee}(e)= g_{\ee}(e')$, for some $e,e' \in \K$.
    By the definition of $g_{\ee}$, either $e$ and $e'$ are both output events or they are both input events. 
    We set $e = (\msg(e),\x) $ and  $e' = (\msg(e'),\x) $ where $\x \in \{\inp, \out\}$. 
    Moreover, $g(\msg(e)) = g(\msg(e'))$. 
    It follows that $g_{\ee}(\cp (e)) = (g(\msg(\cp(e))),\overline{\x}) = (g(\msg(e)),\overline{\x})$. 
    Similarly, $g_{\ee}(\cp(e')) = (g(\msg(\cp(e'))),\overline{x}) = (g(\msg(e')),\overline{x})$. 
    Since $g(\msg(e)) = g(\msg(e'))$, we conclude that $g_{\ee}(\cp(e)) = g_{\ee} (\cp(e'))$, as required. 
\qed
\end{proof}

\begin{example}
\label{Ex-ActFn2} 
(Ex.~\ref{Ex-OTC02} \ctd) 
Consider a record of exchanges in the OTC settlement process in the form of 
message log 
%$\Ka^2=(\K^2,\Att^2)$, 
$\Ka'=(\K',\Att')$ 
with 
%$\K^2=\K^1 \cup \{(3,\out),(3,\inp),(4,\out),$ $(4,\inp) \}$, 
$\K'= \{(\m_i,\out),(\m_i,\inp) \mid i \in [4]\}$, 
$\Att' =(\langle \Cs' \rangle,$ $ \proc',$ $\tm', \langle \tp' \rangle )$. 
We assume that there are two cases $T1$ and $T2$ such that $\Cs'(\m_1)=\Cs'(\m_2)=T1$ and $\Cs'(\m_3)=\Cs'(\m_4)=T2$. 
Messages $m_1$, $m_3$ are Settlement Instructions with   $\tp'(\m_1)= \tp'(\m_3) = SI$, while messages $m_2$, $m_4$ are Settlement Confirmations with $\tp'(\m_2) = \tp'(\m_4) =SC$. 
\\
Message events $(\m_1,\out),(\m_2,\inp)$ and $(\m_3,\out),(\m_4,\inp)$ belong to process $\Inv$ and 
message events $(\m_1,\inp),(\m_2,\out)$ and $(\m_3,\inp),(\m_4,\out)$ to process $\Cus$, see Figure~\ref{Pex09}. 
Hence, we have the following send/receive pairs: 
$\Ch(\m_1) =  \Ch(\m_3) = (\Inv, \Cus)$ and 
$\Ch(\m_2) =  \Ch(\m_4) = (\Cus, \Inv)$. 

\begin{figure}[!htp]

\centering

\begin{tikzpicture}[event/.style={draw,fill,circle,inner sep=0pt,minimum size=0.8mm},
farrow/.style={-Latex},
         process/.style={densely dashed}  ]
%% x-as links=1 rechts=6 midden=3.5
%% y-as midden-3.5

%% proces links
\node at (0.5,6.7) {\texttt{IF}} ;
\draw[process] (0.5,3.5) ellipse (0.8 and 3);
%% proces rechts
\node at (6.5,6.7) {\texttt{CN}} ;
\draw[process] (6.5,3.5) ellipse (0.8 and 3);
%% BOVEN
\begin{scope}[shift={(0,+1.3)}]
  %% links
  \draw (0.5,3.5) ellipse (0.6 and 1);
  \node at (0.5,3.5+0.30) {\scriptsize $(m_1,\out)$} ;
  \node at (0.5,3.5-0.30) {\scriptsize $(m_3,\out)$} ;
  %% rechts
  \draw (6.5,3.5) ellipse (0.6 and 1);
  \node at (6.5,3.5+0.30) {\scriptsize $(m_1,\inp)$} ;
  \node at (6.5,3.5-0.30) {\scriptsize $(m_3,\inp)$} ;
  %% cylinder pijl binnen
  \coordinate[label={[label distance=2]left:(\textbf{si},\out)}] (sa) at (3.5-0.7,3.5) ;
  \coordinate[label={[label distance=2]right:(\textbf{si},\inp)}] (ra) at (3.5+0.7,3.5) ;
  \draw[very thick,dashed,-Latex] (sa) to (ra) ;
  %% cylinder tekenen
  \draw (3.5-0.7,3.5) ellipse (0.15 and 0.4);
  \draw (3.5+0.7,3.5-0.4) arc (-90:90:0.15 and 0.4);
  \draw[dotted] (3.5+0.7,3.5+0.4) arc (90:270:0.15 and 0.4);
  \draw (3.5-0.7,3.5+0.4) -- (3.5+0.7,3.5+0.4) ;
  \draw (3.5-0.7,3.5-0.4) -- (3.5+0.7,3.5-0.4) ;
  \fill [opacity=0.2] (3.5-0.7,3.5+0.4) -- (3.5+0.7,3.5+0.4)  arc (90:-90:0.15 and 0.4) -- (3.5-0.7,3.5-0.4)  arc (-90:90:0.15 and 0.4) ;
  %% met de hand aangrijpingspunt ellips aangepast
  % linker 
  \draw[farrow] (0.5+0.1, 3.5+1) to node[above,pos=0.7] {$\alpha$} (3.5-0.7,3.5+0.4);
  \draw[farrow] (0.5+0.1, 3.5-1) to node[below,pos=0.7] {$\alpha$} (3.5-0.7,3.5-0.4);
  % rechter ellips
  \draw[farrow] (6.5-0.1, 3.5+1) to node[above,pos=0.7] {$\alpha$} (3.5+0.7,3.5+0.4);
  \draw[farrow] (6.5-0.1, 3.5-1) to node[below,pos=0.7] {$\alpha$} (3.5+0.7,3.5-0.4);
\end{scope}
%% ONDER
\begin{scope}[shift={(0,-1.3)}]
  %% links
  \draw (0.5,3.5) ellipse (0.6 and 1);
  \node at (0.5,3.5+0.30) {\scriptsize $(m_2,\inp)$} ;
  \node at (0.5,3.5-0.30) {\scriptsize $(m_4,\inp)$} ;
  %% rechts
  \draw (6.5,3.5) ellipse (0.6 and 1);
  \node at (6.5,3.5+0.30) {\scriptsize $(m_2,\out)$} ;
  \node at (6.5,3.5-0.30) {\scriptsize $(m_4,\out)$} ;
  %% cylinder pijl binnen
  \coordinate[label={[label distance=2]left:(\textbf{sc},\inp)}] (sa) at (3.5-0.7,3.5) ;
  \coordinate[label={[label distance=2]right:(\textbf{sc},\out)}] (ra) at (3.5+0.7,3.5) ;
  \draw[very thick,dashed,-Latex] (ra) to (sa) ;
  %% cylinder tekenen
  \draw (3.5-0.7,3.5) ellipse (0.15 and 0.4);
  \draw (3.5+0.7,3.5-0.4) arc (-90:90:0.15 and 0.4);
  \draw[dotted] (3.5+0.7,3.5+0.4) arc (90:270:0.15 and 0.4);
  \draw (3.5-0.7,3.5+0.4) -- (3.5+0.7,3.5+0.4) ;
  \draw (3.5-0.7,3.5-0.4) -- (3.5+0.7,3.5-0.4) ;
  \fill [opacity=0.2] (3.5-0.7,3.5+0.4) -- (3.5+0.7,3.5+0.4)  arc (90:-90:0.15 and 0.4) -- (3.5-0.7,3.5-0.4)  arc (-90:90:0.15 and 0.4) ;
  %% met de hand aangrijpingspunt ellips aangepast
  % linker 
  \draw[farrow] (0.5+0.1, 3.5+1) to node[above,pos=0.7] {$\alpha$} (3.5-0.7,3.5+0.4);
  \draw[farrow] (0.5+0.1, 3.5-1) to node[below,pos=0.7] {$\alpha$} (3.5-0.7,3.5-0.4);
  % rechter ellips
  \draw[farrow] (6.5-0.1, 3.5+1) to node[above,pos=0.7] {$\alpha$} (3.5+0.7,3.5+0.4);
  \draw[farrow] (6.5-0.1, 3.5-1) to node[below,pos=0.7] {$\alpha$} (3.5+0.7,3.5-0.4);
\end{scope}
\end{tikzpicture}
\label{PActFn2b}
%\end{subfigure}
\caption{$\alpha = g_{\ee}$ is an action function.}
\label{Pex09}
\end{figure}

\noindent
Let $g : \msg(\K') \rightarrow \{\mbox{\bf{si}}, \mbox{\bf{sc}}\}$ be a message labelling function, where $\mbox{\bf{si}}$ and $\mbox{\bf{sc}}$ are message labels. 
It is defined by 
$g(\m_1) = g(\m_3) = \mbox{\bf{si}}$ and 
$g(\m_2) = g(\m_4) = \mbox{\bf{sc}}$. 
Hence $g$ has the channel property and $g_{\ee}$, the event extension of $g$, is an action function, as can be seen in Figure~\ref{Pex09}. 
\\
Indeed, 
$g_{\ee}$ satisfies (DP) because 
$g_{\ee}\{(\m_1,\out),(\m_3,\out),(\m_2,\inp),(\m_4,\inp)\}$ $=$ 
$\{(\mbox{\textbf{si}},\out), (\mbox{\textbf{sc}},\inp)\}$ is disjoint with  
$g_{\ee}\{(\m_1,\inp),(\m_3,\inp),(\m_2,\out),(\m_4,\out)\}$ $=$  
$\{(\mbox{\textbf{si}},\inp), (\mbox{\textbf{sc}},\out)\}$. 
\\
Furthermore,  
$g_{\ee}\{(\m_1,\out),(\m_2,\out),(\m_3,\inp),(\m_4,\out)\}$ $=$ 
$\{(\mbox{\textbf{si}},\out), (\mbox{\textbf{sc}},\out)\}$ 
and 
$g_{\ee}\{(\m_1,\inp),(\m_2,\inp),(\m_3,\inp),(\m_4,\inp)\}$ $=$  
$\{(\mbox{\textbf{si}},\inp), (\mbox{\textbf{sc}},\inp)\}$ are disjoint sets. 
Hence, $g_{\ee}$ also satisfies (CP). 
\qed
\end{example}

Returning to the definition of action functions (Definition~\ref{Dafngen}), we demonstrate how the complement property CP makes it possible to associate a complement function with each action function. 
\begin{definition}\label{Def-CPf}
    Let $f: \K \rightarrow D$ be a function that satisfies CP. 
    Then the \emph{complement by} $f$ is the function 
    $\cp_f: f(\K) \rightarrow f(\K)$, defined by $\cp_f(f(e)) = f(\cp(e))$ for all $e \in \K$. 
\qed
\end{definition}

%By Lemma~\ref{Lem-CP}, 

That $\cp_f$ is indeed a function whenever $f$ satisfies CP, follows from the observation that whenever $e,e' \in \K$ are such that $f(e) = f(e')$, then $f(\cp(e)) = f(\cp(e'))$. 

\begin{example}\label{Ex-PP1} 
Consider the message log $\Ka = (\K,\Att)$ such that $(\m_1,\out)$, $(\m_2,\out)$, $(\m_1,\inp)$, $(\m_2,\inp) \in \K$. 
Output events $(\m_1,\out)$ and $(\m_2,\out)$ belong to the same process, say $P_1$.
Hence $(\m_1,\inp)$ and $(\m_2,\inp)$ also belong to a common process $P_2$ and $P_2 \neq P_1$. 
Let $\alpha$ be an action function such that $\alpha((\m_1,\out))=\alpha((\m_2,\out))=a$ and $\alpha((\m_1,\inp))=\alpha((\m_2,\inp))=b$. 
Then $\cp_{\alpha}(a)= \alpha(\cp((\m_1,\out))) = \alpha((\m_1,\inp)) = b$. 
Similarly, $\cp_{\alpha}(b)=a$.
\qed
\end{example}

By the next lemma, $\cp_f$ is a complement function if $f$ is an action function. 

\begin{lemma}\label{Lem-cp=comp}
    Let $f: \K \rightarrow D$ be a function that satisfies CP and such that $\{f(\K_\out),f(\K_\inp)\}$ is a partition of $f(\K)$. 
    Then $\cp_f$ is a complement function. 
\end{lemma}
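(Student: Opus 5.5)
We verify the three defining conditions of a complement function (Section~\ref{Sect-Prelim}) for $\cp_f$, with domain $A = f(\K)$ split as $B = f(\K_\out)$ and $C = f(\K_\inp)$. By assumption $\{f(\K_\out),f(\K_\inp)\}$ is a partition of $f(\K)$, so $A = B \cup C$ with $B$ and $C$ disjoint. We also use that $\cp_f$ is well defined (a total function on $f(\K)$) because $f$ satisfies CP, as observed right after Definition~\ref{Def-CPf}.

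\emph{Step 1: $\cp_f(B) \subseteq C$ and $\cp_f(C) \subseteq B$.} Take $a \in B$. We must avoid assuming that every preimage of $a$ is an output event. Instead, by definition of $B$ we may choose some $e \in \K_\out$ with $f(e) = a$. Then $\cp_f(a) = f(\cp(e)) = f(\overline{e})$, and $\overline{e} \in \K_\inp$. Hence $\cp_f(a) \in f(\K_\inp) = C$. Since $\cp_f$ is well defined, the choice of preimage does not matter. The argument for $a \in C$ is symmetric, using $e \in \K_\inp$ and $\overline{e} \in \K_\out$.

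\emph{Step 2: $\cp_f(\cp_f(a)) = a$ for all $a \in f(\K)$.} Let $a = f(e)$. Then $\cp_f(a) = f(\cp(e))$. Applying the definition of $\cp_f$ to the event $\cp(e) \in \K$ gives $\cp_f(f(\cp(e))) = f(\cp(\cp(e)))$, and $\cp(\cp(e)) = \overline{\overline{e}} = e$. This again relies on well-definedness, since $f(\cp(e))$ may have other preimages.

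There is no real obstacle here. The only subtle point is remembering that the partition hypothesis is what makes $B$ and $C$ disjoint, as the definition of a complement function requires. Without it, an action label could lie in both $f(\K_\out)$ and $f(\K_\inp)$. All other properties follow directly from CP via well-definedness and from $\cp$ being an involution that swaps $\K_\out$ and $\K_\inp$. The general remarks in Section~\ref{Sect-Prelim} then also give that $\cp_f$ is a bijection.
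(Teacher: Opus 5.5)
Your proof is correct and matches the paper's argument. The paper likewise obtains $\cp_f(f(\K_\out)) = f(\K_\inp)$ and $\cp_f(f(\K_\inp)) = f(\K_\out)$ from $\cp$ swapping output and input events, gets $\cp_f(\cp_f(f(e))) = f(\cp(\cp(e))) = f(e)$, and uses the partition hypothesis for disjointness; you differ only in proving just the inclusions the definition requires and in stating explicitly where well-definedness of $\cp_f$ (from CP) is used when choosing preimages.
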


\begin{proof}
It follows from the definitions of $\cp$ and $\cp_f$ that $\cp_f(f(\K_\out)) = f(\cp(\K_\out)) = f(\K_\inp)$ and $\cp_f(f(\K_\inp)) = f(\cp(\K_\inp)) = f(\K_\out)$.  
Moreover, $\cp_f(\cp_f(f(e))) = \cp_f (f(\cp(e))) = f(\cp(\cp(e)) = f(e) $ for all $e \in \K$. 
Thus $\cp_f$ is a complement function. 
\qed   
\end{proof}

The following theorem shows how to associate a distributed communicating alphabet with $\Ka$. 

\begin{theorem}
\label{Thm-DCA}
     Let 
     $\alpha: \K \rightarrow D$ be an action function such that $\alpha(\K)$ is a finite set.
     Let $\tau: \alpha(\K) \rightarrow {\mathcal M}$ be a function that assigns message types to labelled message events in such a way that $\tau(\alpha(\cp(e))) =  \tau(\alpha(e))$ for all $e \in \K$.     
    Then $$\DAx(\Ka,\alpha,\tau) = ([\alpha(\K_1), \ldots, \alpha(\K_n)],[\emptyset,\alpha(\K_{\out}),\alpha(\K_{\inp})], \tau, \cp_{\alpha})$$ is an $n$-\DA{}. 
\end{theorem}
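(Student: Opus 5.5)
The plan is to check the conditions in the definition of an $n$-\DA{} (Definition~\ref{Def-DCA} in Appendix~\ref{AppSInets}) one at a time, using only the properties DP and CP of $\alpha$ together with Lemma~\ref{Lem-cp=comp}. From that definition I expect the requirements to be as follows. The alphabet $\Sigma = \alpha(\K)$ is finite and non-empty. The local alphabets $\Sigma_1,\ldots,\Sigma_n$ are non-empty, pairwise disjoint and cover $\Sigma$. The triple $[\Sigma_{int},\Sigma_{out},\Sigma_{inp}]$ consists of mutually disjoint sets whose union is $\Sigma$. The map $\tau$ assigns message types to the communication actions. The complement map is a bijection (indeed a complement function) between output and input actions that preserves $\tau$ and sends an output action of component $i$ to an input action of some component $j\neq i$.

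First, $\alpha(\K)$ is finite by hypothesis, and it is non-empty because $\K\neq\emptyset$. By DP, $\{\alpha(\K_i)\mid i\in[n]\}$ is a partition of $\alpha(\K)$, so these sets are non-empty, pairwise disjoint and cover $\alpha(\K)$. Also by DP, $\{\alpha(\K_\out),\alpha(\K_\inp)\}$ is a partition of $\alpha(\K)$. Adding $\emptyset$ as the set of internal actions gives a triple of mutually disjoint sets with union $\alpha(\K)$. Next, $\tau$ is a function from $\alpha(\K)$ to $\mathcal M$ by assumption, and $\cp_\alpha$ is well defined because $\alpha$ satisfies CP (Definition~\ref{Def-CPf}). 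Lemma~\ref{Lem-cp=comp} applies because $\{\alpha(\K_\out),\alpha(\K_\inp)\}$ is a partition. Hence $\cp_\alpha$ is a complement function mapping $\alpha(\K_\out)$ onto $\alpha(\K_\inp)$ and back, and in particular it is a bijection between output and input actions. Type preservation is immediate from the hypothesis on $\tau$: for $a=\alpha(e)$ we have $\tau(\cp_\alpha(a))=\tau(\alpha(\cp(e)))=\tau(\alpha(e))=\tau(a)$.

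The remaining step is the one I expect to be the main obstacle, because it needs care with the fact that labels, not events, are being compared. We must show that $\cp_\alpha$ connects different components, that is, $a\in\alpha(\K_i)$ implies $\cp_\alpha(a)\notin\alpha(\K_i)$. Take $a=\alpha(e)$. Since the sets $\alpha(\K_\ell)$ are disjoint, $e\in\K_i$. Also $\cp_\alpha(a)=\alpha(\overline{e})$, and by DP we have $\overline{e}\in\K_j$ for the unique $j$ with $\alpha(\overline{e})\in\alpha(\K_j)$. Here $j=\proc(\overline{e})\neq\proc(e)=i$ by Definition~\ref{N07msgB0-2}(1). The subtle point is that a different event $e'$ with the same label as $e$ could in principle lie in another process. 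DP rules this out, because the label sets of different processes are disjoint. So the process of an action is well defined, and the argument goes through.

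If the \DA{} definition contains further conditions, for instance that each channel is bilateral (all occurrences of $a$ lie in a single component $i$ and all occurrences of $\cp_\alpha(a)$ lie in a single component $j$), they follow from the same combination of DP and CP.
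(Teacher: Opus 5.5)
Your proposal is correct and follows essentially the same route as the paper. You verify the conditions of Definition~\ref{Def-DCA} one by one using DP, CP and Lemma~\ref{Lem-cp=comp}, and your argument that $\cp_\alpha(\alpha(\K_i))\cap\alpha(\K_i)=\emptyset$ (via $\proc(\overline{e})\neq\proc(e)$ and the disjointness of the sets $\alpha(\K_\ell)$) is simply the direct form of the paper's proof by contradiction. Your final paragraph about possible further conditions is not needed, since the definition contains no bilaterality requirement.
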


\begin{proof} 
We verify the properties of an $n$-\DA{} as listed in Definition~\ref{Def-DCA}. 
\\
(1) 
Since $\alpha$, being an action function, satisfies the disjointness property (DP) and, moreover, $\alpha(\K)$ is finite, we know that $\alpha(\K_1), \ldots, \alpha(\K_n)$ are non-empty, pairwise disjoint and finite sets with $\alpha(\K) = \bigcup_{i \in [n]} \alpha(\K_i)$.  
\\
(2) 
By the same arguments as above, $\alpha(\K_\inp)$ and $\alpha(\K_\out)$ are pairwise disjoint, finite sets
such that $\emptyset \cup \Sigma_\inp \cup \Sigma_\out = \alpha(\K)$.
\\
(3)
$\bigcup_{i \in [n]} \alpha(\K_i) = \alpha(\K) = \emptyset \cup \Sigma_\inp \cup \Sigma_\out$. 
\\
(4) 
Function $\tau$ assigns a type to all elements of $\alpha(\K)$. 
Note that whenever $e,e' \in \K$ are such that $\alpha(e) = \alpha(e')$, then $\alpha(\cp(e)) = \alpha(\cp(e'))$, because $\alpha$ satisfies CP. 
Hence, the condition that $\tau(\alpha(\cp(e))) =  \tau(\alpha(e))$ for all $e \in \K$, does not lead to inconsistencies.  
\\
(5) 
We note that $n \geq 2$.  
\\
(5.1) 
From Lemma~\ref{Lem-cp=comp} and its proof, we have that $\cp_\alpha$ is a complement function with $\cp_{\alpha}(\alpha(\K_\out)) = \alpha(\K_\inp)$ and $\cp_{\alpha}(\alpha(\K_\inp)) = \alpha(\K_\out)$. 
\\
(5.2) That 
$\cp_\alpha(\alpha(\K_i)) \cap \alpha(\K_i) = \emptyset$ for all $i \in [n]$ can be seen as follows. 
Assume, to the contrary, that $\cp_\alpha(\alpha(d)) = \alpha (e) $ with $d,e \in \K_i$ and $i \in [n]$. 
Hence, $\alpha(\cp(d)) = \alpha(e)$. 
By the definition of process function (\cf Definition~\ref{N07msgB0-2}(1)), 
$d \in \K_i$ implies that $\cp(d) \not\in \K_i$. 
Since $\alpha$ satisfies DP this implies that 
$\alpha(\cp(d)) \neq \alpha(e)$, a contradiction.
\\
(5.3) Finally, by the definition of $\cp_{\alpha}$, 
$\tau(\cp_\alpha(\alpha(e)))= \tau(\alpha(\cp(e))) = \tau(\alpha(e))$ for all $e \in \K$. 
\qed   
\end{proof}

Note that in the statement of Theorem~\ref{Thm-DCA} we use the message types from $\mathcal{M}$ also as types for actions.
Moreover, the condition on $\tau$ is a minimal requirement in view of Definition~\ref{Def-DCA}(5.3). 

From Theorem~\ref{Thm-Seq=finpref}, we know that $\Ev(\Ka)$ is a 
finitary, prefix-closed language. 
Since $\alpha$ is an action function and thus maps every occurrence of a message event to a single action symbol, we have that for every word $w \in \K^*$, $\alpha(w)(i) = \alpha(w(i))$ for all $i\in\dom(w)$.  
Consequently, when $\alpha$ is applied to $\Ev(\Ka)$ the resulting language is also finitary and prefix-closed. 

\begin{corollary}\label{Cor-LabPrefC}
 Let $\alpha: \msg(\K) \rightarrow D$ be an action function.
 Then $\alpha(\Ev(\Ka))$ is a finitary, prefix-closed language.
\qed
\end{corollary}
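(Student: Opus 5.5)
The statement concerns $\alpha(\Ev(\Ka))$, where $\alpha$ is applied letterwise to the words of $\Ev(\Ka)$. (The corollary writes the domain of $\alpha$ as $\msg(\K)$, but since it is an action function it acts on $\K$, and I read it that way.) The plan is to derive both properties directly from Theorem~\ref{Thm-Seq=finpref}, together with the elementary fact that letterwise extensions of functions commute with taking prefixes.

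\emph{Finitary.} Every $w \in \Ev(\Ka)$ is in $\K^*$ by Theorem~\ref{Thm-Seq=finpref}. Since $\alpha^\infty$ preserves domains, $\dom(\alpha(w)) = \dom(w)$, so $\alpha(w) \in D^*$ has length $|w|$. Hence $\alpha(\Ev(\Ka)) \subseteq D^*$.

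\emph{Prefix-closed.} Let $u \in \Pref(\alpha(\Ev(\Ka)))$. Then $u \in \pref(\alpha(w))$ for some $w \in \Ev(\Ka)$, and since $\alpha(w)$ is finite, $u = \alpha(w)[k]$ for some $k \le |w|$; for $k = 0$ take $u=\lambda$. The key step is to observe that $\alpha(w)[k] = \alpha(w[k])$. This holds because $(\alpha(w))(i) = \alpha(w(i))$ for all $i \in \dom(w)$, so both words have domain $[k]$ and agree at every position. Now $w[k] \in \pref(w) \subseteq \Pref(\Ev(\Ka)) \subseteq \Ev(\Ka)$, where the last inclusion is the prefix-closedness of $\Ev(\Ka)$ from Theorem~\ref{Thm-Seq=finpref}. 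Therefore $u = \alpha(w[k]) \in \alpha(\Ev(\Ka))$.

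There is no real obstacle. The only point requiring care is the commutation $\alpha(w[k]) = \alpha(w)[k]$, including the empty prefix $\lambda$. That case is covered by $\alpha^\infty(\lambda)=\lambda$, and since $\K \neq \emptyset$, $\Ev(\Ka)$ is non-empty and prefix-closed and therefore contains $\lambda$.
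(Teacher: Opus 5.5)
Your proof is correct and follows the paper's own argument. Both properties are inherited from Theorem~\ref{Thm-Seq=finpref} because $\alpha$ acts letterwise, so it preserves length and commutes with taking prefixes; you spell out the commutation $\alpha(w)[k]=\alpha(w[k])$ that the paper leaves implicit, and your reading of the domain of $\alpha$ as $\K$ rather than $\msg(\K)$ is the intended one.
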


So, we have a method to translate a message log into a finitary, prefix-closed language, \ie a \emph{(global) event log}, over a distributed communicating alphabet. 
In the next section, we investigate the structure of the labelled sequences comprising this event log.
%in relation to the poset $\po_\K = (\K, \preceq)$ defined by $\Ka=(\K,\Att)$.

%%%%%%%%%%%%%%%%%%%%%%%%%%%%%%%%%%%%%%%%%%%%%%%%%%%%%
%%%%%%%%%%%%%%%%%%%%%%%%%%%%%%%%%%%%%%%%%%%%%%%%%%%%%

\section{The action language/event log of a message log}
\label{Sect-ML} 

%%%%%%%%%%%%%%%%%%%%%%%%%%%%%%%%%%%%%%%%%%%%%%%%%%%%%

%
It is convenient to fix here, for this section and the next, an action function, a type function, and an $n$-\DA{}, all as specified in the statement of Theorem~\ref{Thm-DCA}. 

\begin{center}
\begin{fminipage}{4.0in}
\noindent
\emph{$\alpha: \K \rightarrow D$ is an action function such that $\alpha(\K)$ is a finite set;
\\
$\tau: \alpha(\K) \rightarrow {\mathcal M}$ is such that $\tau(\alpha(\cp(e))) =  \tau(\alpha(e))$ for all $e \in \K$; 
\\ 
and $\DAx(\Ka,\alpha,\tau)$ is the $n$-\DA{} defined by $\Ka$, $\alpha$, and $\tau$. 
}
\end{fminipage}
\end{center}

As we will argue next, the causal structure of $\Ka$ as captured by $\po_\K = (\K, \preceq)$ is preserved in $\alpha(\Ev(\Ka))$. 

We start out by describing how in the words of $\alpha(\Ev(\Ka))$, occurrences of input actions (elements from $\alpha(\K_{\inp})$) can be related to occurrences of their complementary output actions (elements from $\alpha(\K_{\out})$). 
Let us recall that the causality described by the event relation $R_\Ka$ of $\Ka$ (\cf Definition~\ref{D07crl}) requires that in every conversation, each input event is preceded by its corresponding output event. 
This leads, for their labelled occurrences in the words of $\alpha(\Ev(\Ka))$, to the following observation.

\begin{lemma}\label{Lem-wordprefp}
    Let $w \in \alpha(\Ev(\Ka))$, $a \in \alpha(\K_\out)$. 
    Then $\#_{a} (w) \geq \#_{\cp_\alpha(a)} (w)$.
\end{lemma}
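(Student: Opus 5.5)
The argument is an injection: we map the occurrences of $\cp_\alpha(a)$ in $w$ injectively into the occurrences of $a$ in $w$, using the message exchange function $\cp$ on the underlying word of message events.

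First, since $w \in \alpha(\Ev(\Ka))$, fix $u \in \Ev(\Ka)$ with $w = \alpha(u)$. By Definition~\ref{Def-EL}, there are a finite conversation $C$ and a linear extension $\lo$ of $\po_C$ with $u = \sigma_\lo$. Each element of $C$ then occurs exactly once in $u$. Moreover, $\alpha$ acts letterwise, so $w(i) = \alpha(u(i))$ for all $i \in \dom(u)$. Hence
\[
\#_a(w) = \card(\{e \in C \mid \alpha(e) = a\}) \quad\text{and}\quad \#_{\cp_\alpha(a)}(w) = \card(\{e \in C \mid \alpha(e) = \cp_\alpha(a)\}).
\]
Write $A = \{e \in C \mid \alpha(e) = a\}$ and $B = \{e \in C \mid \alpha(e) = \cp_\alpha(a)\}$.

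Second, we show that $\cp$ maps $B$ injectively into $A$. Let $e \in B$.
\begin{itemize}[noitemsep,topsep=0pt]
\item $e$ is an input event. Indeed, $a \in \alpha(\K_\out)$ gives $\cp_\alpha(a) \in \alpha(\K_\inp)$ by Lemma~\ref{Lem-cp=comp} and its proof, and by DP the sets $\alpha(\K_\inp)$ and $\alpha(\K_\out)$ are disjoint, so $e \in \K_\inp$.
\item $\cp(e) \in C$. We have $\cp(e) = \overline{e} = (\msg(e),\out)$, and $\cp(e)\, R_\Ka\, e$ by Definition~\ref{D07crl}(2b), since $\Cs$ is the same for both events. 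So $\cp(e) \preceq e$, and because $C$ is downward-closed (Definition~\ref{Def-Con}), $\cp(e) \in C$.
\item $\alpha(\cp(e)) = a$. By Definition~\ref{Def-CPf}, $\alpha(\cp(e)) = \cp_\alpha(\alpha(e)) = \cp_\alpha(\cp_\alpha(a)) = a$, using that $\cp_\alpha$ is a complement function (Lemma~\ref{Lem-cp=comp}).
\end{itemize}
Thus $\cp(B) \subseteq A$. Since $\cp$ is a bijection on $\K$, its restriction to $B$ is injective, so $\card(B) \le \card(A)$. This is the claimed inequality.

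The only step needing care is the second one: one must check that the matching output event lies in the same finite conversation, not merely somewhere in $\K$. This is exactly where downward-closedness and clause (2b) of the event relation are used.

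A stronger statement, the prefix property for every prefix of $w$, follows the same way. It uses Lemma~\ref{rem-pastlo}: the output partner $\cp(e)$ precedes $e$ in $\lo$, so in the same prefix of $u$. It also follows from Theorem~\ref{Thm-Seq=finpref}, since every prefix of $w$ is again in $\alpha(\Ev(\Ka))$. Neither is needed for the lemma as stated.
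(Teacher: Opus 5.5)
Your proof is correct and follows essentially the same route as the paper. Both arguments send each input event of the underlying word $u$ that is labelled $\cp_\alpha(a)$ to its output partner. That partner lies in the same finite conversation by downward closure and is labelled $a$, so the occurrences of $\cp_\alpha(a)$ inject into the occurrences of $a$.

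The paper additionally uses Lemma~\ref{rem-pastlo} to show the partner precedes the input event in $u$, which anticipates the prefix property but is not needed for the count. You instead spell out checks the paper leaves implicit: that the event is an input event, that its partner is labelled $a$, and that the map is injective because $\cp$ is.
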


\begin{proof}
Let $u \in \Ev(\Ka)$ be such that $w = \alpha(u)$. 
If $\cp_\alpha(a)$ does not occur in $w$, \ie $\cp_\alpha(a) \not\in \ab(w)$, then there is nothing to prove. 
So, let now $i \in \dom(w)$ be such that $w(i) = \cp_\alpha(a)$. 
Hence $u(i) = (\m,\inp)$ for some message $\m$. 
Since $\Ka$ is a message log, $(\m,\inp) \in \Ka$ implies that also $(\m,\out) \in \Ka$ (\cf Definition~\ref{D07msgB0}). 
Moreover, $(\m,\out) \preceq (\m,\inp)$ (\cf Definition~\ref{D07crl}). 

Let $C$ be the unique finite conversation in $\K$ such that $u \in \wdspo(\po_C)$ (\cf Definition~\ref{Def-EL} and Lemma~\ref{Lem-uniqueC}). 
Since $\Cs((\m,\inp)) = \Cs((\m,\out))$ and $C$ is downward closed, it follows that $(\m,\out) \in C$ and hence $(\m,\out)\in \ab(u)$. 
Let $j \in [|u|]$ be such that $u(j) = (\m,\out)$. 
Hence $u(j) \in \dc(u(i))$ and, by Lemma~\ref{rem-pastlo}, $j < i$. 
In other words $(\m,\out)$ precedes $(\m,\inp)$ in $u$. 

Moving back to $w = \alpha(u)$, we conclude that every occurrence of $\cp_\alpha(a)$ as the image under $\alpha$ of a unique input event $(\m, \inp)$ has its own, unique, corresponding occurrence of $\alpha((\m,\out)) = a$ that precedes it in $w$, which proves the statement. 
    \qed
\end{proof}

By Corollary~\ref{Cor-LabPrefC}, the language $\alpha(\Ev(\Ka))$ is prefix-closed. 
Hence it follows from Lemma~\ref{Lem-wordprefp} that every word in $\alpha(\Ev(\Ka))$ has the prefix property with respect to $\cp_\alpha |_{\alpha(\K_{\out})}$.
We identify this function as follows. 

\begin{center}
\begin{fminipage}{4.0in}
\noindent
\emph{
$\varphi: \alpha(\K_{\out}) \rightarrow \alpha(\K_{\inp})$, the function defined by 
$\varphi = \cp_\alpha |_{\alpha(\K_{\out})}$, 
\\
is the restriction of the complement function $\cp_\alpha$ to $\alpha(\K_{\out})$. 
}
\end{fminipage}
\end{center}

Note that $\varphi: \alpha(\K_{\out}) \rightarrow \alpha(\K_{\inp})$ is a bijection. 

\begin{theorem}
\label{Thm-Logprefp}  
$\alpha(\Ev(\Ka))$ has the prefix property with respect to $\varphi$. 
\qed
\end{theorem}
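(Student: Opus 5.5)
The theorem follows by combining Corollary~\ref{Cor-LabPrefC} with Lemma~\ref{Lem-wordprefp}. Recall what must be shown: every $w \in \alpha(\Ev(\Ka))$ has the prefix property with respect to $\varphi$. That is, for every $v \in \pref(w)$ and every $a \in \alpha(\K_\out)$, we need $\#_a(v) \geq \#_{\varphi(a)}(v)$.

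First, check that $\varphi$ fits the setting of the prefix property. Its domain $\alpha(\K_\out)$ and codomain $\alpha(\K_\inp)$ are disjoint subsets of $\alpha(\K)$, by the disjointness property (DP) of $\alpha$. Moreover, $\varphi$ is a bijection, since by Lemma~\ref{Lem-cp=comp} $\cp_\alpha$ is a complement function mapping $\alpha(\K_\out)$ onto $\alpha(\K_\inp)$.

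Now fix $w \in \alpha(\Ev(\Ka))$, a prefix $v \in \pref(w)$, and $a \in \alpha(\K_\out)$. By Corollary~\ref{Cor-LabPrefC}, $\alpha(\Ev(\Ka))$ is prefix-closed, so $v \in \alpha(\Ev(\Ka))$. Applying Lemma~\ref{Lem-wordprefp} to $v$ gives $\#_a(v) \geq \#_{\cp_\alpha(a)}(v) = \#_{\varphi(a)}(v)$. Since $v$ and $a$ were arbitrary, $w$ has the prefix property, and since $w$ was arbitrary, so does the whole language.

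The only step needing care is the prefix-closedness of $\alpha(\Ev(\Ka))$. If one does not want to rely on the corollary, it can be verified directly. Take $v \in \pref(\alpha(u))$ with $u \in \Ev(\Ka)$. Because $\alpha$ acts letter by letter, $v = \alpha(u')$ for the prefix $u'$ of $u$ with $|u'| = |v|$. By Theorem~\ref{Thm-Seq=finpref}, $\Ev(\Ka)$ is prefix-closed, so $u' \in \Ev(\Ka)$ and hence $v \in \alpha(\Ev(\Ka))$. With this in hand, the rest of the argument is routine.
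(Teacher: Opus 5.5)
Your proposal is correct and is the paper's argument: the prefix-closedness of $\alpha(\Ev(\Ka))$ (Corollary~\ref{Cor-LabPrefC}) lets you apply the counting inequality of Lemma~\ref{Lem-wordprefp} to every prefix of every word. Your checks that $\varphi$ is a bijection between disjoint sets, and your direct verification of prefix-closedness via Theorem~\ref{Thm-Seq=finpref}, are valid; the paper leaves both implicit.
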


In what follows, we may omit the reference to $\varphi$. 

\begin{example}
\label{Ex-PP}
(Ex.~\ref{Ex-PP1} \ctd)
Assume that the message events $(\m_1,\out)$, $(\m_2,\out)$, $(\m_1,\inp)$, and $(\m_2,\inp)$ all belong to the same case. 
Furthermore, $(\m_1,\out)$ precedes $(\m_2,\out)$ in $P_1$ and 
$(\m_1,\inp)$ precedes $(\m_2,\inp)$ in $P_2$. 
\\
Let $C = \{(\m_1,\out), (\m_1,\inp), (\m_2,\out), (\m_2,\inp)\}$.  
This $C$ is a conversation in $\Ka$ 
with $u=(\m_1,\out)(\m_2,\out)(\m_1,\inp)(\m_2,\inp) \in \wdspo(\po_C)$ and thus $u \in \Ev(\Ka)$, see Figure~\ref{Fig-PP1}(a). 

\begin{figure}
\begin{subfigure}{0.35\textwidth}
\centering
    \includegraphics[width=\linewidth]{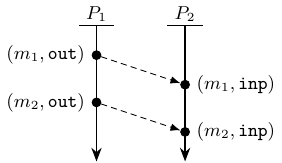}
\caption{Conversation $C$ in $\Ka$.}
\label{Fig-PP1-a}
\end{subfigure}
\begin{subfigure}{0.65\textwidth}
\centering
    \includegraphics[width=\linewidth]{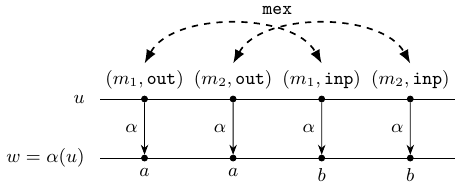}
\caption{$\alpha(u)$ has the prefix property.}
\label{Fig-PP1-b}
\end{subfigure}

\caption{}
\label{Fig-PP1}
\end{figure}

\noindent
Then $\alpha(u)=aabb$ as shown in Figure~\ref{Fig-PP1}(b). 
Clearly, $\#_{b} (v) \leq \#_{a} (v)$ for all prefixes $v$ of $\alpha(u)$.
Hence, $w = \alpha(u)$ has the prefix property. 
\qed
\end{example}

As can be seen in Example~\ref{Ex-PP}, there is a clear relation between the occurrences of complementary message events in an event sequence, which is lacking in the labelled version. 
Assignment functions (introduced in~\cite{DBLP:journals/topnoc/KwantesK22}) explicitly relate the individual occurrences of input and output actions. 

\begin{definition}\label{Def-AF}
Let $w \in \alpha(\K)^*$ be a word with the prefix property. 
An \emph{assignment function} (with respect to $\varphi$) for $w$ 
is an injective function $\theta : \occ(w) \cap (\alpha(\K_\inp) \times \mathbb{N}) \rightarrow \occ(w) \cap (\alpha(\K_\out) \times \mathbb{N})$
with, for all $(b,j) \in \occ(w)$ where $b \in \alpha(\K_{\inp})$: 
$\theta((b,j))=(\cp_\alpha(b),i) \in \occ(w)$ 
for some $i$ such that $\pos_w((\cp_\alpha(b),i))$ $<$ $\pos_w((b,j))$. 
\qed
\end{definition}

Assignment functions capture abstract channel policies as they define output/input relations through pairs of occurrences of complementary output and input actions.

\begin{example}
\label{Ex-ASF} 
(Ex.~\ref{Ex-PP} \ctd) 
Consider $\occ(w)=\{(a,1),(a,2),$ $ (b,1),(b,2)\}$. 
\\
Function $\theta_1$ on $\occ(w)$ is defined by $\theta_1((b,1))=(a,2)$ and $\theta_1((b,2))=(a,1)$. 
This function is an assignment function for $w$. 
As illustrated in Figure~\ref{Fig-ASF}(a), it  corresponds to a last-in-first-out (LIFO) channel policy.  

\begin{figure}
\begin{subfigure}{0.48\textwidth}
\centering
    \includegraphics[width=\linewidth]{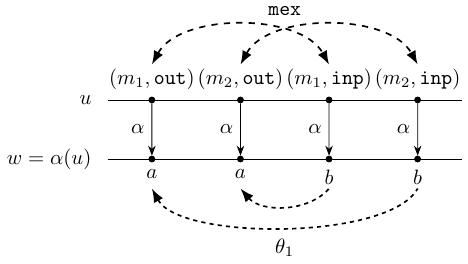}
\caption{$\theta_1$}\label{fig:sub:geschaald-A}
\end{subfigure}
\begin{subfigure}{0.48\textwidth}
\centering
    \includegraphics[width=\linewidth]{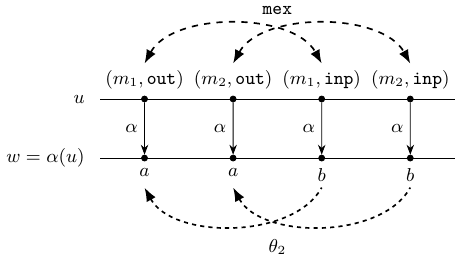}
\caption{$\theta_2 = \theta_{u,w,C}$}\label{fig:sub:geschaald-B}
\end{subfigure}
    \caption{}    %\textcolor{red}{is geschaalde pdf.}
    \label{Fig-ASF}
\end{figure}

\noindent
On the other hand, function $\theta_2$ on $\occ(w)$, defined by $\theta_2((b,1))=(a,1)$ and $\theta_2((b,2))=(a,2)$, specifies a first-in-first-out (FIFO) channel policy, see Figure~\ref{Fig-ASF}(b).
Another feature of $\theta_2$ is that it \emph{preserves messages} in the sense that $(b,1)$ and $\theta_2((b,1))$ have an underlying message event with the same message $\m_1$ and, similarly, $(b,2)$ and $\theta_2((b,2))$ have an underlying message event with the same message, in this case $\m_2$.
\\
Formally, 
$u(\pos_w((b,1))) = (\m_1,\inp)$ and  $u(\pos_w(\theta_2((b,1)))) = (\m_1,\out)$ as well as 
$u(\pos_w((b,2))) = (\m_2,\inp)$ and  $u(\pos_w(\theta_2((b,2)))) = (\m_2,\out)$. 
\qed
\end{example}

We now consider the labelled event sequences defined by a finite conversation in $\Ka$. 
As we prove in the following lemma, 
all labelled event sequences of a single finite conversation have the same set of occurrences. 
Moreover, action functions are injective, when restricted to the event sequences of a single finite conversation.\footnote{Thus, if two distinct event sequences in $\Ev(\Ka)$ yield the same action sequence, they must belong to different finite conversations in $\Ka$; compare this with Lemma~\ref{Lem-uniqueC} by which no event sequence from $\Ev(\Ka$) can belong to more than one conversation.}
Finally, the following lemma also shows that any finite conversation has essentially only one assignment function that preserves messages (as explained in Example~\ref{Ex-ASF}).

\begin{lemma}\label{Lem-MesPres}
Let $u,u' \in \wdspo(\po_C)$ where $C$ is a finite conversation in $\Ka$. 
Let $w =\alpha(u)$ and $w' = \alpha(u')$.
Then 
\\
(1) $\occ(w) = \occ(w')$. 
\\
(2) If $w = w'$, then $u = u'$. 
\\
(3)
If $(b,j) \in \occ(w)$ is such that $u(\pos_w((b,j))) = (\m,\inp)$ for some message $\m$, then $(\m,\out) \in \ab(u)$ and $\pos_u((\m,\inp)) > \pos_u(\m,\out)$.
\\
(4)
If $(b,j) \in \occ(w)$ is such that $u(\pos_w((b,j))) = (\m,\inp)$ for some message $\m$ and $i$ is such that $u(\pos_w((\cp_\alpha(b),i))) = (\m,\out)$, then also 
\\
$u(\pos_{w'}((b,j))) = (\m,\inp)$ and $u(\pos_{w'}((\cp_\alpha(b),i))) = (\m,\out)$. 
\end{lemma}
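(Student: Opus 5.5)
The plan rests on one observation about finite conversations. Since $C$ is finite and $u,u' \in \wdspo(\po_C)$, both $u$ and $u'$ are enumerations of $C$ without repetition. Each element of $C$ occurs exactly once in each word, so $\ab(u) = \ab(u') = C$ and $|u| = |u'| = \card(C)$. In other words, $u'$ is a permutation of $u$.

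For (1), note that for every label $a$ we have $\#_a(\alpha(u)) = \card(\{e \in C \mid \alpha(e) = a\}) = \#_a(\alpha(u'))$. This count depends only on $C$, so $\occ(w) = \occ(w')$ follows directly from the definition of $\occ$.

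For (2), the key step is that $\alpha$ is injective on the events of a single process of $C$ that carry the same label. Take $e \neq e'$ in $C$ with $\alpha(e) = \alpha(e')$. By (DP) they belong to the same process $\proc(e) = \proc(e')$. Both lie in $C$, so they also share a case, and $\tm(e) \neq \tm(e')$. Definition~\ref{D07crl}(2a) then gives $e \preceq e'$ or $e' \preceq e$. Consequently, equally labelled events of $C$ are totally ordered by $\preceq$, and every linear extension must list them in the same order, namely by increasing time stamp. Now suppose $w = w'$ and fix a position $k$ with $w(k) = a$ and $\#_a(w[k]) = j$. Then $u(k)$ is the $j$-th element, in $\preceq$-order, of $\{e \in C \mid \alpha(e) = a\}$, and the same holds for $u'(k)$. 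Hence $u(k) = u'(k)$ for all $k$, so $u = u'$. I expect this step to be the main obstacle: one has to notice that (DP) together with distinct local time stamps forces comparability under $\preceq$. The rest of the proof is bookkeeping.

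For (3), I would repeat the argument of Lemma~\ref{Lem-wordprefp}. From $(\m,\inp) \in C$ and $(\m,\out) \preceq (\m,\inp)$, downward closure of $C$ gives $(\m,\out) \in C = \ab(u)$. Lemma~\ref{rem-pastlo} applied to the prefix of $u$ ending at $(\m,\inp)$ then gives $\pos_u((\m,\out)) < \pos_u((\m,\inp))$.

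For (4), which I read as concerning $u'$ at the positions $\pos_{w'}$, the observation from (2) shows that the occurrence $(b,j)$ corresponds, in any $u'' \in \wdspo(\po_C)$, to the $j$-th event labelled $b$ in the fixed $\preceq$-order of that label class. This event does not depend on $u''$, so $u'(\pos_{w'}((b,j))) = u(\pos_w((b,j))) = (\m,\inp)$. The same reasoning applied to the label class of $\cp_\alpha(b)$ gives $u'(\pos_{w'}((\cp_\alpha(b),i))) = (\m,\out)$.
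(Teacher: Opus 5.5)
Your proof is correct and follows essentially the paper's argument. In both, the key fact is that (DP) together with distinct local time stamps makes each label class $\{e \in C \mid \alpha(e)=a\}$ a chain under $\preceq_C$, so every sequence in $\wdspo(\po_C)$ lists that class in the same order. The only difference is that the paper proves (2) by contradiction at the first position where $u$ and $u'$ differ, whereas you use the chain characterisation directly, which also gives (4) at once; you also rightly read the $u$ in the conclusion of (4) as $u'$.
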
 

\begin{proof}
    (1) Since both $u$ and $u'$ are sequences defined by $\po_C$, $\ab(u) = C = \ab(u')$.     
    Then $\occ(\alpha(u)) = \occ(\alpha(u'))$ follows, as every element of $C$ occurs exactly once in $u$ and in $u'$. 

    (2) Assume $\alpha(u) = \alpha(u')$, but $u \neq u'$. 
    Let $i \in [|u|]$ be such that $u(i) \neq u'(i)$ and $u(j) = u'(j)$ for all $j \in [i-1]$. 
    Note that all message events that occur in $u$, occur only once. 
    Similarly for those in $u'$.
    Hence $u(i),u'(i) \not\in \{u(1), \ldots u(i-1)\}$. 
    Since $\alpha(u(i)) = \alpha(u'(i))$ and $\alpha$ is an action function, it follows from the disjointness property DP (Definition~\ref{Dafngen}) that $\proc(u(i)) = \proc(u'(i))$. 
    Hence $\tm(u(i))\neq \tm(u'(i))$. 
    Thus, by the definition of $\preceq_C$ (\cf Definitions~\ref{D07crl} and~\ref{Def-Con}), either $u(i) \preceq_C u'(i)$ or $u'(i) \preceq_C u(i)$ in $\po_C$ and so by Lemma~\ref{rem-pastlo}, either $u(i)$ is an element in the past of $u'(i)$ or $u'(i)$ is an element in the past of $u(i)$.
    In other words, $u(i)$ or $u'(i)$ is an element of $\{u(1), \ldots , u(i-1)\}$, a contradiction. 

    (3) This follows from the observation that $C$ is downward-closed in $(\K,\preceq)$ and hence every input event in $u$ is preceded by its corresponding output event. 

    (4) Let $(b,j) \in \occ(w)$ be such that $u(\pos_w((b,j))) = (\m,\inp)$ for some message $\m$. 
    Hence by (2), $(b,j) \in \occ(w')$. 
    By the disjointness property DP of $\alpha$ (Definition~\ref{Dafngen}), 
    all message events $x, x' \in C$ such that $\alpha(x) = \alpha(x')$ belong to the same process and are thus linearly ordered by $\preceq_C$. 
    This implies that $u(\pos_{w'}((b,j))) = (\m,\inp)$.  
    Since $w,w' \in \alpha(\Ev(\Ka))$, 
    we have from (3) that there are $i,i'$ such that $u(\pos_w((\cp_\alpha(b),i))) = (\m,\out)$ and $u(\pos_{w'}((\cp_\alpha(b),i'))) = (\m,\out)$.
    Again, by the disjointness property of $\alpha$, $i = i'$. 
 \qed
\end{proof}

We are now ready to show that any finite conversation $C$ defines a single message preserving assignment function which moreover is shared by all message event sequences from $\wdspo(\po_C)$. 
Note that, by Lemma~\ref{Lem-MesPres}(1), $\occ(w) = \occ(w')$ for all $w \in\alpha(\wdspo(\po_C))$ for every finite conversation $C$ in $\Ka$. 
Therefore, we may write $\occ_\alpha(C)$ to denote the set of all occurrences in any word from $\alpha(\wdspo(\po_C))$.

\begin{definition}
\label{Def-mpASF}
    Let $C$ be a finite conversation in $\Ka$.  
    Let $u \in \wdspo(\po_C)$ and  $w = \alpha(u)$. 
    Then $\theta_{u,w,C} : \occ(w) \cap (\alpha(\K_\inp) \times \mathbb {N}) \rightarrow \occ(w) \cap (\alpha(\K_\out) \times \mathbb{N})$ is the \emph{$(u,w)$-message preserving} function for $C$ if, for all $(b,j) \in \occ(w)$ with $b \in \alpha(K_\inp)$, $\theta_{u,w,C}((b,j)) = (\cp_\alpha(b),i)$ provided $u(\pos_w((b,j))) = (\m,\inp)$ and $i$ is such that $u(\pos_w((\cp_\alpha(b),i))) = (\m,\out)$.
\qed   
\end{definition}

Let $C$ be a finite conversation in $\Ka$.
Then, by Lemma~\ref{Lem-MesPres}(4), $\theta_{u,w,C} = \theta_{u',w',C}$ for all $u, u' \in \wdspo(\po_C)$, $w = \alpha(u)$, and $w' = \alpha(u')$. 
We refer to this function as the \emph{message preserving function for} $C$ and denote it by $\theta_C$.

\begin{example}
\label{Ex-asfCon} 
(Ex.~\ref{Ex-ASF} \ctd) 
Consider again $w = \alpha(u) = aabb$ with sequence $u=(\m_1,\out)(\m_2,\out)(\m_1,\inp)(\m_2,\inp) \in \wdspo(\po_C)$, \cf Figure~\ref{Fig-ASF}(b).  
Then function $\theta_{u,w,C}$ is defined by $\theta_{u,w,C}((b,1))=(a,1)$ and $\theta_{u,w,C}((b,2))=(a,2)$. 
Hence $\theta_{u,w,C} = \theta_2$.
\\
Next, let $u' =(\m_1,\out)(\m_1,\inp)(\m_2,\out)(\m_2,\inp)$. 
As can be seen in Figure~\ref{Fig-PP1}(a), also $u'\in \wdspo(\po_C)$ 
With $w' = \alpha(u') = abab$, we then have $\theta_{u',w',C}(b,2)=(a,2)$ and $\theta_{u',w',C}(b,1)=(a,1)$, see Figure~\ref{Fig-ASF3}.  
\\
Hence $\theta_{u',w',C}=\theta_{u,w,C} = \theta_C$.
\qed

\begin{figure}[!htp]
\centering
\begin{tikzpicture}[xscale=1.5]
%% boven
\draw[mexmap] (1,1.6) to[out=90-20,in=90+20, looseness=1.2] (3,1.6) ;
\draw[mexmap] (2,1.6) to[out=90-20,in=90+20, looseness=1.2] (4,1.6) ;
%\node at (2.5,2.5) {\cp} ;
%% diagram
\draw (0.5,1) -- (4.5,1) ;
\draw (0.5,0) -- (4.5,0) ;
\node [left] at (0.4,1) {$u'$} ;
\node [left] at (0.4,0) {$w'=\alpha(u')$} ;
    \foreach \x/\y [count=\xx] in 
          { {m_1,\out}/a,{m_1,\inp}/b,{m_2,\out}/a,{m_2,\inp}/b }
    {
    \node[punt,label=above:{\small $(\x)$}] (A) at (\xx,1) {} ;
    \node[punt,label=below:{$\y$}] (B) at (\xx,0) {} ;    
    \draw[alphamap] (A) edge node[left] {$\alpha$} (B)  ;
    }
%% onder
\draw[thetamap] (1,0-0.5) to[out=-90+30,in=-90-30, looseness=1.2] (2,0-0.5) ;
\draw[thetamap] (3,0-0.5) to[out=-90+30,in=-90-30, looseness=1.2] (4,0-0.5) ;
%\node at (2.5,-1.2) {$\theta_2$} ;
\end{tikzpicture}

\caption{$\theta_{u',w',C}$}
\label{Fig-ASF3}
\end{figure}

\end{example}

\begin{lemma}\label{Lem-AssMess}
    Let $C$ be a finite conversation in $\Ka$ and $w\in\alpha(\wdspo(\po_C))$. 
    Then $\theta_C$ is an assignment function for $w$.
\end{lemma}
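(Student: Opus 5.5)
The plan is to fix $u \in \wdspo(\po_C)$ with $w = \alpha(u)$. By the remark following Definition~\ref{Def-mpASF}, which rests on Lemma~\ref{Lem-MesPres}(4), $\theta_C = \theta_{u,w,C}$, so it suffices to verify the conditions of Definition~\ref{Def-AF} for $\theta_{u,w,C}$ with respect to this $w$. First, $w$ has the prefix property. This follows from Theorem~\ref{Thm-Logprefp}, since $w \in \alpha(\Ev(\Ka))$ because $C$ is a finite conversation (Definition~\ref{Def-EL}).

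Next, I show that $\theta_{u,w,C}$ is a well-defined total function on $\occ(w) \cap (\alpha(\K_\inp)\times\mathbb N)$. Take $(b,j)$ with $b \in \alpha(\K_\inp)$ and let $k = \pos_w((b,j))$. Then $u(k) = (\m,\inp)$ for some $\m$, because $\{\alpha(\K_\inp),\alpha(\K_\out)\}$ is a partition (DP). By Lemma~\ref{Lem-MesPres}(3), $(\m,\out)$ occurs in $u$ at a position $k' < k$. We have $\alpha((\m,\out)) = \alpha(\cp((\m,\inp))) = \cp_\alpha(\alpha((\m,\inp))) = \cp_\alpha(b)$ by Definition~\ref{Def-CPf}. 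Hence $w(k') = \cp_\alpha(b)$, and setting $i = \#_{\cp_\alpha(b)}(w(1)\cdots w(k'))$ gives $(\cp_\alpha(b),i) \in \occ(w)$ with $\pos_w((\cp_\alpha(b),i)) = k'$. This $i$ is unique because each message event occurs exactly once in $u$. So $\theta_{u,w,C}((b,j)) = (\cp_\alpha(b),i)$ lies in $\occ(w) \cap (\alpha(\K_\out)\times\mathbb N)$ and satisfies the required position inequality $k' < k$.

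Finally, injectivity. Suppose $\theta_{u,w,C}((b,j)) = \theta_{u,w,C}((b',j'))$. Then the output occurrences coincide, so they sit at the same position of $u$ and hence carry the same output event $(\m,\out)$. The corresponding input occurrences are then both at the unique position of $(\m,\inp)$ in $u$, so $\pos_w((b,j)) = \pos_w((b',j'))$. Since $\pos_w$ is injective, $(b,j) = (b',j')$.

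I expect the main obstacle to be the bookkeeping that links occurrences in $w$ to positions in $u$. In particular, one must check that the label of the matched output position really is $\cp_\alpha(b)$, which is exactly where CP, through $\cp_\alpha$, is needed.
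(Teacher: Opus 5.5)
Your proposal is correct and follows essentially the same route as the paper. You fix $u$ with $\alpha(u)=w$, use $\theta_C=\theta_{u,w,C}$, get the prefix property from Theorem~\ref{Thm-Logprefp} and the position inequality from Lemma~\ref{Lem-MesPres}(3), and derive injectivity from each message event occurring exactly once in $u$. You are slightly more careful than the paper in two places: you explicitly check via Definition~\ref{Def-CPf} that the matched output position carries the label $\cp_\alpha(b)$, and you conclude injectivity directly from injectivity of $\pos_w$ instead of first deducing $b=c$ from $\cp_\alpha(b)=\cp_\alpha(c)$.
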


\begin{proof}
    Firstly, we note that $w$ has the prefix property (see Theorem~\ref{Thm-Logprefp}). 
\\
    Let $u \in \wdspo(\po_C)$ be such that $\alpha(u) = w$ and consider $(b,j) \in \occ(w)$ with $u(\pos_w((b,i))) = (\m,\inp)$ for some message $\m$. 
    Recall that $\theta_C = \theta_{u,w,C}$.
   Thus, by Definition~\ref{Def-mpASF}, we have $\theta_C((b,j))=(\cp_\alpha(b),i) \in \occ(w)$ for some $i$ such that $u(\pos_w((\cp_\alpha(b),i))) = (\m,\out)$.
    By Lemma~\ref{Lem-MesPres}(3), indeed $(\m,\out) \in \ab(u)$ and $\pos_u((\m,\out)) < \pos_u((\m,\inp))$.
    Since $\pos_u((\m,\inp)) = \pos_w((b,j))$ and $\pos_u((\m,\out)) = \pos_w((\cp_\alpha(b),i))$, we have $\pos_w((\cp_\alpha(b),i)) < \pos_w((b,j))$. 
\\
    So, according to Definition~\ref{Def-AF}, it follows that $\theta_C$ is an assignment function of $w$ once we have established that $\theta_C$ injective. 
    So, suppose that $(b,j), (c,k) \in \occ_\alpha(C)$ with $b,c \in \alpha(K_\inp)$, are such that $\theta_C((b,j)) = \theta_C((c,k))$.
\\    
    We observe that $\theta_C((b,j)) =     
    (\cp_\alpha(b),i) = (\cp_\alpha(c),\ell) = \theta_C((c,k))$, for certain $i,\ell$. 
   Hence, $i = \ell$ and $\cp_\alpha(b) = \cp_\alpha(c)$. 
   The latter implies, by the complement property CP of $\alpha$ (\cf Definition~\ref{Dafngen}), that $b = c$. 
    Let $u(\pos_w(\cp_\alpha(b),i)) = (\m,\out)$.     
    It follows from the definition of $\theta_C$, that $u(\pos_w((b,j))) = (\m,\inp) = u(\pos_w((c,k)))$, 
    Since $u$ is an event sequence in which every message event occurs at most once, we have $j = k$.
    Consequently, $(b,j) = (c,k)$ which shows that $\theta_C$ is an injective function. 
\qed   
\end{proof}

We are now ready to relate the common causality structure $\po_C = (C,\preceq_C)$ underlying all sequences of a finite conversation $C$ in $\Ka$, to structures associated with assignment functions for sequences in $\alpha(\wdspo(C))$. 
To do this, we use the following definition, originally from~\cite{DBLP:journals/topnoc/KwantesK22}, which describes a  relation on the occurrences in a word derived from an assignment function for that word.  

\begin{definition}\label{Def-ASPO}
Let $w \in \alpha(\Ev(\Ka))$ 
and let $\theta$ be an assignment function for $w$. 
Let 
$(a,j),(a',j') \in \occ(w)$
with $a \in \alpha(\K_k)$ and $a' \in \alpha(\K_\ell)$ for some $k,\ell \in [n]$.
Then $(a,j) \leq_{\theta} (a',j')$ if either 
\\
$(1)$ $k=\ell$ and $\pos_w((a,j)) \le \pos_w((a',j'))$ or
\\
$(2)$ $k \neq \ell$, $a' \in \alpha(\K_{\inp})$ and $\theta(a',j')=(a,j)$.
\qed
\end{definition}

Clearly, we have $\Id_{\occ_\alpha(C)} \subseteq \; \leq_\theta$ 
for all assignment functions $\theta$ for $w \in \alpha(\wdspo(\po_C))$ with $C$ a finite conversation. 
Moreover, by Lemma 5 in~\cite{DBLP:journals/topnoc/KwantesK22}, the transitive closure $\le^+_{\theta}$ of $\le_{\theta}$ is a partial order on $\occ_\alpha(C)$. 

As the next lemma shows, 
the relation $\leq_{\theta_C}$ defined by a message preserving function $\theta_C$ for a finite conversation $C$ in $\Ka$, can be seen as a transfer of $R_\Ka$ from $C$ to $\occ_\alpha(C)$. 

\begin{lemma}\label{Lem-AssPO}
    Let $C$ be a finite conversation in $\Ka$.
    Let $u\in\wdspo(\po_C)$ and $w = \alpha(u)$.
    Let $(a,j),(a',j') \in \occ_\alpha(C)$ 
    be such that $(a,j) \neq (a',j')$.
    Then $(a,j) \leq_{\theta_C} (a',j')$ if and only if 
    $e \, R_\Ka \, e'$ where $e = u(\pos_w((a,j)))$ and $e'= u(\pos_w((a',j')))$. 
\end{lemma}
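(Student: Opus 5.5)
We prove both directions by translating between occurrences and message events via the bijection $\pos_w$ and $u$. Since every element of $C$ occurs exactly once in $u$ and $\ab(u) = C$, the map $(a,j) \mapsto u(\pos_w((a,j)))$ is a bijection from $\occ_\alpha(C)$ onto $C$. Moreover $\alpha(u(\pos_w((a,j)))) = a$, so by the disjointness property DP of $\alpha$ we have $a \in \alpha(\K_k)$ if and only if $\proc(e) = k$. The same holds for $a'$ and $e'$. All elements of $C$ belong to the same case, since $C$ is a conversation. Hence condition (1) of Definition~\ref{D07crl} holds automatically for $e,e'$, and only condition (2) needs to be matched against Definition~\ref{Def-ASPO}.

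\emph{Case $k = \ell$, i.e.\ $\proc(e)=\proc(e')$.} Then $e \, R_\Ka \, e'$ can only arise via (2a). Clause (2b) requires $e = \overline{e'}$, but $\proc(e) \neq \proc(\overline{e})$ by Definition~\ref{N07msgB0-2}(1), so it cannot apply here. Thus $e \, R_\Ka \, e'$ if and only if $\tm(e) < \tm(e')$. On the other side, $(a,j) \leq_{\theta_C} (a',j')$ if and only if $\pos_w((a,j)) < \pos_w((a',j'))$, with strict inequality because the occurrences are distinct. This is the same as $\pos_u(e) < \pos_u(e')$. So we must show that for distinct $e,e'$ of the same process within $C$, $e$ precedes $e'$ in $u$ exactly when $\tm(e) < \tm(e')$.
\begin{itemize}
\item If $\tm(e) < \tm(e')$, then $e \preceq_C e'$, so $e \in \dc_{\po_C}(e')$. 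By Lemma~\ref{rem-pastlo} (applicable since $\lo$ is well-founded by Lemma~\ref{Kwfd}, or simply since $C$ is finite), $e$ occurs before $e'$ in $u$.
\item Conversely, distinct events of the same process have distinct times. If $\tm(e') < \tm(e)$, the same argument would place $e'$ before $e$, a contradiction.
\end{itemize}

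\emph{Case $k \neq \ell$.} Now (2a) is impossible, so $e \, R_\Ka \, e'$ if and only if $e = (\m,\out)$ and $e' = (\m,\inp)$ for some $\m$. Likewise, $(a,j) \leq_{\theta_C} (a',j')$ if and only if $a' \in \alpha(\K_\inp)$ and $\theta_C((a',j')) = (a,j)$. We argue the two directions separately.
\begin{itemize}
\item Suppose $e' = (\m,\inp)$ and $e = (\m,\out)$. Then $a' = \alpha(e') \in \alpha(\K_\inp)$, and $a = \alpha(\cp(e')) = \cp_\alpha(a')$. By Definition~\ref{Def-mpASF}, using $\theta_C = \theta_{u,w,C}$, $\theta_C((a',j'))$ is the occurrence $(\cp_\alpha(a'),i)$ whose position in $u$ carries $(\m,\out) = e$. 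That occurrence is $(a,j)$.
\item Conversely, suppose $\theta_C((a',j')) = (a,j)$ with $a' \in \alpha(\K_\inp)$. Then $e' = (\m,\inp)$ for some $\m$, and by the definition of $\theta_{u,w,C}$ the event at $\pos_w((a,j))$ is $(\m,\out)$, i.e.\ $e = (\m,\out)$.
\end{itemize}

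The only delicate point is the same-process case. There we must ensure that the order of positions in $u$ agrees with the time order for events of one process. This needs the downward-closure fact of Lemma~\ref{rem-pastlo} together with injectivity of $\tm$ per process. The remaining steps are bookkeeping between occurrences and events, using DP and the well-definedness of $\theta_C$ guaranteed by Lemma~\ref{Lem-MesPres}(3),(4).
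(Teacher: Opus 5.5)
Your proof is correct and follows essentially the same route as the paper: a case split on whether $e$ and $e'$ belong to the same process. In the same-process case you match clause (2a) of Definition~\ref{D07crl} with clause (1) of Definition~\ref{Def-ASPO}; in the other case you match clause (2b) with clause (2) by unpacking $\theta_C$. Two details improve on the paper's version: you justify the equivalence between order of occurrence in $u$ and the $\tm$-order (via Lemma~\ref{rem-pastlo} and per-process injectivity of $\tm$), which the paper only asserts, and in the $k \neq \ell$ case you orient the events correctly as $e=(\m,\out)$, $e'=(\m,\inp)$, whereas the paper's proof has them swapped.
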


\begin{proof}
By Lemma~\ref{Lem-AssMess}, $\theta_C$ is an assignment function for $w$.
Note, furthermore, that $a = \alpha(e)$ and $a' = \alpha(e')$.
Let $k$ be such that $e \in \K_k$ 
and let $\ell$ be such that $e'\in \K_\ell$. 
\\
We first consider the case that $k = \ell$.
Then, $\proc(e) = \proc(e')$ which implies that $\tm(e)<\tm(e')$ or $\tm(e')<\tm(e)$.
From Definition~\ref{Def-ASPO}, it follows that  
$(a,j) \leq_{\theta_C} (a',j')$ if and only if 
$\pos_w((a,j)) < \pos_w((a',j'))$. 
Note that 
$\pos_w((a,j)) < \pos_w((a',j'))$ if and only if 
$e = u(\pos_w((a,j)))$ occurs in $u$ before $e' = u(\pos_w((a',j')))$ does. 
Finally, $e$ occurs before $e'$ in $u$ if and only if $\tm(e) < \tm(e')$ if and only if $e \; R_\Ka \; e'$. 
\\
Next we assume that $k \neq \ell$. 
Then we know from Definition~\ref{Def-ASPO}, that $(a,j) \leq_{\theta_C} (a',j')$ if and only if 
$a' \in \alpha(\K_{\inp})$ and $\theta_C((a',j'))=(a,j)$. 
By definition,  
$\theta_C((a',j'))=(a,j)$ if and only if 
$e = u(\pos_w((a,j))) = (\m,\inp)$ and 
$e' = u(\pos_w((a',j'))) = (\m,\out)$ for some message $\m$, 
which by Definition~\ref{D07crl} holds 
if and only if $e \;R_\Ka\; e'$.  
\qed
\end{proof}

As an immediate consequence of Lemma~\ref{Lem-AssPO}, we have the result that the partial order $\le^+_{\theta_C}$ is essentially the same as $\preceq_C$.

\begin{theorem}\label{Thm-IsoPO}
    Let $C$ be a finite conversation in $\Ka$ and let $w \in \alpha(\wdspo(\po_C))$.
    Then $(\occ_\alpha(C),\leq^+_{\theta_C})$ and $\po_C = (C,\preceq_C)$ are isomorphic posets. 
    \qed
\end{theorem}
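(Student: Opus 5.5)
We take the obvious candidate isomorphism: fix some $u \in \wdspo(\po_C)$ with $w = \alpha(u)$, and define $h: \occ_\alpha(C) \rightarrow C$ by $h((a,j)) = u(\pos_w((a,j)))$. The proof shows that $h$ is a bijection, that it transports $\leq_{\theta_C}$ onto $R_\Ka$ restricted to $C$ (off the diagonal), and that this transfer passes to the transitive closures.

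First, $h$ is a bijection. The word $u$ is a linear arrangement of $C$ in which every element of $C$ occurs exactly once, since $C$ is finite and $u=\sigma_\lo$ for a linear extension $\lo$ of $\po_C$. Also, $\pos_w$ is a bijection from $\occ(w)$ onto $[|w|] = [|u|]$. Hence $h$ is the composition of $\pos_w$ with $u$ viewed as a bijection $[|u|]\rightarrow C$. By Lemma~\ref{Lem-MesPres}(1), $\occ(w) = \occ_\alpha(C)$, so $h$ is well defined on $\occ_\alpha(C)$.

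Next, order-preservation in both directions. Lemma~\ref{Lem-AssPO} gives, for distinct $x,y \in \occ_\alpha(C)$, that $x \leq_{\theta_C} y$ iff $h(x)\, R_\Ka\, h(y)$. Together with $\Id_{\occ_\alpha(C)} \subseteq \leq_{\theta_C}$ and the bijectivity of $h$, this shows that $h$ maps $\leq_{\theta_C}$ exactly onto $(R_\Ka \cap (C\times C)) \cup \Id_C$. The relation $\leq^+_{\theta_C}$ is the transitive closure, and transitive closure commutes with relabelling along a bijection. So $x \leq^+_{\theta_C} y$ iff $h(x)\, (R_\Ka\cap(C\times C))^*\, h(y)$.

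The main remaining step is to identify $(R_\Ka\cap(C\times C))^*$ with $\preceq_C = R_\Ka^* \cap (C\times C)$. One inclusion is trivial. For the other, take an $R_\Ka$-chain $e = e_0\, R_\Ka\, e_1 \cdots R_\Ka\, e_m = e'$ with $e,e'\in C$. Each $e_i$ satisfies $e_i \preceq e'$, so $e_i \in \dc_{\po_\K}(e')$. Since $C$ is a conversation, it is downward-closed in $(\K,\preceq)$ (Definition~\ref{Def-Con}(2)), so every $e_i$ lies in $C$ and the whole chain lives in $C$. This downward-closure argument is the only point needing care; everything else is bookkeeping.

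Combining these steps, $x \leq^+_{\theta_C} y$ iff $h(x) \preceq_C h(y)$. This also confirms independently that $\leq^+_{\theta_C}$ is a partial order, being the pullback of one. Hence $h$ is a poset isomorphism. The result does not depend on the choice of $u$, by Lemma~\ref{Lem-MesPres}(4) and the fact that $\theta_C$ is shared by all $u$; and it does not depend on the choice of $w$, because $\occ_\alpha(C)$ and $\theta_C$ are common to all words in $\alpha(\wdspo(\po_C))$.
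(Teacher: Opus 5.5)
Your proof is correct and follows the paper's route: the paper uses the same bijection $x \mapsto u(\pos_w(x))$ from $\occ(w)$ onto $C$ and concludes directly from Lemma~\ref{Lem-AssPO}. Your additional steps spell out what the paper's one-line conclusion leaves implicit: that transitive closure commutes with the bijection, and that by downward-closedness of $C$ every $R_\Ka$-chain between elements of $C$ stays inside $C$, so that $(R_\Ka\cap(C\times C))^*$ coincides with $\preceq_C$.
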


\begin{proof}
Let $u\in\wdspo(\po_C)$ be such that $w = \alpha(u)$.
Define $f: \occ(w) \rightarrow \ab(u)$ by $f(x) = u(\pos_w(x))$ for all $x \in \occ(w)$. 
This function is a bijection that relates every occurrence $x \in \occ(w)$ to its underlying message event $u(\pos_w(x))$. 
Thus, by Lemma~\ref{Lem-AssPO}, the partial orders $\le^+_{\theta_C}$ and $\preceq_C$ are isomorphic. 
\qed
\end{proof}

Theorem~\ref{Thm-IsoPO} specifically refers to the message preserving assignment functions associated with finite conversations. 
The following definition (adapted from \cite{DBLP:journals/topnoc/KwantesK22}) introduces assignment functions based on the first-in-first-out policy as exemplified in Example~\ref{Ex-ASF}. 

\begin{definition}
Let $C$ be a finite conversation in $\Ka$. 
The $\tt{fifo}$-function for $C$, $\theta_{C,\tt{fifo}}: \occ_\alpha(C) \cap (\alpha(\K_\inp) \times \mathbb {N}) \rightarrow \occ_\alpha(C) \cap (\alpha(\K_\out) \times \mathbb{N})$, is defined by 
$\theta_{C,\tt{fifo}}(b,j) = (\cp_\alpha(b),j)$ for all $(b,j) \in \occ(w)$ with $b \in \alpha(K_\inp)$. 
\qed
\end{definition}

\begin{example}
\label{Ex-ASF2} 
(Ex.~\ref{Ex-ASF} \ctd) 
We recall that $w = aabb = \alpha(u)$ and that the assignment function $\theta_2$ is defined by $\theta_2((b,1))=(a,1)$ and $\theta_2((b,2))=(a,2)$. 
As illustrated in Figure~\ref{Fig-ASF}(b) and referred to in Example~\ref{Ex-ASF} this assignment function is the $\tt{fifo}$-function for $C$. 
\qed
\end{example}

Because all sequences defined by finite conversations have the prefix property, it is immediately clear that every $\tt{fifo}$-function is an  assignment function. 

\begin{lemma}\label{Lem-AssFIFO}
    Let $C$ be a finite conversation in $\Ka$ and $w\in\alpha(\wdspo(\po_C))$. 
    Then $\theta_{C,\tt{fifo}}$ is an assignment function for $w$.
\qed
\end{lemma}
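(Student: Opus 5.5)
We check the three requirements of Definition~\ref{Def-AF} directly for $\theta_{C,\tt{fifo}}$: the word $w$ has the prefix property, every value $(\cp_\alpha(b),j)$ is an occurrence in $w$ lying strictly to the left of $(b,j)$, and the function is injective.

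The prefix property is given by Theorem~\ref{Thm-Logprefp}, since $w\in\alpha(\wdspo(\po_C))\subseteq\alpha(\Ev(\Ka))$. By Lemma~\ref{Lem-MesPres}(1) the occurrence set does not depend on the choice of word, so $\occ(w)=\occ_\alpha(C)$ and $\theta_{C,\tt{fifo}}$ is well-typed as a function on $\occ(w)$.

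Fix $(b,j)\in\occ(w)$ with $b\in\alpha(\K_\inp)$, and set $a=\cp_\alpha(b)=\varphi^{-1}(b)\in\alpha(\K_\out)$. Let $k=\pos_w((b,j))$ and let $v=w[k]$, the prefix of $w$ ending at this occurrence. Then $\#_b(v)=j$. By the prefix property with respect to $\varphi$ (using $\varphi(a)=b$), we get $\#_a(v)\ge\#_b(v)=j$. Hence $(a,j)\in\occ(w)$ and $(a,j)$ occurs within $v$, that is, $\pos_w((a,j))\le k$. Equality is impossible, because $w(k)=b\neq a$: $b$ lies in $\alpha(\K_\inp)$, $a$ lies in $\alpha(\K_\out)$, and these sets are disjoint by DP. Therefore $\pos_w((a,j))<\pos_w((b,j))$, as Definition~\ref{Def-AF} requires.

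For injectivity, suppose $(\cp_\alpha(b),j)=(\cp_\alpha(b'),j')$. Then $j=j'$ and $\cp_\alpha(b)=\cp_\alpha(b')$. Since $\cp_\alpha$ is a bijection (Lemma~\ref{Lem-cp=comp}), this gives $b=b'$.

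There is no real obstacle here. The only step needing care is the index bookkeeping: the counting inequality must be applied to the prefix ending exactly at the $j$-th occurrence of $b$, and the non-strict bound on positions must then be made strict using the disjointness of input and output labels.
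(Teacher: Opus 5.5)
Your proof is correct and takes the same route as the paper, which states the lemma without proof as an immediate consequence of the prefix property of words in $\alpha(\Ev(\Ka))$ (Theorem~\ref{Thm-Logprefp}). You have simply written out the details the paper leaves implicit: counting on the prefix that ends at the $j$-th occurrence of $b$, using the disjointness from DP to make the position inequality strict, and using the bijectivity of $\cp_\alpha$ for injectivity.
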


As argued in~\cite{DBLP:journals/topnoc/KwantesK22}, $\tt{fifo}$-functions are the least restrictive among the assignment functions. 
Here, we formulate this as follows. 

\begin{lemma}\label{Lem-fifoPO}
    %Let $w \in \alpha(\K)^*$ be a word with the prefix property 
    Let $C$ be a finite conversation in $\Ka$ and $w\in\alpha(\wdspo(\po_C))$.
    Let $\theta$ be an assignment function for $w$. 
    Then $\le^+_{\theta_{C,\fifo}} \; \subseteq \; \le^+_{\theta}$. 
\end{lemma}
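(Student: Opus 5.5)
Since $\le^+_\theta$ is transitive, it suffices to show that every generating pair of $\le_{\theta_{C,\fifo}}$ lies in $\le^+_\theta$. By Definition~\ref{Def-ASPO}, pairs of type (1) (same process, positional order in $w$) do not depend on the assignment function, so they belong to both relations. The only case to check is a pair of type (2). Here $(a,i) \le_{\theta_{C,\fifo}} (b,j)$ with $b \in \alpha(\K_\inp)$, $a = \cp_\alpha(b)$ and $i = j$, and $a$ and $b$ belong to different processes $k \ne \ell$ (by Theorem~\ref{Thm-DCA}(5.2)). We must show $(a,j) \le^+_\theta (b,j)$.

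The key step is a counting argument on the injective assignment function $\theta$. Consider the occurrences $(b,1),\ldots,(b,j)$. Their images under $\theta$ are $j$ distinct occurrences $(a,i_1),\ldots,(a,i_j)$ of $a$, by injectivity and because $\theta$ maps $(b,\cdot)$ to occurrences of $\cp_\alpha(b) = a$. Since these are $j$ distinct positive integers, some $r \in [j]$ satisfies $i_r \ge j$. Then:
\begin{itemize}
\item all occurrences of $a$ belong to process $k$, so $\pos_w((a,j)) \le \pos_w((a,i_r))$, and clause (1) gives $(a,j) \le_\theta (a,i_r)$;
\item clause (2) gives $(a,i_r) = \theta((b,r)) \le_\theta (b,r)$;
\item all occurrences of $b$ belong to process $\ell$ and $r \le j$, so $\pos_w((b,r)) \le \pos_w((b,j))$, and clause (1) gives $(b,r) \le_\theta (b,j)$.
\end{itemize}
Chaining these three steps yields $(a,j) \le^+_\theta (b,j)$, as required.

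The main obstacle is exactly this choice of the index $r$; everything else is bookkeeping. A naive comparison of $\theta((b,j))$ with $(a,j)$ fails, since $\theta((b,j))$ may be an earlier occurrence of $a$. The pigeonhole argument over the first $j$ occurrences of $b$ is what resolves this. One also has to verify that the fifo generating pair is well defined, i.e.\ that $(a,j) \in \occ(w)$. This follows from the prefix property of $w$ (Theorem~\ref{Thm-Logprefp}).
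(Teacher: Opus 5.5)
Your proof is correct and follows the paper's argument. The paper also reduces to the cross-process generating pairs and then splits on whether $\theta((b,j))=(a,i)$ has $i\ge j$ (your case $r=j$) or $i<j$; in the latter case it invokes injectivity of $\theta$ to find $j''<j$ with $\theta((b,j''))=(a,i')$ and $i'\ge j$, which is exactly your pigeonhole step, so your uniform choice of $r$ just merges the two cases into the single chain $(a,j)\le_\theta(a,i_r)\le_\theta(b,r)\le_\theta(b,j)$.
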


\begin{proof}
    We start from Definition~\ref{Def-ASPO}.
    Let $(a,j),(a',j') \in \occ(w)$ with $a \in \alpha(\K_k)$ and $a' \in \alpha(\K_\ell)$ for some $k,\ell \in [n]$,     
    be such that $(a,j) \leq_{\theta_{w,\tt{fifo}}} (a',j')$. 
\\
    If $k=\ell$, then $\pos_w((a,j)) \le \pos_w((a',j'))$ and hence also $(a,j) \leq_{\theta} (a',j')$. 
\\
    Otherwise, $k \neq \ell$ and $a' \in \alpha(\K_{\inp})$ with $\theta_{C,\tt{fifo}}(a',j')=(a,j)$. 
    Thus $j = j'$. 
    Let $\theta(a',j)=(a,i)$ for some $i$ such that $(a,i) \in \occ(w)$. 
    There are two cases to consider. 
\\
    (1) $i \geq j$: 
    In this case, $(a,j) \leq^+_{\theta} (a,i) \leq_{\theta} (a',j)$ and so $(a,j) \leq^+_{\theta} (a',j) = (a',j')$. 
\\
    (2) $i < j$:
    Since $\theta$ is an injective function on $\occ(w)$, there are $j'' < j$ and $i' \geq j$ such that $\theta((a',j'')) = (a,i')$.
    So, $(a,j) \leq^+_\theta (a,i') \leq_\theta(a',j'') \leq^+_\theta (a',j) = (a',j')$. 
\\
    Thus, in all cases, $(a,j) \leq_{\theta_{C,\tt{fifo}}} (a',j')$ implies that $(a,j) \leq^+_\theta (a',j')$.
    Hence, $\le^+_{\theta_{C,\fifo}} \; \subseteq \; \le^+_{\theta}$.     
\qed
\end{proof}

In other words, the partial order on the occurrences of a word $w$ defined by a $\tt{fifo}$-function, is the smallest of all partial orders defined by assignment functions for $w$. 
\\
Consider, \eg the assignment functions $\theta_1$ and $\theta_2$ from the  Examples~\ref{Ex-ASF} and~\ref{Ex-ASF2}. 
Function $\theta_2$ is a $\tt{fifo}$-function. 
Moreover, $(\m_1,\out)$ precedes $(\m_2,\out)$ in $P_1$ and $(\m_1,\inp)$ precedes $(\m_2,\inp)$ in $P_2$.
So, as can be seen in Figure~\ref{Fig-ASF}, the partial order induced by $\theta_2$  on the occurrences of $w$ is less restrictive than the partial order induced by $\theta_1$. 
In particular, $\theta_2$ would allow to change in $w$, the order of the occurrences $(a,2)$ and $(b,2)$. 

This observation then leads to the last part of this section, where we consider the labelled sequences derived from the linear extensions of partial orders $\le^+_{\theta}$ defined by assignment functions $\theta$. 
The following definition is from~\cite{DBLP:journals/topnoc/KwantesK22}. 

\begin{definition}\label{Def-Label}
Let $w \in \alpha(\Ev(\Ka))$ 
and let $\theta$ be an assignment function for $w$.
Then $\lin_\theta(w) = \{v \in \alpha(\K)^* \mid \occ(v) = \occ(w) \mbox{ and } \forall x,y \in \occ(v): x \leq^+_\theta y \Rightarrow \pos_v(x) \leq \pos_v(y)\}$ is the set of \emph{$\theta$-linearisations of} $w$. 
\qed
\end{definition}

Note that Definition~\ref{Def-Label} describes $\theta$-linearisations as sets of words, in contrast to linear extensions of partial orders which are linearly ordered sets. 
As a matter of fact, the $\theta$-linearisations of $w$ are  labelled versions of the sequences in $\wdspo(\occ(w),\leq^+_{\theta_C})$. 

\begin{lemma}\label{Lem-Labellin}
    Let $w \in \alpha(\Ev(\Ka))$ and let $\theta$ be an assignment function for $w$. 
    Then $\lin_\theta(w) = \lab(\wdspo((\occ(w),\leq^+_{\theta})))$,
    where $\lab: \alpha(\K) \times (\mathbb{N} \setminus \{0\}) \rightarrow \alpha(\K)$ is defined by $\lab((a,j)) = a$ for all $a \in \alpha(\K)$ and $j\geq 1$. 
    \qed    
\end{lemma}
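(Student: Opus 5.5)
We prove the two inclusions by relating words $v$ with $\occ(v)=\occ(w)$ to linear orders on $\occ(w)$. Since $w$ is finite, $\occ(w)$ is finite. Hence $\po_\theta=(\occ(w),\leq^+_\theta)$ is a finite poset; it is a partial order by Lemma 5 of~\cite{DBLP:journals/topnoc/KwantesK22}, as recalled above. Every linear extension of $\po_\theta$ is therefore well-founded, and $\wdspo(\po_\theta)$ consists of the words $\sigma_\lo = x_1\cdots x_N$, where $N=\card(\occ(w))$ and $x_1,\ldots,x_N$ is the enumeration of $\occ(w)$ in the order of the linear extension $\lo$.

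For $\supseteq$, take $\lo\in\lex(\po_\theta)$ with $\sigma_\lo = x_1\cdots x_N$, and put $v=\lab(\sigma_\lo)$, so $v(k)=\lab(x_k)$. The key step is to show $\occ(v)=\occ(w)$ and, more precisely, that the occurrence of $v$ at position $k$ is exactly $x_k$, i.e.\ $\pos_v(x_k)=k$. Write $x_k=(a,j)$. All occurrences $(a,1),\ldots,(a,\#_a(w))$ of the same symbol $a$ belong to a single process by DP. By clause (1) of Definition~\ref{Def-ASPO}, they are ordered in $\leq_\theta$ by their positions in $w$, hence by their index: $(a,i)\leq_\theta(a,i')$ whenever $i\le i'$. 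Consequently $\lo$ lists them in the order $(a,1),(a,2),\ldots$. So the $j$-th occurrence of the letter $a$ in $v$ sits at the position of $(a,j)$ in $\sigma_\lo$, which gives $\pos_v((a,j))=k$. It follows that $\occ(v)=\occ(w)$. Moreover, whenever $x\leq^+_\theta y$ we have $x\le_\lo y$, hence $\pos_v(x)\le\pos_v(y)$. Thus $v\in\lin_\theta(w)$.

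For $\subseteq$, take $v\in\lin_\theta(w)$. Define $\sqsubseteq_v$ on $\occ(v)=\occ(w)$ by $x\sqsubseteq_v y$ iff $\pos_v(x)\le\pos_v(y)$. Since $\pos_v$ is a bijection $\occ(v)\to[|v|]$, this is a linear order, and by the definition of $\lin_\theta(w)$ it contains $\leq^+_\theta$. Hence $\lo=(\occ(w),\sqsubseteq_v)\in\lex(\po_\theta)$, and it is well-founded because it is finite. Its sequence is $\sigma_\lo=x_1\cdots x_N$, where $x_k$ is the unique occurrence with $\pos_v(x_k)=k$. For $x_k=(a,j)$ we have $v(k)=a=\lab(x_k)$, so $\lab(\sigma_\lo)=v$.

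The only step needing care is the identification $\pos_v(x_k)=k$ in the first inclusion, i.e.\ that occurrences of the same letter keep their indices in the order they are listed. This is where DP is essential: it ensures that equal labels lie in the same component, so clause (1) of Definition~\ref{Def-ASPO} forces $(a,1),(a,2),\ldots$ into a chain in $\leq_\theta$. Everything else is bookkeeping with $\pos$ and $\occ$.
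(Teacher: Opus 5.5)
Your proof is correct: the paper states this lemma without proof, remarking only that $\theta$-linearisations are labelled versions of the sequences of $(\occ(w),\leq^+_{\theta})$, and your argument is the routine verification behind that remark. The two ingredients are the right ones: the correspondence $v \mapsto (\occ(w),\sqsubseteq_v)$ between $\theta$-linearisations and linear extensions, and the observation that clause~(1) of Definition~\ref{Def-ASPO} chains $(a,1),(a,2),\ldots$ in index order, which gives $\pos_v(x_k)=k$.
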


Combining Definition~\ref{Def-Label} with Lemma~\ref{Lem-fifoPO} leads immediately to the following observation. 

\begin{lemma}\label{Lem-Lin}
     Let $C$ be a finite conversation in $\Ka$ and $w\in\alpha(\wdspo(\po_C))$.
    Let $\theta$ be an assignment function for $w$. 
    Then $\lin_{\theta}(w) \subseteq \lin_{\theta_{C,\tt{fifo}}}(w)$
\qed
\end{lemma}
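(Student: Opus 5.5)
The claim follows directly from Definition~\ref{Def-Label} once we invoke Lemma~\ref{Lem-fifoPO}. Take any $v \in \lin_{\theta}(w)$; we need to show $v \in \lin_{\theta_{C,\fifo}}(w)$. Membership in $\lin_\theta(w)$ gives two facts: $v \in \alpha(\K)^*$ with $\occ(v) = \occ(w)$, and for all $x,y \in \occ(v)$, $x \leq^+_\theta y$ implies $\pos_v(x) \leq \pos_v(y)$. The first fact does not depend on the assignment function, so it transfers to the $\fifo$ side unchanged.

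For the order condition, let $x,y \in \occ(v)$ with $x \leq^+_{\theta_{C,\fifo}} y$. By Lemma~\ref{Lem-AssFIFO}, $\theta_{C,\fifo}$ is an assignment function for $w$, so $\leq^+_{\theta_{C,\fifo}}$ is well defined on $\occ(w)$ and Definition~\ref{Def-Label} applies to it. Lemma~\ref{Lem-fifoPO} gives $\leq^+_{\theta_{C,\fifo}} \subseteq \leq^+_\theta$, hence $x \leq^+_\theta y$. Since $v \in \lin_\theta(w)$, this yields $\pos_v(x) \leq \pos_v(y)$. Both conditions of Definition~\ref{Def-Label} hold for $\theta_{C,\fifo}$, so $v \in \lin_{\theta_{C,\fifo}}(w)$. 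As $v$ was arbitrary, $\lin_{\theta}(w) \subseteq \lin_{\theta_{C,\fifo}}(w)$.

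The only point needing care is whether Lemma~\ref{Lem-fifoPO} applies to an arbitrary assignment function $\theta$ for $w$, including one that is not message preserving. Its statement assumes only that $\theta$ is an assignment function for $w$, which is exactly our hypothesis. Its proof uses injectivity of $\theta$ and the position condition of Definition~\ref{Def-AF}, both of which $\theta$ satisfies.
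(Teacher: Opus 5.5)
Your proof is correct and follows the paper's argument. The paper obtains the lemma immediately by combining Definition~\ref{Def-Label} with Lemma~\ref{Lem-fifoPO}: the smaller order $\leq^+_{\theta_{C,\fifo}} \subseteq \leq^+_\theta$ imposes fewer positional constraints and hence admits at least as many linearisations. Your extra appeal to Lemma~\ref{Lem-AssFIFO}, to ensure that $\lin_{\theta_{C,\fifo}}(w)$ is well defined, is a detail the paper leaves implicit.
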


\begin{example}
\label{Ex-ASF3} 
Let $\Ka' = (\K,\Att')$ be a message log with message events $(\m_1,\out)$, $(\m_2,\out)$, $(\m_1,\inp)$, $(\m_2,\inp) \in \K$. 
Like in Example~\ref{Ex-PP1}, the output events $(\m_1,\out)$ and $(\m_2,\out)$ belong to a process $P_1$ and the input events $(\m_1,\inp)$ and $(\m_2,\inp)$ to a second process $P_2$. 
Here, however, $(\m_1,\out)$ precedes $(\m_2,\out)$ in $P_1$ and 
$(\m_2,\inp)$ precedes $(\m_1,\inp)$ in $P_2$. 
\\
So, $C = \{(\m_1,\out), (\m_1,\inp), (\m_2,\out), (\m_2,\inp)\}$ is a conversation in $\Ka'$, see Figure~\ref{Fig-PP1'}(a).
Consider $u=(\m_1,\out)(\m_2,\out)(\m_2,\inp)(\m_1,\inp) \in \wdspo(\po_C)$. 

\begin{figure}[!ht]
\begin{subfigure}{0.35\textwidth}
    \centering
    \includegraphics[width=\linewidth]{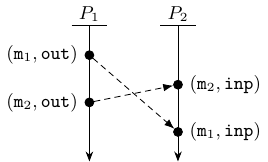}
\caption{Conversation $C$ in $\Ka'$.}
\label{Fig-PP1'-a}
\end{subfigure}
\begin{subfigure}{0.65\textwidth}
\centering
    \includegraphics[width=\linewidth]{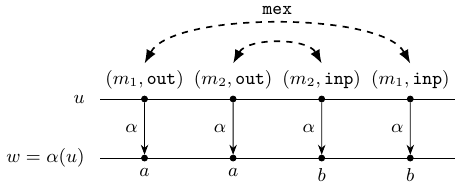}
\caption{$\occ(w) = \{(a,1),(a,2),(b,1),(b,2)\}$. }
\label{Fig-PP1'-b}
\end{subfigure}

\caption{}
\label{Fig-PP1'}
\end{figure}

\noindent
Now, action function $\alpha$ is defined such that  $\alpha(\m_1,\out)=\alpha(\m_2,\out)=a$ and $\alpha(\m_1,\inp)=\alpha(\m_2,\inp)=b$. 
As illustrated in Figure~\ref{Fig-PP1'}(b), we have $\alpha(u)=aabb$ and $\occ(w) = \occ_{\alpha}(C) = \{(a,1),(a,2),(b,1),(b,2)\}$. 
\\
Hence, 
$\theta_C(b,1)=(a,2)$ and $\theta_C(b,2)=(a,1)$ where 
$\theta_C$ is the message preserving assignment function for $C$. 
This function differs from the $\fifo$-function $\theta_{C,\tt{fifo}}$ which is defined by $\theta_{C,\fifo}(b,1)=(a,1)$ and $\theta_{C,\fifo}(b,2)=(a,2)$. \\
The partial orders $\leq^+_{\theta_{C_2}}$ and $\leq^+_{\theta_{C_2,\fifo}}$ are given in Figures~\ref{Fig-PoASF}(b,c). 
There we see that $\leq^+_{\theta_C} \, \subseteq \, \leq^+_{\theta_{C,\fifo}}$. 
\\
In fact, there are no other assignment functions for $w$ than $\theta_C$ and $\theta_{C,\tt{fifo}}$. 

\begin{figure}[!htp]
\begin{subfigure}{0.48\textwidth}
\centering
\begin{tikzpicture}[xscale=1.2]
%% y-coordinaat 4- want tijd omgekeerd
    \node (a2) at (1,4-2) {$(a,2)$} ;
    \node (a1) at (1,4-0.5) {$(a,1)$} ;
    \node (b2) at (3,4-2.8) {$(b,2)$} ;
    \node (b1) at (3,4-1.3) {$(b,1)$} ;
% google-de-google    
\draw[-{Straight Barb[angle'=70]},thick] (a1) edge (a2) edge (b2) edge[dashed] (b1) 
          (a2) edge (b1) edge[dashed] (b2)
          (b1) edge (b2) ;    
%\caption{$\leq^+_{\theta_{C_2}}$}%\label{PConAssFn02c}
\end{tikzpicture}
\caption{$\leq^+_{\theta_{C_2}}$ minus identity}\label{Fig-PoASF-a}
\end{subfigure}
\begin{subfigure}{0.48\textwidth}
\centering
\begin{tikzpicture}[xscale=1.2]
%% y-coordinaat 4- want tijd omgekeerd
    \node (a2) at (1,4-2) {$(a,2)$} ;
    \node (a1) at (1,4-0.5) {$(a,1)$} ;
    \node (b2) at (3,4-2.8) {$(b,2)$} ;
    \node (b1) at (3,4-1.3) {$(b,1)$} ;
% google-de-google    
\draw[-{Straight Barb[angle'=70]},thick] 
          (a1) edge (a2) edge[dashed] (b2) edge (b1) 
          (a2) edge (b2)
          (b1) edge (b2) ;    
%\caption{$\leq^+_{\theta_{C_2,\fifo}}$}%\label{PConAssFn02d}

\end{tikzpicture}
\caption{$\leq^+_{\theta_{C_2,\fifo}}$ minus identity}\label{Fig-PoASF-b}
\end{subfigure}

\caption{$\leq_{\theta_C}$ and $\leq_{\theta_{C,\fifo}}$ are in bold.}
\label{Fig-PoASF}
\end{figure}

\noindent
The set of $\theta_C$-linearisations of $w = aabb$ comprises only $w$, since $\leq^+_{\theta_C}$ is a total order on $\occ(w)$ which inhibits the swapping of occurrences in $w$. 
The set of $\theta_{C,\tt{fifo}}$-linearisations of $w$ consists of $w$ and $w' = abab$.
In this case, we can swap the second $a$ and the first $b$, as $(a,2)$ and $(b,1)$ are not related by $\leq^+_{\theta_{C,\tt{fifo}}}$. 
\qed
\end{example}

Wrapping up Lemmas~\ref{Lem-Lin},~\ref{Lem-Labellin}, and Theorem~\ref{Thm-IsoPO} yields: 

\begin{theorem}
Let $C$ be a finite conversation in $\Ka$ and let $w \in \alpha(\wdspo(\po_C))$. 
Then,  
$\alpha(\wdspo(\po_C)) = \lin_{\theta_C}(w) \subseteq  
\lin_{\theta_{C,\tt{fifo}}}(w)$. 
\qed
\end{theorem}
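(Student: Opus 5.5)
The statement has two parts. The inclusion $\lin_{\theta_C}(w) \subseteq \lin_{\theta_{C,\fifo}}(w)$ follows at once from Lemma~\ref{Lem-Lin}, because $\theta_C$ is an assignment function for $w$ by Lemma~\ref{Lem-AssMess}. The real content is the equality $\alpha(\wdspo(\po_C)) = \lin_{\theta_C}(w)$. I would prove it by transporting linear extensions along the isomorphism of Theorem~\ref{Thm-IsoPO}.

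Fix $u \in \wdspo(\po_C)$ with $\alpha(u) = w$. Let $f:\occ(w)\to C$ be the bijection $f(x) = u(\pos_w(x))$ from the proof of Theorem~\ref{Thm-IsoPO}. Then $f$ is an order isomorphism between $(\occ_\alpha(C),\leq^+_{\theta_C})$ and $\po_C$. Since $C$ is finite, all linear extensions are well-founded. Hence $f$ induces a bijection between $\lex((\occ_\alpha(C),\leq^+_{\theta_C}))$ and $\lex(\po_C)$, and so a bijection $\sigma \mapsto f(\sigma)$ from $\wdspo((\occ_\alpha(C),\leq^+_{\theta_C}))$ onto $\wdspo(\po_C)$.

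By Lemma~\ref{Lem-Labellin}, $\lin_{\theta_C}(w) = \lab(\wdspo((\occ(w),\leq^+_{\theta_C})))$. It therefore suffices to show that $\lab(\sigma) = \alpha(f(\sigma))$ for every such sequence $\sigma$ of occurrences, i.e.\ that $\lab(x) = \alpha(f(x))$ for every $x \in \occ(w)$. This holds by construction: if $x=(a,j)$, then $w(\pos_w(x)) = a$, and $w = \alpha(u)$ gives $\alpha(u(\pos_w(x))) = a$. Hence
\[
\lin_{\theta_C}(w) = \{\lab(\sigma)\} = \{\alpha(f(\sigma))\} = \alpha(\wdspo(\po_C)),
\]
where $\sigma$ ranges over $\wdspo((\occ_\alpha(C),\leq^+_{\theta_C}))$.

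The main obstacle is the independence of the choice of $w$. The definitions of $\lin_{\theta_C}(w)$ and of $f$ depend on $w$ (and on $u$), while $\theta_C$ and $\occ_\alpha(C)$ do not. Lemma~\ref{Lem-MesPres}(1),(4) gives $\occ(w) = \occ(w')$ and $\theta_{u,w,C} = \theta_{u',w',C}$. From this, $\leq_{\theta_C}$ computed in $w'$ agrees with that computed in $w$. Clause (2) of Definition~\ref{Def-ASPO} depends only on $\theta_C$. Clause (1) compares positions of same-process occurrences. For a single process, $\alpha$-images of events are ordered in every $u'$ by their time stamps, and the isomorphism of Lemma~\ref{Lem-AssPO} already identifies these positions through $R_\Ka$. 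So the relation is the same for any choice of $u$. I would verify this carefully, because Lemma~\ref{Lem-AssPO} is stated relative to a fixed $u$. Once it is settled, the argument above applies verbatim for any $w \in \alpha(\wdspo(\po_C))$.
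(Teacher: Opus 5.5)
Your proposal is correct and is essentially the paper's argument. The paper obtains the equality by combining the isomorphism of Theorem~\ref{Thm-IsoPO} with Lemma~\ref{Lem-Labellin}, where your observation $\lab(x)=\alpha(f(x))$ is exactly what makes the labelled linearisations coincide with $\alpha(\wdspo(\po_C))$; it obtains the inclusion from Lemma~\ref{Lem-Lin}, with Lemma~\ref{Lem-AssMess} supplying that $\theta_C$ is an assignment function for $w$. The independence-of-$w$ issue you flag at the end needs no separate verification: $\leq_{\theta_C}$, Lemma~\ref{Lem-AssPO}, Theorem~\ref{Thm-IsoPO} and Lemma~\ref{Lem-Labellin} are all formulated relative to the same arbitrary fixed $w$ (with its unique $u$ by Lemma~\ref{Lem-MesPres}(2)), so your argument already covers every $w\in\alpha(\wdspo(\po_C))$.
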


Recall that the global event log $\alpha(\Ev(\Ka))$ defined by the message log $\Ka$ is 
$\bigcup\{\alpha(\wdspo(\po_C)) \mid C \text{ a finite conversation in } \Ka \;\}$. 
From the above we may thus conclude that, for every finite conversation $C$ in $\Ka$, the $\theta_C$-linearisations of its sequences as defined by the message preserving assignment function $\theta_C$ are all in $\alpha(\Ev(\Ka))$. 
Moreover, the causality between message events as described by $\po_C$ is preserved by the translation from message log $\Ka$ to $\Ev(\Ka)$ and event log $\alpha(\Ev(\Ka))$. 

\begin{corollary}
\label{Cor-Lin}
$\alpha(\Ev(\Ka)) = $
\\
$\bigcup\{lin_{\theta_C}(w) \mid C \text{ a finite conversation in } \Ka \text{ and } w \in \alpha(\wdspo(\po_C)) \; \} \subseteq$
\\
$\bigcup\{lin_{\theta_{C,\tt{fifo}}}(w) \mid C \text{ a finite conversation in } \Ka \text{ and } w \in \alpha(\wdspo(\po_C)) \; \}$.     
\qed
\end{corollary}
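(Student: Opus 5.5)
The plan is to establish the equality and the inclusion of Corollary~\ref{Cor-Lin} separately, working one finite conversation at a time and then taking unions. By Definition~\ref{Def-EL}, $\Ev(\Ka) = \bigcup\{\wdspo(\po_C) \mid C \text{ a finite conversation in } \Ka\}$. Since $\alpha$ acts letter by letter, its image commutes with unions, so $\alpha(\Ev(\Ka)) = \bigcup_C \alpha(\wdspo(\po_C))$, where $C$ ranges over the finite conversations. It therefore suffices to show, for each finite conversation $C$ and each $w \in \alpha(\wdspo(\po_C))$, that
\[ \alpha(\wdspo(\po_C)) = \lin_{\theta_C}(w) \subseteq \lin_{\theta_{C,\fifo}}(w). \]
This is exactly the preceding theorem. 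Given it, the union over all pairs $(C,w)$ with $w \in \alpha(\wdspo(\po_C))$ of $\lin_{\theta_C}(w)$ equals $\bigcup_C \alpha(\wdspo(\po_C))$, because the set $\lin_{\theta_C}(w)$ does not depend on the choice of $w$. One should note that $\alpha(\wdspo(\po_C))$ is nonempty, since $\po_C$ is finite and so has a linear extension. The inclusion then follows by taking unions of the per-conversation inclusions.

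For completeness I will also sketch the per-conversation equality, since it is the real content. First fix $u \in \wdspo(\po_C)$ with $\alpha(u) = w$, and let $f(x) = u(\pos_w(x))$. By Theorem~\ref{Thm-IsoPO}, $f$ is an isomorphism from $(\occ_\alpha(C),\leq^+_{\theta_C})$ onto $\po_C$. Moreover, $\alpha \circ f = \lab$, since the occurrence $(a,j)$ sits on an event labelled $a$. Hence the linear extensions of the two posets correspond under $f$. Each $v' \in \wdspo(\po_C)$ is mapped to $\alpha(v') = \lab(f^{-1}(v'))$, which gives $\alpha(\wdspo(\po_C)) = \lab(\wdspo((\occ(w),\leq^+_{\theta_C})))$. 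By Lemma~\ref{Lem-Labellin}, the right-hand side is $\lin_{\theta_C}(w)$.

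The step needing care is the identity $\alpha\circ f = \lab$ and the occurrence bookkeeping behind it. Let $v'$ be a sequence of occurrences that linearises $\leq^+_{\theta_C}$. Its label word $\lab(v')$ must have exactly $\occ(\lab(v')) = \occ(w)$, with the $j$-th $a$ of $\lab(v')$ equal to the occurrence $(a,j)$. This holds because occurrences of the same label $a$ lie in a single process (by DP), so they are totally ordered by $\leq^+_{\theta_C}$ in index order. The remaining inclusion, $\lin_{\theta_C}(w) \subseteq \lin_{\theta_{C,\fifo}}(w)$, is Lemma~\ref{Lem-Lin} applied with $\theta = \theta_C$. This application is legitimate because Lemma~\ref{Lem-AssMess} guarantees that $\theta_C$ is an assignment function for $w$.
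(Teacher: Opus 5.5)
Your proposal is correct and follows essentially the paper's route. The corollary is the union, over all finite conversations $C$, of the preceding per-conversation theorem $\alpha(\wdspo(\po_C)) = \lin_{\theta_C}(w) \subseteq \lin_{\theta_{C,\fifo}}(w)$, which the paper derives from Theorem~\ref{Thm-IsoPO}, Lemma~\ref{Lem-Labellin} and Lemma~\ref{Lem-Lin} (with Lemma~\ref{Lem-AssMess} making $\theta_C$ an assignment function), exactly as you sketch, and your extra remarks on nonemptiness and on the occurrence bookkeeping behind $\alpha\circ f = \lab$ are correct details the paper leaves implicit.
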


%%%%%%%%%%%%%%%%%%%%%%%%%%%%%%%%%%%%%%%%%%%%%%%%%%%%%
%%%%%%%%%%%%%%%%%%%%%%%%%%%%%%%%%%%%%%%%%%%%%%%%%%%%%

\section{Conclusion}
\label{Sect-Concl}

%%%%%%%%%%%%%%%%%%%%%%%%%%%%%%%%%%%%%%%%%%%%%%%%%%%%%

%
In Sections~\ref{Sect-MLandPO} and~\ref{Sect-DCA}, we have outlined a method to transform a message log of a distributed, cross-organisational process into a global event log over a distributed communicating alphabet. 
In particular, given message log $\Ka$, action function $\alpha$, and message type function $\tau$, we thus obtain a message log $\alpha(\Ev(\Ka))$ and an $n$-dimensional distributed communicating alphabet $\DAx(\Ka,\alpha,\tau)$.
Since $\alpha(\Ev(\Ka))$ has the prefix property, the synthesis procedure proposed in~\cite{DBLP:journals/topnoc/KwantesK22} and outlined here in Appendix~\ref{AppSInets}, can now be used. 
Given a process discovery algorithm $\mathcal{A}$ for a family $\mathbb{L}$ of languages, the following theorem is an immediate consequence of Theorem 5 from~\cite{DBLP:journals/topnoc/KwantesK22}. 

\begin{theorem}\label{Thm-Syn}  
If $\proj_{\alpha(\K_i)}(\alpha(\Ev(\Ka))) \in \mathbb{L}$, for all $i \in [n]$, 
then $\alpha(\Ev(\Ka)) \subseteq \mathcal{L}(\mathcal{A}_{\Gn}(\alpha(\Ev(\Ka)),\DAx(\Ka,\alpha,\tau))$.
\qed
\end{theorem}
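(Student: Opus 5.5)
The plan is to reduce Theorem~\ref{Thm-Syn} to Theorem~5 of~\cite{DBLP:journals/topnoc/KwantesK22}, as recalled in Appendix~\ref{AppSInets}. That theorem states the following. Let $L$ be a finitary, prefix-closed language over an $n$-\DA{} that has the prefix property with respect to the complement map restricted to output actions. If $\mathcal{A}$ discovers, for every $i\in[n]$, an Enterprise net whose language contains $\proj_{\Sigma_i}(L)$ whenever that projection lies in $\mathbb{L}$, then $L$ is included in the language of the synthesised Industry net $\mathcal{A}_{\Gn}(L,\DAx)$. The work is therefore to verify each hypothesis for $L=\alpha(\Ev(\Ka))$ and the alphabet $\DAx(\Ka,\alpha,\tau)$.

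\emph{Step 1: the alphabet.} By Theorem~\ref{Thm-DCA}, $\DAx(\Ka,\alpha,\tau)$ is an $n$-\DA{}. Its local alphabets are $\alpha(\K_i)$, its output and input sets are $\alpha(\K_\out)$ and $\alpha(\K_\inp)$, and its complement function is $\cp_\alpha$. We use here that $\alpha(\K)$ is finite, as fixed for this section.

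\emph{Step 2: the language.} By Corollary~\ref{Cor-LabPrefC}, $\alpha(\Ev(\Ka))$ is a finitary, prefix-closed language over $\alpha(\K)$. By Theorem~\ref{Thm-Logprefp}, it has the prefix property with respect to $\varphi=\cp_\alpha|_{\alpha(\K_\out)}$, which is exactly the channel bijection of the \DA{} from Step~1.

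\emph{Step 3: the local languages.} The hypothesis of the present theorem gives $\proj_{\alpha(\K_i)}(\alpha(\Ev(\Ka)))\in\mathbb{L}$ for all $i\in[n]$. This is precisely the membership condition required by Theorem~5 for the local discovery step.

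With all hypotheses checked, Theorem~5 yields the inclusion directly.

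The main obstacle is bookkeeping rather than mathematics. We must confirm that the form of the prefix property and of the \DA{} used in~\cite{DBLP:journals/topnoc/KwantesK22} matches the ones here: that the complement function in the \DA{} restricted to outputs is the bijection $\varphi$, and that the empty internal-action set $\emptyset$ in $\DAx$ causes no trouble. We should also check that Theorem~5 does not additionally require, for example, that each local projection is nonempty or contains the empty word. Both of these follow from prefix-closedness and $\lambda\in\Ev(\Ka)$, since the empty conversation is a finite conversation. If Theorem~5 is phrased for event logs as finite sets, we would note that its proof only uses the stated language properties and applies verbatim to our possibly infinite finitary log.
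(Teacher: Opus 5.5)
Your proposal is correct and follows the paper's route: the paper also obtains this theorem directly from Theorem~5 of \cite{DBLP:journals/topnoc/KwantesK22}. It uses Theorem~\ref{Thm-DCA} for the $n$-\DA{} and Theorem~\ref{Thm-Logprefp} (with Corollary~\ref{Cor-LabPrefC}) for the prefix property with respect to $\varphi$. One hypothesis of Definition~\ref{Def-discA}(2) that you did not list, $\Ab(\alpha(\Ev(\Ka)))=\bigcup_{i\in[n]}\alpha(\K_i)$, also holds, because every message event $e$ has a finite past and hence lies in the finite conversation $\dc_{\po_\K}(e)$.
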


However, the results in~\cite{DBLP:journals/topnoc/KwantesK22} also involve the $\tt{fifo}$-linearisations of the words from global event log $\alpha(\Ev(\Ka))$ because they are always included in the language of the $\Gn$-net $\mathcal{A}_{\Gn}(\alpha(\Ev(\Ka)),\DAx(\Ka,\alpha,\tau))$. 
In general, the linearisations of a word, defined by an assignment function for that word, reflect the causality of its action occurrences induced by the channel policy captured by the given assignment function. 
As $\tt{fifo}$-functions are oblivious to the identities of the messages exchanged, they are only concerned with occurrences of input/output actions. 
This is consistent with the causality and concurrency in the communicating behaviour of \Gn-nets in terms of the occurrences of input and output transitions (actions in $\alpha(\Ev(\Ka))$) meaning that independent occurrences of such transitions will not be ordered and can be swapped. 

Thus, by Corollary~4 in~\cite{DBLP:journals/topnoc/KwantesK22}, we  have the following corollary of Theorem~\ref{Thm-Syn} and Corollary~\ref{Cor-Lin}. 

\begin{corollary}
\label{Cor-Linfifo}
    If $\proj_{\alpha(\K_i)}(\alpha(\Ev(\Ka))) \in \mathbb{L}$, for all $i \in [n]$, then 
    $\alpha(\Ev(\Ka)) \subseteq$ 
    %\lin_{\tt{FIFO}}(\alpha(\Ev(\Ka))) \subseteq
\\    
    $\bigcup\{lin_{\theta_{C,\tt{fifo}}}(w) \mid C \text{ a finite conversation in } \Ka \text{ and } w \in \alpha(\wdspo(\po_C)) \; \}
    \subseteq 
\\    
    \mathcal{L}(\mathcal{A}_{\Gn}(\alpha(\Ev(\Ka)),\DAx(\Ka,\alpha,\tau))$.
\qed
    \end{corollary}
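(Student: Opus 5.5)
The corollary is a chain of two inclusions, and I would prove them separately.

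For the first inclusion, $\alpha(\Ev(\Ka)) \subseteq \bigcup\{\lin_{\theta_{C,\fifo}}(w) \mid \ldots\}$, I would not use the synthesis machinery at all; it follows from Corollary~\ref{Cor-Lin}. Unfolded, take $v \in \alpha(\Ev(\Ka))$. By Definition~\ref{Def-EL} there are a finite conversation $C$ and $u \in \wdspo(\po_C)$ with $v = \alpha(u)$, so $v \in \alpha(\wdspo(\po_C))$. With $w := v$, the trivial inclusion $v \in \lin_{\theta_C}(v)$ holds: the occurrence set is the same and the positions respect $\leq^+_{\theta_C}$, because $\leq_{\theta_C}$ only relates occurrences in increasing position order, by Definition~\ref{Def-ASPO} and the prefix-position requirement of Definition~\ref{Def-AF}. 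Lemma~\ref{Lem-Lin}, with $\theta = \theta_C$ (an assignment function by Lemma~\ref{Lem-AssMess}), then gives $v \in \lin_{\theta_{C,\fifo}}(v)$. The same reasoning yields $\lin_{\theta_C}(w) \subseteq \lin_{\theta_{C,\fifo}}(w)$ for every $w$, which is exactly the inclusion in Corollary~\ref{Cor-Lin}.

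For the second inclusion, I would invoke Corollary~4 of~\cite{DBLP:journals/topnoc/KwantesK22}, stated in the Appendix. It says that, for an event log $L$ with the prefix property over an $n$-\DA{} whose projections lie in $\mathbb{L}$, every $\fifo$-linearisation of every word of $L$ belongs to $\mathcal{L}(\mathcal{A}_{\Gn}(L,\cdot))$. The hypotheses hold here:
\begin{itemize}
\item $\DAx(\Ka,\alpha,\tau)$ is an $n$-\DA{} by Theorem~\ref{Thm-DCA};
\item $\alpha(\Ev(\Ka))$ is finitary and prefix-closed by Corollary~\ref{Cor-LabPrefC};
\item $\alpha(\Ev(\Ka))$ has the prefix property with respect to $\varphi$ by Theorem~\ref{Thm-Logprefp};
\item the projections lie in $\mathbb{L}$ by assumption, exactly as in Theorem~\ref{Thm-Syn}.
\end{itemize}

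The main obstacle is matching the two notions of $\fifo$-linearisation. In~\cite{DBLP:journals/topnoc/KwantesK22} the $\fifo$ assignment function is defined per word $w$, sending $(b,j)$ to $(\varphi^{-1}(b),j)$, whereas here $\theta_{C,\fifo}$ is defined on $\occ_\alpha(C)$. I would check that for $w \in \alpha(\wdspo(\po_C))$ one has $\occ(w) = \occ_\alpha(C)$, by Lemma~\ref{Lem-MesPres}(1). I would also check that $\cp_\alpha(b) = \varphi^{-1}(b)$ for input labels $b$, since $\cp_\alpha$ is an involution. Together these show the two functions coincide, so $\lin_{\theta_{C,\fifo}}(w)$ is precisely the word-level $\fifo$-linearisation set from~\cite{DBLP:journals/topnoc/KwantesK22}, and that set lies in the language of the synthesised $\Gn$-net. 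Taking the union over all $C$ and $w$ gives the second inclusion and completes the chain.
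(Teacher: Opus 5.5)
Your proposal is correct and follows essentially the paper's route. The paper also gets the first inclusion from Corollary~\ref{Cor-Lin} (via Lemma~\ref{Lem-Lin}) and the second from Corollary~4 of~\cite{DBLP:journals/topnoc/KwantesK22} under the hypotheses of Theorem~\ref{Thm-Syn}; your check that $\theta_{C,\fifo}$ coincides with the word-level $\fifo$-function (via $\occ(w)=\occ_\alpha(C)$ and $\cp_\alpha(b)=\varphi^{-1}(b)$) spells out what the paper leaves implicit. Two small remarks: that external corollary is not restated in the Appendix (only Theorem~5 and completeness are), and for $\mathcal{A}_{\Gn}$ to be defined you also need $\Ab(\alpha(\Ev(\Ka)))=\alpha(\K)$, which holds because every message event has a finite past and so lies in some finite conversation.
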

    
Finally, the inclusions in this corollary are equalities if two conditions are satisfied: 
(1) the $\tt{fifo}$-function is message preserving and (2) $\alpha(\Ev(\Ka))$ is complete with respect to the discovered \Gn-net
(see Appendix~\ref{AppSInets} and Definition~9 in~\cite{DBLP:journals/topnoc/KwantesK22}).   
    
\begin{theorem}\label{T07mta}
If $\proj_{\alpha(\K_i)}(\alpha(\Ev(\Ka))) \in \mathbb{L}$, for all $i \in [n]$, then
$\alpha(\Ev(\Ka)) = \mathcal{L}(\mathcal{A}_{\Gn}(\alpha(\LK_{\Ka}),\DAx(\Ka,\alpha)))$, provided 
$\theta_C = \theta_{C,\tt{fifo}}$ for all finite conversations $C$ in $\Ka$
and 
$\alpha(\Ev(\Ka)))$ is complete with respect to $\mathcal{A}_{\Gn}(\alpha(\Ev(\Ka)),\DAx(\Ka,\alpha,\tau))$.
\qed
\end{theorem}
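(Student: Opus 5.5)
The plan is to chain Corollary~\ref{Cor-Linfifo} with the two provisos, turning both inclusions there into equalities. Write $L = \alpha(\Ev(\Ka))$, $N = \mathcal{A}_{\Gn}(L,\DAx(\Ka,\alpha,\tau))$, and $F = \bigcup\{\lin_{\theta_{C,\fifo}}(w) \mid C \text{ a finite conversation in } \Ka,\ w \in \alpha(\wdspo(\po_C))\}$. The $\Gn$-net and the $n$-\DA{} in the statement are read as $N$ and $\DAx(\Ka,\alpha,\tau)$, so that the statement concerns the same net as Corollary~\ref{Cor-Linfifo}. Under the hypothesis that $\proj_{\alpha(\K_i)}(L) \in \mathbb{L}$ for all $i \in [n]$, Corollary~\ref{Cor-Linfifo} gives $L \subseteq F \subseteq \mathcal{L}(N)$. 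It therefore suffices to prove the reverse inclusions $F \subseteq L$ and $\mathcal{L}(N) \subseteq F$.

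First, I would use the proviso $\theta_C = \theta_{C,\fifo}$ for every finite conversation $C$. Then, for each such $C$ and each $w \in \alpha(\wdspo(\po_C))$, the sets $\lin_{\theta_{C,\fifo}}(w)$ and $\lin_{\theta_C}(w)$ coincide, since $\lin_\theta(w)$ depends on $\theta$ only through $\leq^+_\theta$ (Definition~\ref{Def-Label}). The theorem preceding Corollary~\ref{Cor-Lin} gives $\lin_{\theta_C}(w) = \alpha(\wdspo(\po_C)) \subseteq L$. Taking the union over $C$ and $w$, and using the equality part of Corollary~\ref{Cor-Lin}, yields $F = L$.

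Second, I would use completeness of $L$ with respect to $N$ (Definition~9 in~\cite{DBLP:journals/topnoc/KwantesK22}). The intended result from~\cite{DBLP:journals/topnoc/KwantesK22} is that, for a complete log, the language of the synthesised $\Gn$-net equals the set of $\fifo$-linearisations of the words of the log. Translated into the present notation, this should give $\mathcal{L}(N) \subseteq F$. Combined with the chain above, this gives $L = F = \mathcal{L}(N)$.

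The main obstacle is that second step. The $\fifo$-linearisations in~\cite{DBLP:journals/topnoc/KwantesK22} are defined word-wise, using the $\fifo$ assignment determined by occurrence indices of the word itself. Here they are defined per conversation via $\theta_{C,\fifo}$. I would check that these agree: for $w \in \alpha(\wdspo(\po_C))$ we have $\occ(w) = \occ_\alpha(C)$ by Lemma~\ref{Lem-MesPres}(1), and $\theta_{C,\fifo}$ maps $(b,j)$ to $(\cp_\alpha(b),j)$, which is exactly the word-based $\fifo$ map. So the sets should match, and the cited completeness result then applies verbatim.
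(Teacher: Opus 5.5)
Your proposal is correct and follows the paper's own argument. The paper gives no separate proof; it only remarks that the two provisos turn the two inclusions of Corollary~\ref{Cor-Linfifo} into equalities. Message preservation of the $\fifo$-function gives $F = \alpha(\Ev(\Ka))$ via the theorem preceding Corollary~\ref{Cor-Lin}, and completeness gives $\mathcal{L}(N) \subseteq F$ via the result of~\cite{DBLP:journals/topnoc/KwantesK22}. Your identification of $\theta_{C,\fifo}$ with the word-wise $\fifo$ assignment is a detail the paper leaves implicit. If you prefer not to rely on the citation, $\mathcal{L}(N) \subseteq F$ also follows directly from completeness together with the prefix property of $\mathcal{L}(N)$.
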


%%%%%%%%%%%%%%%%%%%%%%%%%%%%%%%%%%%%%%%%%%%%%%%%%%%%%
%%%%%%%%%%%%%%%%%%%%%%%%%%%%%%%%%%%%%%%%%%%%%%%%%%%%%

\section{Discussion}
\label{Sect-Disc}

%%%%%%%%%%%%%%%%%%%%%%%%%%%%%%%%%%%%%%%%%%%%%%%%%%%%%

%
The main aim of this paper was to show how the communicating behaviour of a cross-organisational process in the form of a global event log over a distributed communicating alphabet, can be inferred from a message log, \ie observations of message exchanges between component business processes.
This should be done in such a way that the derived behaviour faithfully represents the relations between the communications as recorded in the message log. 
This made it possible to apply the results from~\cite{DBLP:journals/topnoc/KwantesK22}. 
So, existing algorithms for the discovery of Petri net models of local business processes that can be leveraged to the discovery of Petri net models of cross-organisational processes from global event logs, can now also be used to discover Industry nets from message logs. 

In~\cite{DBLP:journals/topnoc/KwantesK22} and in this paper as well, a global event log is understood as a prefix-closed, finitary language consisting of sequences of executed actions. 
However, being based on message logs, the sequences in the global event logs in this paper have no occurrences of internal actions. 
Since global event logs are in general infinite finitary languages, message logs are not a priori required to be finite. 
As a consequence, also conversations may be infinite sets. 
Such conversations define infinitary languages, which by Theorem~\ref{Thm-limConv} are fully described by the limits of their finite prefixes. 
In fact, by Lemma~\ref{Lem-prefseq=seqpref}, the words defined by the finite subconversations of any infinite conversation are exactly the finite prefixes of its infinite sequences. 
Hence, our focus was on finite conversations and the prefix-closed, finitary languages defined by them. 
Next, message events had to be interpreted as executions of (output or input) actions of local processes, reflecting the idea that a single action may communicate many messages during a run (conversation) of the system. 
Moreover, our system models (Enterprise nets and Industry nets) are finitely specified and so are the distributed communicating alphabets associated with the global event logs.  
This led to the concept of an action function describing how to consistently label message events with finitely many different action names. 
The types of the message events had to be lifted to the level of action names in the distributed alphabet in such a way that the action labels of complementary message events have the same type. 
In other words, complementary actions should have the same type exactly what is required by the definition of a distributed communicating alphabet. 
Finally, note that -- whereas the process function of the message log is explicitly represented in the distributed communicating alphabet -- the case and time function are used to define its event language via the partial orders defined by the conversations.

The results from~\cite{DBLP:journals/topnoc/KwantesK22} show how the global event log derived from a message log can be used to discover an Industry net the language of which includes the global event log (Theorem~\ref{Thm-Syn}). 
Following~\cite{DBLP:journals/topnoc/KwantesK22}, we then turned to assignment functions to further investigate this inclusion.  
Assignment functions relate occurrences of input and output actions in a sequence and thus induce a partial order on these occurrences. 
They can be seen as channel policies describing the order in which sent messages were received.
In case of global event logs derived from message logs, message preserving assignment functions can be defined on basis of the underlying message events. 
Only these message preserving assignment functions (shared by all sequences that belong to the same conversation) preserve the order of sending at the receiving end. 
By Theorem~\ref{Thm-IsoPO}, the partial order on the action occurrences defined by a message preserving assignment function associated with a conversation, coincides with the partial order of that conversation. 
However, in general, global event logs are more abstract and know only occurrences of actions. 
As a consequence, the causality between input and output actions cannot be based on the identities of the messages exchanged. 
In its least restrictive form, this causality is expressed through the prefix property as captured by $\tt{fifo}$-functions. 
This explains why the linearisations defined by message preserving assignment functions are included in the linearisations defined by first-in-first-out functions (\cf Corollary~\ref{Cor-Linfifo}). 

\medskip
To conclude, as illustrated in this paper, the translation of communication (message exchanges) into global event logs makes it possible to extend established local process mining methods and techniques to the level of collaborating organisations. 
We consider this as a useful step towards overcoming this challenge from~\cite{DBLP:conf/bpm/AalstAM11}. 
It would be interesting to investigate if this technique based on channel policies and (labelled) partial orders could also be useful in other settings.

%%%%%%%%%%%%%%%%%%%%%%%%%%%%%

\subsection*{Acknowledgments} 

%%%%%%%%%%%%%%%%%%%%%%%%%%%%%

We are grateful to Hendrik Jan Hoogeboom for his comments on a preliminary version of this paper and for his invaluable assistance with the figures.

\bibliographystyle{incld/splncs03}
\bibliography{incld/Archive.bib}

\begin{appendix}

\section*{Appendix}

\section{Proof of Lemma~\ref{Lem-seqlimpref}} 
\label{AppProofs} 

\textbf{Lemma~\ref{Lem-seqlimpref}}\;  Let $\po = (A,\leq_A)$ be a well-founded, partial order with $A$ an infinite set. 
    Then $\wdspo(\po) = \llim(\Pref(\wdspo(\po)))$.  
\begin{proof}
The inclusion of $\wdspo(\po)$ in $\llim(\Pref(\wdspo(\po)))$ follows from our earlier observation that $L \subseteq \lim(\pref(L))$ for all infinitary languages $L$. 

To prove the converse, consider a $w \in \llim(\Pref(\wdspo(\po))$. 
Since $\po$ is well-founded and $A$ is infinite, such $w \in A^\omega$ exists. 
Moreover, $w[i] \in \Pref(\wdspo(\po))$ for infinitely many $i$. 
This implies that $w[i] \in \Pref(\wdspo(\po))$ for all $i \in {\mathbb N}$. 
\\
Let $D = \{d \in A \mid d \not\in\ab(w)\}$ comprise all elements of $A$ that do not occur in $w$. 
Let $\po_D = (D,\leq_D)$ be the sub-poset of $\po$ induced by $D$ and let $(D,\preceq_D) \in \lex (D,\leq_D)$ be a linear extension of $\po_D$. 
We set $\preceq \; = \; 
\leq_A \; \cup \; 
\preceq_D \cup \;
\{(w(i),w(j)) \mid 1 \leq i \leq j\} \cup 
\{(w(i),d) \mid i \geq 1 \text{ and } d \in D\}$. 
\\
We first argue that there exist no $a \in D$ and $b \in A \setminus D$ with $a\leq_A b$. 
\hfill 
(*)
\\
This can be seen as follows.
Assume to the contrary that $a \in D$ and $b \in A \setminus D$ are such that $a\leq_A b$. 
Thus $b = w(j)$ for some $j \geq 1$. 
Since $w[j] \in \Pref(\wdspo(\po))$, it follows that $\dc_\po(w(j)) = \{c \in A \mid c \leq_A w(j)\} \subseteq \{w(1), \ldots w(j)\}$ (see Lemma~\ref{rem-pastlo}) which implies that $a = w(i)$
for some $i \leq j$, in contradiction with $a \in D$. 
Thus such $a,b$ do not exist.
\\
Next we prove that $(A,\preceq) \in \lex(\po)$. 
\\
By definition, $\leq_A \; \subseteq \; \preceq$.
In addition, clearly, $a \preceq b$ or $b \preceq a$ for all $a,b \in A$. 
Hence, what remains to be shown is that $\preceq$ is a partial order on $A$. 
\\
Since $\leq_A \; \subseteq \;  \preceq$, we immediately have that $\preceq$ is reflexive.
\\
To prove that $\preceq$ is antisymetric, let $a,b \in A$ be such that $a \preceq b$ and $b \preceq a$. 
Let $a,b \in D$.
Then, as $\preceq_D$ is total, we may assume without loss of generality that $a \preceq_D b$.
There are two possible cases.
If $b \preceq_D a$, then $a = b$ by the antisymmetry of $\preceq_D$. 
And if $b \leq_A a$, then also $b \preceq_D a$ since $\preceq_D$ is a linear extension of the sub-poset of $(A,\leq_A)$ induced by $D$.
Thus, again $a = b$ by the antisymmetry of $\preceq_D$. 
If both $a,b \in A\setminus D$, then $a = w(i)$ and $b = w(j)$ for some $i,j \geq 1$. 
Assume without loss of generality that $i \leq j$. 
Since $w[j]\in \Pref(\wdspo(\po))$, the pair $(w(i),w(j))$ thus belongs to at least one linear extension of $\leq_A$.
On the other hand, if $w_j \leq_A w_i$ the pair $(w(j),w(i))$ belongs to all linear extensions of $\leq_A$. 
It follows that $i = j$ and thus $a = b $. 
Finally, if $a \in D$ and $b \in A \setminus D$, then $a\preceq b$ implies $a \leq_A b$ in contradiction with (*) above. 
So we are done.
\\
To prove the transitivity of $\preceq$, consider $a,b,c \in A$ such that $a \preceq b$ and $b\preceq c$. 
If $a \in A\setminus D$ and $c \in D$, then $a \preceq c$ by definition.
Moreover, by (*) above, the only remaining
relevant cases are that $a,b,c \in A\setminus D$ and either (1) $(a,b) \in \; \leq_A$ and $(b,c) \not\in \; \leq_A$ 
or, conversely, (2) 
$(a,b) \not\in \; \leq_A$ and
$(b,c) \in \; \leq_A$.
In all other situations, transitivity follows from the transitivity of the four relations constituting $\preceq$ and their combinations.
Now, let $i,j,k \geq 1$ be such that $w(i) = a$, $w(j) = b$, and $w(k) = c$. 
$w_j \in \dc(w(k)) \subseteq \{w(1), \ldots , w(k)\}$. 
If (1) holds, then $w(i) \leq_A w(j)$ with $w[j] \in \Pref(\wdspo(\po))$.
Hence, by Lemma~\ref{rem-pastlo}, $w(i) \in \dc(w(j)) \subseteq \{w(1), \ldots , w(j)\}$ and so $i \leq j$. 
Moreover, $w(j) = b \preceq c = w(k)$ and so $j \leq k$.
Hence $i \leq j \leq k$ and $a = w(i) \preceq w(k) = c$ follows.
If (2) holds, a similar reasoning can be used to argue that $a \preceq c$. 
\\
We conclude by showing that $w = \sigma_\lo$ where $\lo = (A,\preceq)$.
From (*) and the definition of $\preceq$, we have that $w(1)$ is the unique minimal element of $\lo$. 
Now, consider any pair $w(i),w(i+1)$ where $i \geq 1$, and assume that $w(i) \preceq a \preceq w(i+1)$ for some $a \in A$. 
From (*) it follows that $a \not\in D$. 
So there exists $j \geq 1$ such that $a = w(j)$. 
Since $\preceq$ is a linear order, it must be that $j = i$ or $j = i+1$ and we are done. 
%
%Now consider the prefix $w[i+1]$ of $w$. 
%
%Since $w[i+1] \in \pref(\wdspo(\po))$, we know that $w(i) \leq_A a \leq_A w(i+1)$ implies that $a = w(i)$ or $a = w(i+1)$. 
%
Hence, for all $i\in\dom(w)$, $w(i)$ is the direct predecessor of $w(i+1)$ in $\lo$. 
We conclude that $w = \sigma_\lo$ and thus $w \in \wdspo(\po)$. 
\qed
\end{proof}

\section{Petri nets} 
\label{AppPnets} 

A Petri net is a triple $N=(P,T,F)$, where 
$P$ is a finite set of \emph{places},
$T$ is a finite, non-empty set\footnote{$T$ is an alphabet.} of \emph{transitions}, 
$P \cap T=\emptyset$, 
and $F \subseteq (P \times T) \cup (T \times P)$ is a set of \emph{arcs}.

Let $N=(P,T,F)$ be a Petri net.
A \emph{marking of} $N$ is a function $\mu: P \rightarrow \mathbb{N}$ that assigns $\mu(p)$  \emph{tokens} to each place $p$ of $N$. 
Let $t \in T$ and let $\mu$ be a marking of $N$. 
Then $t$ is \emph{enabled at} $\mu$ if $\mu(p) > 0$ 
for all $p \in P$ such that $(p,t) \in F$. 
If $t$ is enabled at $\mu$, it may \emph{occur}, and thus lead to a new marking $\mu'$ of $N$. 
Here $\mu'$ is defined by 
$\mu'(p) = \mu(p)-1$ if $(p,t) \in F$ and $(t,p) \not\in F$; 
$\mu'(p) = \mu(p)+1$ if $(p,t) \not\in F$ and $(t,p) \in F$; and 
$\mu'(p) = \mu(p)$ otherwise. 
We write $\mu \xrightarrow{t}_N \mu'$ to denote that the occurrence of $t$ at $\mu$ in $N$ leads to $\mu'$. 
Furthermore, $\mu \xrightarrow{\lambda}_N \mu$ for every marking $\mu$ of $N$; 
and in case $v \in T^*$ with $v = wt$ for some $w \in T^*$ and some $t \in T$, then 
$\mu \xrightarrow{v}_N \mu'$ if there exists a marking $\mu''$ of $N$ such that $\mu \xrightarrow{w}_N \mu''$ and $\mu'' \xrightarrow{t}_N \mu'$.
A place $p \in P$ is  a \emph{source place} of $N$ if 
there is no $t \in T$ such that $(t,p) \in F$. 
The marking $\mu_0$ of $N$ such that, for all $p \in P$, $\mu_0(p)=1$ if $p$ is a source place and $\mu_0(p)=0$ otherwise, is the \emph{default initial marking} of $N$.
The \emph{language of} $N$ is 
$\mathcal{L}(N)$ = 
$\{w \in T^* \mid \exists \mu \mbox{ such that }\mu_0 \xrightarrow{w}_N \mu  \}$. 
Note that $\mathcal{L}(N)$ is a finitary, prefix-closed language.

\section{Enterprise nets and Industry nets} 
\label{AppInets} 

\emph{Enterprise nets} (introduced in~\cite{DBLP:conf/apn/KwantesK18}) are Petri nets with typed output and input transitions that can be used for asynchronous communication with other Enterprise nets.   

\begin{definition} 
An \emph{Enterprise net} (or \Ln-net, for short) is a tuple $\LN = (P,$ $\langle T_{int},T_{inp},T_{out} \rangle,$ $ F,$ $M)$ 
such that
$T_{int},T_{inp}$, and $T_{out}$ are pairwise disjoint sets;
$T_{int}$ is the set of \emph{internal transitions} of $\LN$,
$T_{inp}$ its set of \emph{input transitions},
and $T_{out}$ its set of \emph{output transitions};
furthermore, the \emph{underlying Petri net of $\LN$}, $und(\LN) = (P,T_{int} \cup T_{inp} \cup T_{out},F)$,
is a Petri net with exactly one source place;
finally $M$ is a function $M: T_{inp} \cup T_{out} \rightarrow \mathcal T$ with $\mathcal T$ the set of \emph{message types} of $\LN$.
\qed
\end{definition}

\noindent
$\mathcal{L}(und(\LN))$,  also denoted as $\mathcal{L}(\LN)$, is the \emph{language of} the Enterprise net $\LN$. 

\medskip
Communications take place between pairs of Enterprise nets via (initially unmarked) intermediate places between an output transition of one Enterprise net and an input transition (of the same type) of another Enterprise net. 
When the output transition occurs, it puts a token in the intermediate place and every occurrence of the input transition takes a token from that intermediate place. 
Thus, to compose enterprise nets, the output and input transitions of the component nets have to be matched. 

Let $\V=\{\LN_i \mid i \in [n] \}$ with $\LN_i=(P_i, \langle T_{i,int},T_{i,inp},T_{i,out} \rangle,F_i,M_i)$ for each $i \in [n]$, 
be a set of $n \geq 1$ \Ln-nets that have no shared elements (places and transitions).
A bijection $\varphi: \bigcup_{i\in [n]} T_{i,out} \rightarrow \bigcup_{j \in [n]} T_{j,inp}$ is a \emph{matching over} $\V$ if, whenever $t \in T_{i,out}$ and $\varphi(t)\in T_{j,inp}$, for some $i,j$, then $i\neq j$ and $M_i(t)=M_j(\varphi(t))$.
Note that it may be that no matching over $V$ exists. 
If a matching exist, $\V$ is said to be \emph{composable}.
In that case, an \emph{Industry net} can be composed from the Enterprise nets from $\V$ by connecting matching output and input transitions via new places. 

\begin{definition}
\label{Def-Inet}
Let $n \geq 2$. Let $\V=\{\LN_i: i \in [n] \}$ be a composable set of \Ln-nets
with $\LN_i=(P_i,\langle T_{i,int},T_{i,inp},T_{i,out}\rangle,F_i,M_i)$ and $T_i= T_{i,int} \cup T_{i,inp} \cup  T_{i,out}$ for all $i \in [n]$.
Let $\varphi$ be a matching over $\V$.
\\
Then $P(\V,\varphi)=\{[t,\varphi(t)] \mid t \in T_{i,out}, i \in [n]\}$ is the set of \emph{channel places} of $V$ and $\varphi$, and $F(\V,\varphi)=$ $\{(t,[t,\varphi(t)]) \mid $ $ t \in T_{i,out}, i \in [n] \} \cup \{([t,\varphi(t)],\varphi(t)) \mid $ $ t \in T_{i,out}, i \in [n] \}$ is the set of \emph{channel arcs} of $V$ and $\varphi$.
The sets $P(\V,\varphi)$, $F(\V,\varphi)$, and $P_i$, $T_i$, $F_i$, where $i \in [n]$, are all pairwise disjoint.
\\
The \emph{Industry net (or \Gn-net, for short) over $(\V,\varphi)$} is the Petri net $\GN(\V,\varphi) = (P,T,F)$ with
$P=\bigcup_{i \in [n]} P_i \cup P(\V,\varphi)$,
$T=\bigcup_{i \in [n]} T_i$, and
$F=\bigcup_{i \in [n]} F_i \cup F(\V,\varphi)$.
\qed
\end{definition}

Since channel places are not source places and consequently not marked by
the default initial marking, it follows that in every word $w$ from the language of an Industry net $\GN(\V,\varphi)$, the number of occurrences of an input transition is at most the number of occurrences of its corresponding output transition. 
Moreover, this also hold for all prefixes of $w$: 
each occurrence of an input transition $t$ must be preceded by a unique occurrence of the output transition $\varphi^{-1}(t)$. 
In other words, $\mathcal{L}(\GN(\V,\varphi))$ has the prefix property with respect to $\varphi$.

\section{Synthesising Industry nets from event logs} 
\label{AppSInets} 

In~\cite{DBLP:journals/topnoc/KwantesK22}, process discovery algorithms are described in the following way. 

\begin{definition}\label{Def-A}  
Let $\mathbb{L}$ be a family of languages.
A \emph{process discovery algorithm} $\mathcal{A}$ for $\mathbb{L}$
is an algorithm that computes for all $L \in \mathbb{L}$,
a Petri net $\mathcal{A} (L)=(P,T,F)$ with a single source place
 such that $T=\Ab(L)$ and $L \subseteq \mathcal{L}(\mathcal{A}(L))$.
\qed
\end{definition}

Note that since Petri nets have a set of transitions that is an alphabet (a non-empty and finite set), the above implies that any family of languages for which a process discovery algorithm exists, comprises only languages that can be defined over a finite alphabet. 
Furthermore, to facilitate their conversion to Enterprise nets, the Petri nets discovered are required here to have a single source place, as do Enterprise nets.

To leverage a process discovery algorithm for the discovery of Petri net models for individual component processes, to discovering a distributed process model for communicating processes, additional information is needed. 
In \cite{DBLP:journals/topnoc/KwantesK22}, next to the input of the global event log (in the form of a language), also the number of participating individual processes (leading to Enterprise nets) is given together with their communication channels described by pairs of matching input and output actions. 
All this is captured in the form of an alphabet together with a specification of the role of the actions (output, input, internal) and their distribution over the component processes. 

\begin{definition}
\label{Def-DCA}
Let $n \geq 1$. 
An ($n$\emph{-dimensional})  \emph{distributed communicating alphabet} ($n$-\DA{} or $\DA$, for short) is a tuple
$$\DAx=([\Sigma_1, \ldots, \Sigma_n], [ \Sigma_{int},\Sigma_{inp},\Sigma_{out}] ,mt, cp )$$ 
such that 
\\
(1) $\Sigma_1, \ldots, \Sigma_n$ are pairwise disjoint alphabets;
\\
(2) $\Sigma_{int}$, $\Sigma_{inp}$, $\Sigma_{out}$ are pairwise disjoint, finite sets
of \emph{internal actions}, \emph{input actions}, and \emph{output actions}, respectively;
\\
(3) 
%$\Sigma_\DAx = $
$\bigcup_{i \in [n]} \Sigma_i = \Sigma_{int} \cup \Sigma_{inp} \cup \Sigma_{out}$; 
%is the \emph{underlying alphabet of} $\DAx$;
\\
(4) $mt: \Sigma_{inp} \cup \Sigma_{out} \rightarrow \mathcal{T}$ where $\mathcal T$ is a set of \emph{message types}; 
and 
\\
(5) 
$cp : \Sigma_{inp} \cup \Sigma_{out} \rightarrow \Sigma_{inp} \cup \Sigma_{out}$; if $n = 1$, then $cp$ is not defined 
\footnote{If $n=1$, then $cp = \emptyset$ and usually omitted from the specification of $\DAx$.}; otherwise $cp$ is a complement function with 
\\
\phantom{space}
(5.1) $cp(\Sigma_{inp}) = \Sigma_{out}$ and 
$cp(\Sigma_{out}) = \Sigma_{inp}$; 
\\
\phantom{space}
(5.2) $cp(\Sigma_i) \cap \Sigma_i = \emptyset$ for all $i \in [n]$; 
and 
\\
\phantom{space}
(5.3) 
$mt(cp(a))=mt(a)$ for all $a \in \Sigma_{inp} \cup \Sigma_{out}$.
\qed 
\end{definition}

This definition reflects that local processes have disjoint sets of actions and that each action is either an internal, an input, or an output action of the process it belongs to. 
Moreover, with each input action corresponds a unique output action of another process and there is exactly one such corresponding input action for every output action. 
Finally, all input and output actions have a message type and complementary input/output actions have the same type. 
Actually, the distribution of transitions of an \Gn-net $\GN(\V,\varphi)=(P,T,F)$  over its component \Ln-nets in $\V$ with their role as internal, input or output actions and the message types from the \Ln-nets, forms a \DA{} with underlying alphabet $T$ and complement function $\varphi \cup \varphi^{-1}$.  

In~\cite{DBLP:journals/topnoc/KwantesK22}, it is demonstrated how from a process discovery algorithm $\mathcal{A}$ for language family $\mathbb{L}$, an algorithm can be constructed to synthesise an Industry net from an event log and a distributed communicating alphabet. 

\begin{definition}
\label{Def-discA}
(1) 
The \emph{\Ln-net discovery algorithm derived from $\mathcal{A}$}, is the algorithm $\mathcal{A}_{\Ln}$ that computes, for any pair $(L,\DAx)$
such that
$L \in \mathbb{L}$, $\mathcal{A}(L) = (P,T,F)$, and 
$\DAx = (\langle T \rangle,\langle T_{int},T_{inp},T_{out} \rangle ,mt)$ is a 1-\DA{},  
% with $T = \Ab(L)$ as its underlying alphabet, DAT IS NIET OK, T is al gegeven in het net.
%
the \Ln-net $\mathcal{A}_{\Ln}(L,\DAx) =  (P,\langle T_{int}, $ $T_{inp}, T_{out} \rangle,F,mt)$.

(2) The \emph{\Gn-net discovery algorithm derived from $\mathcal{A}$}, is the algorithm  $\mathcal{A}_{\Gn}$
that computes, for any pair $(L,\DAx)$
such that $\DAx$ is an $n$-\DA{} where $n \geq 2$, 
$\DAx =(\langle \Sigma_1, \ldots, \Sigma_n \rangle,  \langle \Sigma_{int},\Sigma_{inp},\Sigma_{out} \rangle ,mt, cp )$, $\Ab(L)= \bigcup_{i \in [n]} \Sigma_i$, and $\proj_{\Sigma_i}(L) \in \mathbb{L}$ for all $i \in [n]$, 
the \Gn-net $\mathcal{A}_{\Gn}(L,\DAx)=\GN(\V,\varphi)$ with
$\V=\{\mathcal{A}_{\Ln}(\proj_{\Sigma_i}(L),\DAx_i) \mid i \in [n] \}$ and $\varphi(a)=cp(a)$ for all $a \in \Sigma_{out}$.
\qed
\end{definition}

Assume a process discovery algorithm $\mathcal{A}$ for $\mathbb{L}$
as given. 
Let $n \geq 2$ and let $(L,\DAx)$ be such that 
$\DAx =(\langle \Sigma_1, \ldots, \Sigma_n \rangle,  \langle \Sigma_{int},\Sigma_{inp},\Sigma_{out} \rangle ,mt, cp )$, $\Ab(L)= \bigcup_{i \in [n]} \Sigma_i$, and $\proj_{\Sigma_i}(L) \in \mathbb{L}$ for all $i \in [n]$. 
By Definition~\ref{Def-A} and Definition~\ref{Def-discA}, 
$\proj_{\Sigma_i}(L) \subseteq \mathcal{L}(\mathcal{A}_{\Ln}(\proj_{\Sigma_i}(L),\DAx_i)) = \mathcal{L}(\mathcal{A}_{\Ln}(\proj_{\Sigma_i}(L))$. 
Furthermore, by Theorem 5 from~\cite{DBLP:journals/topnoc/KwantesK22}, we have $L \subseteq \mathcal{L}(\mathcal{A}_{\Gn}(L,\DAx))$ in case $L$ has the prefix property with respect to $\varphi$.  
In~\cite{DBLP:journals/topnoc/KwantesK22}, it is also shown how this result can be strengthened to an equality when $L$ and $\mathcal{A}_{\Gn}(L,\DAx))$ satisfy the additional property that every word from the language of $\mathcal{A}_{\Gn}(L,\DAx))$ is represented in $L$ by a word $w' \in L$, \ie $\proj_{\Sigma_i}(w) = \proj_{\Sigma_i}(w')$ for all $i \in [n]$ - in which case  $L$ is said to be \emph{complete} with respect to $\mathcal{A}_{\Gn}(L,\DAx)$. 

\end{appendix}

\end{document}